\def\ol{\overline}
\newtheorem{theorem}{Theorem}
\newtheorem{prop}{Proposition}
\newtheorem{lemma}{Lemma}
\newtheorem{corollary}{Corollary}
\begin{document}

\title{Signal Recovery From Product of Two Vandermonde Matrices}

\author{D\v{z}evdan Kapetanovi\'{c} \\ Huawei Technologies Sweden AB \\ Lund Research Center, Sweden}



\maketitle

\begin{abstract}
In this work, we present some new results for compressed sensing and phase retrieval. For compressed sensing, it is shown that if the unknown $n$-dimensional vector can be expressed as a linear combination of $s$ unknown Vandermonde vectors (with Fourier vectors as a special case) and the measurement matrix is a Vandermonde matrix, exact recovery of the vector with $2s$ measurements and $O(\mathrm{poly}(s))$ complexity is possible when $n \geq 2s$. This result can be seen as a version of Prony's (or matrix pencil) method applied to the dual (time) domain. Based on this result, a new class of measurement matrices is presented from which it is possible to recover $s$-sparse $n$-dimensional vectors for $n \geq 2s$ with as few as $2s$ measurements and with a recovery algorithm of $O(\mathrm{poly}(s))$ complexity. In the second part of the work, these results are extended to the challenging problem of phase retrieval (compressed sensing with magnitude-only observations). The most significant discovery in this direction is that if the unknown $n$-dimensional vector is composed of $s$ frequencies with at least one being non-harmonic, $n \geq 4s - 1$ and we take at least $8s-3$ Fourier measurements, there are, remarkably, only \textit{two} possible vectors producing the observed measurement values and they are easily obtainable from each other. The two vectors can be found by an algorithm with only $O(\mathrm{poly}(s))$ complexity. This result is interesting when put in relation to recent work by Xu which shows that solving Fourier phase retrieval with $O(\mathrm{poly}(n))$ complexity proves that P = NP holds for the well known P vs NP problem. An immediate application of the new result is construction of a measurement matrix from which it is possible to recover almost all $s$-sparse $n$-dimensional signals (up to a global phase) from $O(s)$ magnitude-only measurements and $O(\mathrm{poly}(s))$ recovery complexity when $n \geq 4s - 1$. As a consequence, for almost all $s$-sparse $n$-dimensional signals, the proposed matrix construction significantly improves upon a result by Xu on finding a measurement matrix construction from which one can (provably) recover $s$-sparse signals with $O(s\log(n/s))$ phase-less measurements and 1-norm minimization. 
\end{abstract}

\begin{IEEEkeywords}
Compressed sensing, sparse signal estimation, phase retrieval, sparse phase retrieval. 
\end{IEEEkeywords}

\section{Introduction}
\label{sec:intro}
We start by considering the conventional compressed sensing problem, which amounts to solving for $\bm x$ the following linear system 
\begin{align}
\label{eq:cs_main_eq}
\bm y = \bm A \bm x, \quad \bm x \in \mathbb{C}_{\leq s}^n,
\end{align} 
where $\mathbb{C}^n_{\leq s}$ is the space of $n$-dimensional complex vectors with at most $s$ elements being non-zero and $\bm A$ is a known $m\times n$ measurement matrix ($m$ is the number of \textit{measurements}). This problem has a rich history. Seminal works \cite{candes, donoho, candes2} and references therein show that, remarkably, there exist many matrices $\bm A$ such that $\bm x$ can be recovered with far less measurements than $n$. We will refer to \textit{exact recovery} of $\bm x$ as recovery of \textit{almost all} $\bm x$ (i.e. except for possibly some $\bm x$ that all belong to a set of measure zero in $\mathbb{C}_{\leq s}^n$); another common phrase is recovery \textit{with probability 1}. By \textit{probabilistic recovery} we refer to recovery of $\bm x$ at some probability strictly smaller than 1. The obvious 0-norm minimization that searches across all subsets of $s$ columns from $\bm A$ and inverts them to obtain a candidate solution is of exponential complexity. Fortunately, the work in \cite{candes_lp} shows that there exists a recovery algorithm for (\ref{eq:cs_main_eq}) with $O(\mathrm{poly}(n))$ complexity, where $O(.)$ stands for big O notation. More specifically, \cite{candes_lp} shows that for a given $\bm y = \bm A\bm x$, the unique solution to the convex 1-norm optimization problem
\begin{align}
\nonumber &\min_{\bm z} \|\bm z\|_1 \\
\nonumber &\textnormal{subject to} \\
&\bm y = \bm A \bm z \label{eq:cs_1_norm}
\end{align}
is $\bm z = \bm x$ if $\bm A$ satisfies the restricted isometry property (RIP). Some random constructions of $\bm A$ with $m = O(s\log(n/s))$ number of rows satisfy the RIP property with high probability; one example is a partial discrete Fourier transform (DFT) matrix with the $m$ rows chosen randomly from a $n\times n$ DFT matrix \cite{candes}. These existence and low recovery complexity results is what made compressed sensing a breakthrough technique with vast practical applications \cite{cs_practical}. Since random constructions of $\bm A$ only result in probabilistic recovery of $\bm x$, deterministic constructions that guarantee exact recovery are also of high interest. It is well known \cite{mpm, twc2023} that a $m \times n$ (partial) discrete Fourier transform (DFT) matrix, obtained from the first $m$ rows of a $n\times n$ DFT matrix, can be used as a measurement matrix with the matrix pencil method (MPM) as a recovery algorithm; in this case, it is enough with $m = 2s$ to exactly recover $\bm x$ with a recovery complexity of $O(\mathrm{poly}(s))$. However, the issue with this type of partial DFT matrix is that it is ill-conditioned, resulting in weak performance of MPM in presence of additive noise. Moreover, the low mutual coherence of the matrix \cite{partial_dft_coherence} also results in unstable performance in presence of additive noise in the constraint of (\ref{eq:cs_1_norm}). 
The work in \cite{bourgain} shows that for large $n$, there are deterministic constructions of $\bm A$ with $m = o(s^2)$ rows ($o(.)$ stands for Little o notation) and $n \geq m$ satisfying the RIP property which guarantees a low coherence and stability under noise.

The problem in (\ref{eq:cs_main_eq}) assumes full knowledge of the complex valued vector $\bm y$. A related problem is the recovery of $\bm x \in \mathbb{C}^n_s$ from \textit{phase-less} measurements
\begin{align}
\label{eq:phase_ret_main_eq}
\bm y = |\bm A\bm x|^2, \quad \bm x \in \mathbb{C}_s^n,
\end{align} 
where $|\bm x|^2$ is the vector of squared magnitudes of the elements in $\bm x$. The problem in (\ref{eq:phase_ret_main_eq}) is known as \textit{sparse phase retrieval} in the literature \cite{sparse_phase_ret1}-\cite{sparse_phase_ret3}. The recovery is allowed to be correct up to a common phase, i.e., obtaining $e^{i\alpha}\bm x$ for any $\alpha$ is considered as exact recovery. Clearly, due to the absence of phase information, this problem is significantly more challenging than the compressed sensing problem in (\ref{eq:cs_main_eq}) (which measurements we refer to as \textit{phase-aware} measurements); as such, more measurements are expected for exact recovery. Injectivity studies have shown that if $\bm A$ is a \textit{generic} matrix \cite{generic_phase_ret} then $m = 4s - 4$ measurements suffice for guaranteeing a unique solution. It is conjectured that the necessary lower bound is of the form $4s - O(s)$. The injectivity studies do not propose an efficient algorithm to check whether a certain $\bm A$ is generic nor do they provide an efficient recovery algorithm. A reformulation of (\ref{eq:cs_1_norm}) to (\ref{eq:phase_ret_main_eq}) results in the recovery algorithm
\begin{align}
\nonumber &\min_{\bm z} \|\bm z\|_1 \\
\nonumber&\textnormal{subject to} \\
\bm y &= |\bm A\bm x|^2 = |\bm A\bm z|^2. 
\label{eq:pr_1_norm}
\end{align}
Unfortunately, since (\ref{eq:pr_1_norm}) is a highly non-convex problem, it is no longer possible to prove that the unique solution to (\ref{eq:pr_1_norm}) is $\bm z = e^{i\alpha}\bm x$ for some $\alpha$. The work in \cite{xu_pr_1_norm} shows that if $\bm A$ is a circularly symmetric complex Gaussian matrix and $m = O(s\log(n/s))$, then solving (\ref{eq:pr_1_norm}) results in $\bm z = e^{i\alpha}\bm x$ with probability at least $1 - 2e^{-c_0 m}$ where $c_0$ is a constant; hence, only probabilistic recovery is guaranteed. When $\bm A$ is a discrete Fourier matrix, an iterative projection scheme is presented in \cite{moravec} that solves (\ref{eq:pr_1_norm}) with high probability (i.e., exact recovery cannot be guaranteed) when $m = O(s^2\log(n/s^2))$. However, the scheme is highly complex since it uses many iterations (200 000) for guaranteeing satisfactory performance and it assumes knowledge of $\|\bm x\|_1$ which is an impractical assumption. The complexity of solving (\ref{eq:pr_1_norm}) for a general $\bm A$ is at least $O(\mathrm{poly}(n))$ (probably significantly higher and possibly exponential), since the problem is at least as difficult as (\ref{eq:cs_1_norm}).

For general $\bm A$, PhaseLiftOff \cite{phase_lift_off} is a state-of-art algorithm that provides high probabilistic recovery with $O(s\log(n/s))$ measurements when elements in $\bm A$ are chosen from a complex Gaussian distribution. In order to avoid random constructions, PhaseCode \cite{phase_code} utilizes the theory of sparse-graph codes to provide a deterministic construction of $\bm A$ that is able to recover with high probability a large fraction of elements in $\bm x$ with $O(s)$ measurements. As a particular operating point, PhaseCode is able to recover a fraction of $1 - 10^{-7}$ non-zero components of $\bm x$ with $14s$ measurements and recovery probability $1 - O(1/m)$. Requiring a fraction of recovered elements close to 1 and a recovery probability close to 1 (i.e., almost exact recovery) drives the number of measurements to infinity; hence, as for PhaseLiftOff, exact recovery eludes PhaseCode as well.

In many practical instances of compressed sensing and phase retrieval, $\bm A$ is constrained to be a Fourier matrix \cite{candes, spm_2023, moravec} which in practice corresponds to an apparatus that takes frequency measurements of an underlying signal. The contributions in this work on compressed sensing show that if $\bm x$ is $s$-sparse in a Vandermonde basis (with Fourier basis as special case) and $n \geq 2s$, then it can be recovered exactly from $O(s)$ arbitrary Vandermonde measurements and $O(\mathrm{poly}(s))$ recovery complexity. In a way, this result is complementary to MPM; while MPM recovers signals that are sparse in the frequency domain with $O(\mathrm{poly}(s))$ recovery complexity, the method derived herein recovers signals that are sparse in the time domain with $O(\mathrm{poly}(s))$ recovery complexity. In case of Fourier phase retrieval, as shown in \cite{phase_ret_fourier}, the reason for why no exact $O(\mathrm{poly}(n))$ complexity algorithms have been proposed so far is because in doing so one solves the well known P vs NP problem, proving that P = NP holds. Interestingly, the contributions in this work on Fourier phase retrieval show that if the signal is $s$-sparse in a non-DFT basis with $n \geq 4s - 1$, and we take $O(s)$ measurements, then Fourier phase retrieval can be solved with $O(\mathrm{poly}(s))$ recovery complexity. 
\subsection{Contributions}
\label{subsec:contr}
Let $\bm V(\bm z)$ denote a Vandermonde matrix of $n$ rows parameterized by the elements from vector $\bm z = [z_0, \ldots, z_{m-1}]$ 
\begin{align}
\label{eq:vandermonde_def}
\bm V(\bm z) = \left[\begin{array}{ccc}1 & \ldots & 1 \\ z_0 & \ldots & z_{m-1} \\ \vdots & \ddots & \vdots \\ z_0^{n-1} & \ldots & z_{m-1}^{n-1} \end{array} \right].
\end{align}
Let $\mathbb{T} = \{z ; z \in \mathbb{C}, |z| = 1\}$ be the set of all points on the complex unit circle and $(.)^T$ the matrix transpose operator. The term \textit{almost all} $\bm x \in \mathbb{A}$, for some set $\mathbb{A}$, is referring to a set that equals $\mathbb{A}$ up to a null set (set of measure zero in $\mathbb{A}$). When $z_k \in \mathbb{T}$, $0 \leq k \leq m-1$, the Vandermonde matrix becomes a Fourier matrix, which we for clarity henceforth denote as $\bm F(\bm z)$ (with DFT matrix as a special case when $z_k$ are $n$th roots of unity and $m = n$).  For Vandermonde and Fourier measurement matrices, we have the following main results
\begin{enumerate}[label=\arabic*.]
\item {\label{it:phase_ret} When $\bm A = \bm V(\bm z)^T$, $\bm x$ is of the form $\bm x = \bm V(\bm \theta)\bm g$ for unknown $\bm \theta \in \mathbb{C}^s, \bm g \in \mathbb{C}^s$, and $n \geq 2s$, we have exact recovery of $\bm \theta$ and $\bm g$ from $\bm y = \bm A \bm x$ with $O(\mathrm{poly}(s))$ recovery complexity and
\begin{enumerate}[label=\alph*.]
\item When the elements in $\bm z$ are $n$th roots of unity, $m = 2s$ measurements suffice.
\item For almost all $\bm z \in \mathbb{C}^m$, $m \geq 3s$ suffices.
\end{enumerate}
}
\item {\label{it:phaseless_ret} Assume $\bm A = \bm F(\bm z)^T$, $\bm x$ is of the form $\bm x = \bm F(\bm \theta)\bm g$, with $\bm \theta \in \mathbb{T}^s$ and $\bm g \in \mathbb{C}^s$ being unknown, and $n \geq 4s - 1$. Let $\bm y = |\bm A \bm x|^2$. 
\begin{enumerate}[label=\alph*.]
\item{\label{it:dft} If the elements in $\bm \theta$ are $n$th roots of unity ($\bm F(\bm \theta)$ equals a subset of $s$ columns from a $n \times n$ DFT matrix), we have exact recovery of $\bm \theta$ and $|\bm g|$ (up to a global real positive scalar) from $\bm y$ and $\bm A$ with $O(\mathrm{poly}(s))$ recovery complexity. There are $2^{s-1}$ different $\bm g$ (up to a global phase), all having the same value of $|\bm g|$, producing the same $\bm y$.}

\item{\label{it:non-dft} If not all elements in $\bm \theta$ are $n$th roots of unity and for almost all $\bm g$
\begin{enumerate}[label=\roman*.]
\item When the elements in $\bm z$ are $n$th roots of unity and $m = 4s-1$, we have exact recovery of $\bm \theta$ and $|\bm g|$ (up to a global real positive scalar) from $\bm y$ and $\bm A$ with $O(\mathrm{poly}(s))$ recovery complexity. There are $2^{s-1}$ different $\bm g$ (up to a global phase), all having the same value of $|\bm g|$, producing the same $\bm y$. 
\item For almost all $\bm z \in \mathbb{T}^m$ with $m \geq 8s-3$, we have exact recovery of $\bm \theta$ and $|\bm g|$ (up to a global real positive scalar) from $\bm y$ and $\bm A$ with $O(\mathrm{poly}(s))$ recovery complexity. There are only two different $\bm g$ (up to a global phase), both having the same value of $|\bm g|$, that produce the same $\bm y$; moreover, the two solutions are easily obtainable from each other. 
\end{enumerate}
}
\end{enumerate}
}
\end{enumerate}
The result in \ref{it:phase_ret} is a complementary result to MPM since it recovers signals that are sparse in the (time) domain that is dual to the (frequency) domain of MPM. Moreover, it shows that recovery is guaranteed even with arbitrary samples, in contrast to MPM that only allows consecutive integer samples. 

For phase-less measurements, the result in \ref{it:phaseless_ret} states that $\bm \theta$ and $c|\bm g|$, for unknown real positive scalar $c$, can be found with $O(s)$ measurements and $O(\mathrm{poly}(s))$ recovery complexity. Thus, in scenarios where only the frequency components $\bm \theta$ and their \textit{relative} magnitudes are sought for, the needed number of measurements and the recovery complexity are essentially lowest possible. Moreover, the different vectors $\bm g$ that produce the same measurements $\bm y$ all have same magnitudes of elements on same positions, meaning that only the phases of the elements differ.

On the other hand, when it comes to recovering $\bm g$ from phase-less measurements, \ref{it:phaseless_ret}\ref{it:dft} is in line with the work in \cite{phase_ret_fourier}, which showed that solving Fourier phase retrieval with $O(\mathrm{poly}(n))$ recovery complexity for general $\bm x$ proves that P = NP holds for the well known P vs NP problem. As seen, \ref{it:phaseless_ret}\ref{it:dft} states that recovering an oversampled signal composed of $s$ harmonic frequencies with an oversampling factor of at least 4 ($n \geq 4s - 1$) still results in an exponential recovery complexity in $s$ since there are $2^{s-1}$ different $\bm g$ giving rise to the measurements in $\bm y$. Therefore, this work does \textit{not} prove that P = NP, since it doesn't provide polynomial recovery complexity for \textit{all} $\bm x$. Interestingly, as noted above, the $2^{s-1}$ different vectors $\bm g$ only differ in the phases of their elements, as if there are only two choices for the phase of each element (once the phase of one element is fixed).

Conversely, \ref{it:phaseless_ret}\ref{it:non-dft} states that when $\bm x$ is composed of $s$ frequencies with at least one being non-harmonic, where $n \geq 4s - 1$ and the number of measurements $m \geq 8s - 3$, there are only \textit{two} possible solutions (one of them is of course the true vector $\bm x$) and they can be found with $O(\mathrm{poly}(s))$ complexity. Obviously, to find which of the two possible solutions is $\bm x$, one can use additional side information as in \cite{phase_ret_fourier} by assuming knowledge of one element from $\bm x$. Thus, Fourier phase retrieval \textit{is} solvable with $O(s)$ measurements and $O(\mathrm{poly}(s))$ recovery complexity for $n$-dimensional signals containing (at most) $\lfloor n/4 \rfloor$ frequencies with at least one frequency being non-harmonic.

Removing the assumption of $\bm x$ being spanned by a Vandermonde or Fourier basis, we look at a measurement matrix construction $\bm A$ for recovering $\bm x \in \mathbb{C}_{\leq s}^n$. Let $;$ denote the vertical concatenation of matrices. In this regard, the main contributions of this work can be summarized as follows
\begin{enumerate}[label=\arabic*.]
\setcounter{enumi}{2}
\item {\label{it:vander_product} When $\bm A = \bm V(\bm z)^T \bm V(\bm \theta)$, $\bm z \in \mathbb{C}^m$, $\bm \theta \in \mathbb{C}^n$, and $n \geq 2s$, we get exact recovery of $\bm x \in \mathbb{C}_{\leq s}^n$ from $\bm y = \bm A \bm x$ with $O(\mathrm{poly}(s))$ recovery complexity and
\begin{enumerate}[label=\alph*.]
\item When elements in $\bm z$ are chosen as $n$th roots of unity and for almost all $\bm \theta$, $m = 2s$ measurements suffice.
\item For almost all $\bm z$ and $\bm \theta$, $m \geq 3s$ suffices.
\end{enumerate}
}
\end{enumerate}
\begin{enumerate}[label=\arabic*.]
\setcounter{enumi}{3}
\item {Let $\bm y = |\bm A\bm x|^2$, where $\bm x \in \mathbb{C}_{\leq s}^n$, $\bm z \in \mathbb{T}^m$, $\bm \theta \in \mathbb{T}^n$ and $n \geq 4s-1$. \label{it:fourier_prod}
\begin{enumerate}[label=\alph*.]
\item {\label{it:no_rand_measure} Assume $\bm A = \bm F(\bm z)^T \bm F(\bm \theta)$. 
\begin{enumerate}[label=\roman*.]
\item When $m = 4s - 1$, the elements in $\bm z$ are $n$th roots of unity and for almost all $\bm \theta$, we have exact recovery of the support of $\bm x \in \mathbb{C}_{\leq s}^n$ and $|\bm x|$ (up to a global real positive scalar) from $\bm y$ and $\bm A$. There are $2^{s-1}$ different $\bm x$ (all with the same value of $|\bm x|$) producing the same $\bm y$.
\item For almost all $\bm z$ and $\bm \theta$ with $m \geq 8s - 3$, we have exact recovery of the support of $\bm x \in \mathbb{C}_{\leq s}^n$ and $|\bm x|$ (up to a global real positive scalar) from $\bm y$ and $\bm A$. There are only two different $\bm x$ (with the same value of $|\bm x|$) producing the same $\bm y$ and they are easily obtainable from each other.
\end{enumerate}
}

\item {\label{it:rand_measure} Let $\bm A = [\bm F(\bm z)^T \bm F(\bm \theta); \bm a^T]$ where $\bm a \in \mathbb{C}^n$ is some random vector. Then, 
\begin{enumerate}[label=\roman*.]
\item When the elements in $\bm z$ are $n$th roots of unity and for almost all $\bm \theta$, $m = 4s-1$ measurements suffice to exactly recover $\bm x \in \mathbb{C}_{\leq s}^n$ from $\bm y$ and $\bm A$ with a recovery complexity of $O(\mathrm{poly}(s)2^s)$.
\item For almost all $\bm z$ and $\bm \theta$, $m \geq 8s - 3$ suffices to exactly recover $\bm x \in \mathbb{C}_{\leq s}^n$ from $\bm y$ and $\bm A$ with $O(\mathrm{poly}(s))$ recovery complexity.
\end{enumerate}}
\end{enumerate}
}
\end{enumerate}
The result in \ref{it:vander_product} provides a new explicit construction of $\bm A$ with $m = O(s)$ rows (even achieving the lower bound $2s$) which produces a unique $\bm x$ in (\ref{eq:cs_main_eq}) that can be recovered with $O(\mathrm{poly}(s))$ complexity (or $O(\mathrm{poly}(n))$ complexity by solving (\ref{eq:cs_1_norm})); even more, a whole new \textit{class} of matrices $\bm A$ is presented that achieves this.

The results in \ref{it:fourier_prod} are obtained rather easily from \ref{it:phaseless_ret} which can be seen from the similarity of the statements.
The result in \ref{it:fourier_prod}\ref{it:rand_measure} solves problem 3.5 in \cite{xu} for \textit{almost all} $\bm x \in \mathbb{C}_{\leq s}^n$ and significantly improves upon it. Namely, that problem asks for an explicit construction of $\bm A$ with $m = O(s\log(n/s))$ such that solving (\ref{eq:pr_1_norm}) results in recovery of \textit{all} $\bm x \in \mathbb{C}_{\leq s}^n$. The combination of the deterministic construction $\bm F(\bm z)^T \bm F(\bm \theta)$ and a single random measurement as described in \ref{it:fourier_prod}\ref{it:rand_measure} shows that after only $4s$ measurements there is a unique solution to $\bm x$ (up to a global phase) in (\ref{eq:phase_ret_main_eq}) for almost all $\bm x \in \mathbb{C}_{\leq s}^n$; thus even (\ref{eq:pr_1_norm}) will find it. Noteworthy is that the number of measurements $4s$ in the proposed construction almost achieves the injectivity lower bound of $4s - 4$.
Furthermore, taking $8s-2$ measurements instead of $4s$, we are able to reduce the recovery complexity to $O(\mathrm{poly}(s))$ with a novel recovery algorithm, drastically reducing the recovery complexity for large scale problems (large $n$) compared to (\ref{eq:pr_1_norm}).
\subsection{Organization}
The rest of the article is structured as follows. In Section \ref{sec:phase_measurements} we present two results described in Section \ref{sec:intro} for phase-aware measurements and their derivations. Section \ref{sec:phaseless_msrmt} presents three results described in Section \ref{sec:intro} for phase-less measurements and their derivations. 
\section{Phase-aware Measurements}
\label{sec:phase_measurements}
We start by formulating the Vandermonde measurement result outlined in Section \ref{sec:intro}.
\par\noindent\rule{\textwidth}{0.4pt}
\textbf{R1}: Assume an $n$-dimensional vector $\bm x = \bm V(\bm \theta)\bm g$, where $\bm \theta \in \mathbb{C}^s$, $\bm g \in \mathbb{C}^s$ and $n \geq 2s$. There are samples $\bm z \in \mathbb{C}^m$, with $m \geq 2s$, such that almost all $\bm \theta$ and $\bm g$ (and thus almost all $\bm x$) can be exactly recovered from
\begin{align}
\label{eq:phase_msrmt_vandermonde}
\bm y = \bm V(\bm z)^T \bm x = \bm V(\bm z)^T \bm V(\bm \theta)\bm g
\end{align}
with $O(\mathrm{poly}(s))$ recovery complexity.
\par\noindent\rule{\textwidth}{0.4pt}
Since $\theta_j$, $0 \leq j \leq s-1$, are assumed to be continuous, \textbf{R1} achieves grid-less compressed sensing \cite{gridless_comp_sens} with the assumptions that the measurement matrix is $\bm V(\bm z)^T$ and that the $n$-dimensional complex vector $\bm x$ is inside an $s$-dimensional complex subspace (with $n \geq 2s$) spanned by a $n\times s$ Vandermonde basis matrix. A version of \textbf{R1} in the specific case of Fourier matrices is presented in \cite{twc2023}. Hence, \textbf{R1} extends the results in \cite{twc2023} to general Vandermonde matrices. 

\textbf{R1} is derived by first solving for the unknown vector $\bm\theta$. Once $\bm \theta$ is known, it is proved that there is a unique $\bm g$ satisfying (\ref{eq:phase_msrmt_vandermonde}) and it can, e.g., be obtained as $\bm g = \left(\bm V(\bm z)^T \bm V(\bm \theta)\right)^{\dagger}\bm y$ where $(.)^{\dagger}$ denotes the matrix pseudo-inverse. Deriving \textbf{R1} in this particular order gives us  
\par\noindent\rule{\textwidth}{0.4pt}
\textbf{R2}: Assume $n \geq 2s$ and $m \geq 2s$. For infinitely many choices of $\bm z \in \mathbb{C}^m, \bm \theta \in \mathbb{C}^n$, one can recover all $\bm x \in \mathbb{C}_{\leq s}^n$ from $\bm y = \bm V(\bm z)^T \bm V(\bm \theta) \bm x$ with $O(\mathrm{poly}(s))$ recovery complexity.
\par\noindent\rule{\textwidth}{0.4pt}
\subsection{Derivation of \textbf{R1}}
As observed in \cite{twc2023}, element $y_j$ in \textbf{R1} can be seen as a sample $y_j = f(z_j)$ of the rational function 
\begin{align}
f(z) &= \sum_{l=0}^{s-1}g_l\sum_{j=0}^{n-1}(z\theta_l)^j = \sum_{l=0}^{s-1}g_l\frac{(z\theta_l)^n - 1}{z\theta_l - 1} = \frac{u(z)}{v(z)} \label{eq:f} \\
u(z) &\stackrel{\triangle}{=} \sum_{l=0}^{s-1}g_l((z\theta_l)^n-1)\prod_{\substack{i=0 \\ i\neq l \\g_i \neq 0}}^{s-1}(z\theta_i-1) \label{eq:u} \\
v(z) &\stackrel{\triangle}{=} \prod_{\substack{l=0 \\ g_l\neq 0}}^{s-1}(z\theta_l-1) \label{eq:v}.
\end{align}
The polynomial $u(z)$ can be further expressed as
\begin{align}
\label{eq:u_exp}
\nonumber u(z) &= z^n\hat{u}(z) + \tilde{u}(z) \\
\nonumber\hat{u}(z) &\stackrel{\triangle}{=} \sum_{l=0}^{s-1}g_l\theta_l^nt_l(z) \\
\nonumber\tilde{u}(z) &\stackrel{\triangle}{=} -\sum_{l=0}^{s-1}g_lt_{l}(z) \\
t_{l}(z) &\stackrel{\triangle}{=} \prod_{\substack{i=0 \\ i\neq l \\g_i \neq 0}}^{s-1}(z\theta_i-1).
\end{align}
Obviously, if some elements in $\bm g$ are zero then $s$ is just reduced to a smaller value. Thus, without loss of generality (WLOG), we will at first assume that $\bm g$ contains no zeros and later, when necessary, comment on the impact of an unknown reduced $s$ on some developments in this work. Furthermore, throughout this work, we also assume that the elements in $\bm \theta$ are distinct (otherwise at least two columns in $\bm V(\bm \theta)$ are identical) and that none of them is zero. 

Next, some results on the introduced polynomials are given that will be important for this work. 
\begin{lemma}
\label{lemma:ind_t_pol}
Assume that $\bm g \in \mathbb{C}^s$ has no zeros. The polynomials $t_{0}(z),\ldots,t_{s-1}(z)$ are linearly independent.
\end{lemma}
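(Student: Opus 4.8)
The plan is to give a short, direct evaluation argument: assume a linear dependence $\sum_l c_l t_l(z) \equiv 0$, substitute the reciprocal points $z = 1/\theta_0, \ldots, 1/\theta_{s-1}$ one at a time, and read off that every coefficient $c_l$ must vanish.

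First I would pin down which hypotheses are in play. Since $\bm g$ has no zeros, each $t_l(z) = \prod_{i \neq l}(z\theta_i - 1)$ is a genuine product over the index set $\{0,\ldots,s-1\}\setminus\{l\}$ and hence has degree exactly $s-1$; all $s$ of these polynomials therefore lie in the $s$-dimensional space of polynomials of degree at most $s-1$, so proving linear independence is the same as proving they form a basis of that space. I would also invoke the standing (WLOG) assumptions of the set-up preceding the lemma, namely that $\bm\theta$ has no zeros and that its entries $\theta_0,\ldots,\theta_{s-1}$ are pairwise distinct — if two coincided, the corresponding columns of $\bm V(\bm\theta)$ would be identical and $s$ could simply be reduced. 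Consequently the $s$ evaluation points $1/\theta_l$ are well defined and pairwise distinct.

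The key observation is that $t_l$ vanishes at $z = 1/\theta_i$ for every $i \neq l$, because the factor $(z\theta_i - 1)$ occurs in the product defining $t_l$, whereas $t_l(1/\theta_l) = \prod_{i \neq l}(\theta_i/\theta_l - 1) \neq 0$ since $\theta_i \neq \theta_l$ and all the $\theta$'s are nonzero. Thus, given $\sum_{l=0}^{s-1} c_l t_l(z) \equiv 0$, evaluating at $z = 1/\theta_k$ kills every term with $l \neq k$ (because $k$ is among the indices $i \neq l$), leaving $c_k\, t_k(1/\theta_k) = 0$ and hence $c_k = 0$; as $k$ is arbitrary, all coefficients vanish. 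Equivalently, one may observe that $t_l(z)/t_l(1/\theta_l)$ is precisely the Lagrange interpolation basis polynomial for the distinct nodes $\{1/\theta_0,\ldots,1/\theta_{s-1}\}$ that equals $1$ at $1/\theta_l$ and $0$ at the other nodes, and Lagrange basis polynomials are well known to form a basis of the degree-$(s-1)$ polynomials.

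I do not expect a real obstacle here; the only point requiring a little care is the justification that the evaluation nodes $1/\theta_l$ exist and are pairwise distinct, which is exactly where the hypothesis ``$\bm g$ has no zeros'' (ensuring the stated index set, and thus $\deg t_l = s-1$) together with the standing assumptions that $\bm\theta$ has no zeros and no repeated entries come into play. If instead some $g_i = 0$, the lemma is simply applied with the correspondingly smaller value of $s$, as the surrounding discussion already notes.
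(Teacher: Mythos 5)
Your proof is correct and uses the same key idea as the paper: evaluating the dependence relation at the points $z=\theta_k^{-1}$, where $t_j(\theta_k^{-1})=0$ for $j\neq k$ but $t_k(\theta_k^{-1})\neq 0$. Your direct version (which immediately yields $c_k=0$ for every $k$) is in fact slightly cleaner than the paper's, which takes a small detour by first arguing that at least two coefficients are nonzero before evaluating; your explicit remark that the argument needs the $\theta_i$ to be nonzero and pairwise distinct is a point the paper uses implicitly.
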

\begin{proof}
Since no element in $\bm g$ is zero, $t_j(z), 0 \leq j \leq s-1$, are all distinct and of degree $s-1$. Assume that these polynomials are linearly dependent, so that we can write $ \sum_{j=0}^{s-1}c_jt_{j}(z) = 0$ for some coefficients $c_0,\ldots,c_{s-1}$ not all being 0. Since none of $t_j(z), 0 \leq j \leq s-1$, are zero, at least two of $c_j, 0 \leq j \leq s-1$, are non-zero. Assume then that $c_j \neq 0$ and $c_k \neq 0$ for some $j \neq k$. Hence, we can write $t_k(z) = -\frac{1}{c_k}\sum_{\substack{j = 0 \\ j \neq k}}^{s-1}c_jt_j(z)$. Since $t_{k}(\theta_k^{-1}) \neq 0$ and $t_{j}(\theta_k^{-1}) = 0$ for $j \neq k$, it follows that $0 \neq t_k(\theta_k^{-1}) = -\frac{1}{c_k}\sum_{\substack{j = 0 \\ j \neq k}}^{s-1}c_jt_j(\theta_k^{-1}) = 0$, a contradiction, implying that $t_{0}(z),\ldots,t_{s-1}(z)$ are linearly independent. 
\end{proof}
\begin{lemma}
\label{lemma:u_tilde_distinct_roots}
For almost all $\bm g \in \mathbb{C}^s$ with $s$ non-zero elements, $\tilde{u}(z)$ has $s-1$ roots with multiplicity one.
\end{lemma}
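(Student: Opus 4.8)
My plan is to recast the statement as a claim about the vanishing of a discriminant polynomial and then invoke Lemma~\ref{lemma:ind_t_pol} to control that polynomial. Recall from (\ref{eq:u_exp}) that, under the standing assumption that $\bm g$ has no zeros, $\tilde u(z) = -\sum_{l=0}^{s-1} g_l t_l(z)$ with $t_l(z) = \prod_{i\neq l}(z\theta_i - 1)$, so each $t_l$ is a polynomial of degree exactly $s-1$ and, by Lemma~\ref{lemma:ind_t_pol}, the family $\{t_0,\dots,t_{s-1}\}$ is linearly independent. Since there are $s$ of them and the space $\mathcal{P}$ of polynomials of degree at most $s-1$ has complex dimension $s$, this family is a basis of $\mathcal{P}$. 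Hence the linear map $L:\mathbb{C}^s\to\mathcal{P}$, $\bm g\mapsto -\sum_{l} g_l t_l$, is a linear isomorphism; in particular it is a homeomorphism that carries Lebesgue-null sets to Lebesgue-null sets in both directions.

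Next I would record the relevant generic behaviour inside $\mathcal{P}$. The subset $\mathcal{B}\subset\mathcal{P}$ consisting of polynomials that either have degree strictly less than $s-1$ or possess a repeated root is contained in the zero set of a single polynomial in the coefficient vector $(a_0,\dots,a_{s-1})$ of a generic element of $\mathcal{P}$, namely $a_{s-1}$ times the discriminant of its degree-$(s-1)$ truncation; since there plainly exist degree-$(s-1)$ polynomials with $s-1$ distinct roots (e.g.\ $\prod_{k=1}^{s-1}(z-k)$), this polynomial is not identically zero, so $\mathcal{B}$ is a proper algebraic subset and therefore a null subset of $\mathcal{P}$. Pulling back through $L$, the set $L^{-1}(\mathcal{B})$ of those $\bm g$ for which $\tilde u(z)$ fails to have $s-1$ simple roots is null in $\mathbb{C}^s$. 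Enlarging it by the coordinate hyperplanes $\{g_l=0\}$ (also null) only adds a null set, and since, by the paper's convention, an ``arbitrary'' $\bm g$ is drawn from a distribution continuous on $\mathbb{C}^s$, this exceptional set carries probability zero. That establishes the lemma.

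The one place I would slow down is the bookkeeping around the standing WLOG assumption ``$\bm g$ has no zeros'': one must note that the zero-free vectors form a co-null (indeed open dense) set, so conditioning on them costs nothing, and only on that set does Lemma~\ref{lemma:ind_t_pol} apply and make $L$ the claimed isomorphism. A slightly cleaner route that never mentions $\mathcal{B}$ is to argue directly that $\mathrm{disc}_z(\tilde u)$, whose coefficients are polynomial (in fact linear) in $\bm g$, is a nonzero polynomial in $\bm g$: by surjectivity of $L$ one can realize any target polynomial with distinct roots, by continuity this can be arranged with a zero-free $\bm g$ (intersecting two open dense sets), hence the discriminant is not identically zero and its zero locus is null. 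I do not expect any genuine obstacle beyond this; the whole content is carried by the linear independence in Lemma~\ref{lemma:ind_t_pol}.
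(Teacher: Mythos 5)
Your proposal is correct and follows essentially the same route as the paper: both use Lemma~\ref{lemma:ind_t_pol} to conclude that $\bm g \mapsto \tilde u$ is a linear isomorphism onto the polynomials of degree at most $s-1$, and then observe that the repeated-root locus is a null set which pulls back to a null set of $\bm g$. The only difference is cosmetic — you exhibit that locus as the zero set of the (nonzero) discriminant times the leading coefficient, whereas the paper parametrizes polynomials by their roots and notes that coincident roots form lower-dimensional subsets; your version is, if anything, slightly more airtight.
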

\begin{proof}
For a $\bm g \in \mathbb{C}^s$ that contains no zeros, the polynomials $t_{0}(z),\ldots,t_{s-1}(z)$ are linearly independent from Lemma \ref{lemma:ind_t_pol}. Varying $\bm g$ across $\mathbb{C}^s$ makes $\tilde{u}(z)$ span all polynomials of degree at most $s-1$. Polynomials of degree smaller than $s-1$ correspond to a complex subspace of dimension lower than $s$. Thus, for almost all $\bm g$, $\tilde{u}(z)$ has degree $s-1$. Each polynomial of degree $s-1$ can be identified (up to a complex scalar) with the vector of its roots $\bm r = [r_0,\ldots,r_{s-2}] \in \mathbb{C}^{s-1}$; in case of multiple roots, corresponding elements in $\bm r$ are the same. Polynomials with roots that have multiplicity larger than one can therefore be represented as a union of subspaces in $\mathbb{C}^{s-1}$ of dimensions lower than $s-1$. Hence, for almost all $\bm r \in \mathbb{C}^{s-1}$ (and thus almost all $\bm g \in \mathbb{C}^s$), $\tilde{u}(z)$ has $s-1$ roots with multiplicity one.
\end{proof}
\begin{lemma}
\label{lemma:u_hat_tilde_roots}
Given a $\bm \theta$ such that $\theta_j^n, 0 \leq j \leq s-1$, are not all equal. Then there exists polynomials $\tilde{u}(z)$, $\hat{u}(z)$ that do not have any roots in common.
\end{lemma}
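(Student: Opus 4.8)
The plan is to show that for a generic $\bm g \in \mathbb{C}^s$ with no zero entries, the polynomials $\hat u(z)$ and $\tilde u(z)$ of Eq.~(\ref{eq:u_exp}) are coprime; producing one such $\bm g$ is all the lemma asks for. It is convenient to write $\bm t(z) = [t_0(z),\ldots,t_{s-1}(z)]^T$ and $D = \mathrm{diag}(\theta_0^n,\ldots,\theta_{s-1}^n)$, so that $\hat u_{\bm g}(z) = \bm t(z)^T D\bm g$ and $\tilde u_{\bm g}(z) = -\bm t(z)^T\bm g$ (writing $\hat u_{\bm g},\tilde u_{\bm g}$ to stress the dependence on $\bm g$). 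I would first dispose of the trivial cases: if some $g_i=0$ we are in a smaller instance, and if $s=1$ both polynomials are nonzero constants and there is nothing to prove; throughout, $\bm\theta$ has distinct nonzero entries (this is what makes Lemma~\ref{lemma:ind_t_pol} applicable). The content of the lemma is that, although $\hat u_{\bm g}$ and $\tilde u_{\bm g}$ are built from the \emph{same} vector $\bm g$, so the naive ``two generic polynomials are coprime'' reasoning does not apply, there is still enough freedom in $\bm g$ to separate their roots.

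The key observation is that the hypothesis on $\bm\theta$ is precisely what forces $\bm t(z)$ and $D\bm t(z)$ to be linearly independent for all but finitely many $z$. Put $S=\{\theta_0^{-1},\ldots,\theta_{s-1}^{-1}\}$. For any $z\in\mathbb{C}$ one has $\bm t(z)\neq\bm 0$ (if $z=\theta_{k}^{-1}$ then $t_k(z)\neq0$ by distinctness, and otherwise all $t_l(z)\neq0$), and for $z\notin S$ \emph{every} component $t_l(z)$ is nonzero. Hence, for $z\notin S$, if $D\bm t(z)=\lambda\bm t(z)$ then $\theta_l^n=\lambda$ for all $l$, contradicting that the $\theta_l^n$ are not all equal; since $D$ is invertible ($\theta_l\neq0$) and $\bm t(z)\neq\bm 0$, this means $\{\bm t(z),D\bm t(z)\}$ spans a $2$-dimensional space for every $z\in\mathbb{C}\setminus S$. (Conversely, if all $\theta_l^n$ were equal to some $c$ then $\hat u_{\bm g}=-c\,\tilde u_{\bm g}$ for every $\bm g$, so the hypothesis is also necessary.)

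From here I run a dimension count on the bad set $\mathcal B = \{\bm g\in\mathbb{C}^s : \hat u_{\bm g}\text{ and }\tilde u_{\bm g}\text{ have a common root in }\mathbb{C}\}$. If $z_0$ is such a common root then $\bm t(z_0)^T\bm g=0$ and $(D\bm t(z_0))^T\bm g=0$; moreover $\tilde u_{\bm g}(\theta_k^{-1}) = -g_k t_k(\theta_k^{-1})\neq0$ when $g_k\neq0$ (again using distinctness of $\bm\theta$), so necessarily $z_0\notin S$, and then the two row vectors $\bm t(z_0)^T,(D\bm t(z_0))^T$ are independent by the previous paragraph, forcing $\bm g$ into a subspace of dimension $s-2$. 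Letting $z_0$ range over $\mathbb{C}\setminus S$, $\mathcal B$ is the image, under projection to the $\bm g$-coordinates, of the quasi-affine variety $\{(z_0,\bm g) : z_0\in\mathbb{C}\setminus S,\ \bm t(z_0)^T\bm g=0=(D\bm t(z_0))^T\bm g\}$, all of whose fibers over $\mathbb{C}\setminus S$ have dimension $s-2$; hence $\dim\mathcal B\le (s-2)+1 = s-1 < s$, so $\mathcal B$ is a null set in $\mathbb{C}^s$. Discarding also the null set $\{\bm g : g_i=0\text{ for some }i\}$ (and, if ``roots in common'' is to be read on $\mathbb{P}^1$, the two hyperplanes where the leading coefficient of $\hat u_{\bm g}$ or of $\tilde u_{\bm g}$ vanishes), any remaining $\bm g$ is nonzero, so $\hat u_{\bm g},\tilde u_{\bm g}$ are nonzero polynomials (they are nonzero since $\bm t(\cdot)$ has the $t_l$ as a basis by Lemma~\ref{lemma:ind_t_pol}, and $D\bm g,\bm g\neq\bm 0$) with no common root in $\mathbb{C}$; a nonconstant common factor would have a root in $\mathbb{C}$, so they are coprime. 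Such a $\bm g$ exists, which yields the desired $\hat u,\tilde u$.

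The main obstacle is the one just navigated: because $\hat u_{\bm g}$ and $\tilde u_{\bm g}$ are not independent parameters, one must first translate the combinatorial hypothesis ``the $\theta_l^n$ are not all equal'' into the geometric statement that $\bm t(z)$ and $D\bm t(z)$ are independent for generic $z$, and only then does the dimension count close. Equivalently, one may run the whole argument through $\mathrm{Res}_z(\hat u_{\bm g},\tilde u_{\bm g})$, which for fixed $\bm\theta$ is a polynomial in $\bm g$, and the work is exactly in showing this resultant is not identically zero. As a consistency check, for $s=2$ a direct computation gives $\mathrm{Res}_z(\hat u_{\bm g},\tilde u_{\bm g}) = c\, g_0 g_1(\theta_0-\theta_1)(\theta_0^n-\theta_1^n)$ for a nonzero constant $c$, which is nonzero exactly when $\theta_0^n\neq\theta_1^n$ and $\bm g$ has no zero entries — consistent with, and in fact stronger than, the lemma in that case.
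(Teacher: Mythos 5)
Your proof is correct, but it takes a genuinely different route from the paper. The paper's proof is a short explicit construction: using the interpolation identity $g_j = \tilde{u}(\theta_j^{-1})/t_j(\theta_j^{-1})$ it picks the particular $\bm g$ for which $\tilde{u}(\theta_j^{-1}) = 1$ for all $j$, so that $\tilde{u}(z) \equiv 1$ has no roots at all, and then uses the hypothesis that the $\theta_j^n$ are not all equal only to verify that the corresponding $\hat{u}(z)$ is a nonconstant polynomial. You instead prove a genericity statement: translating the hypothesis into linear independence of $\bm t(z)$ and $D\bm t(z)$ off the node set $S$, you bound the dimension of the bad set of $\bm g$ by $s-1$ via the incidence variety and conclude that almost every $\bm g$ works. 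Both arguments are sound (your $s=2$ resultant check is right: $\mathrm{Res}_z(\hat u_{\bm g},\tilde u_{\bm g}) = g_0g_1(\theta_0-\theta_1)(\theta_0^n-\theta_1^n)$ up to sign). The trade-off is that the paper's lemma is deliberately minimal — mere existence — because the ``w.p.1'' upgrade is done separately in Theorem~\ref{thm:u_hat_tilde_roots_general} by showing the resultant $r(\bm g)$ is a nonzero polynomial (for which the explicit witness $\tilde u \equiv 1$, $\deg\hat u \ge 1$ makes the Sylvester matrix the identity); your dimension count is heavier machinery for the lemma itself but subsumes the forward direction of that theorem, so it could replace both steps at once. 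One small remark: your restriction to $\bm g$ with no zero entries, and the observation that a common root cannot lie in $S$ when all $g_k \neq 0$, are exactly the points where distinctness of the $\theta_j$ enters, and you flag them correctly.
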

\begin{proof}
From (\ref{eq:u_exp}) it follows that 
\begin{align}
\label{lemma_eq:g_exp}
g_j = -\frac{\tilde{u}(\theta_j^{-1})}{t_{j}(\theta_j^{-1})}, \quad 0 \leq j \leq s-1.
\end{align}
Hence, the $g_j, 0 \leq j \leq s-1$, (and thus $\tilde{u}(z)$, $\hat{u}(z)$ as well) are uniquely determined from the $s$ samples $\tilde{u}(\theta_j^{-1}), 0 \leq j \leq s-1$. The identity in (\ref{lemma_eq:g_exp}) gives
\begin{align}
\label{lemma_eq:g_exp_u_exp}
\nonumber \tilde{u}(z) = -\sum_{j=0}^{s-1}\frac{\tilde{u}(\theta_j^{-1})}{t_{j}(\theta_j^{-1})}t_{j}(z) \\
\hat{u}(z) = -\sum_{j=0}^{s-1}\frac{\tilde{u}(\theta_j^{-1})}{t_{j}(\theta_j^{-1})}\theta_j^nt_{j}(z).
\end{align}
 Choose $\tilde{u}(\theta_j^{-1}) = 1, 0 \leq j \leq s-1$, which gives $\tilde{u}(z) = 1$. Clearly, $\hat{u}(z) \neq 0$ for this choice since the assumption on $\bm \theta$ is that it has no zeros. 
Assume then instead that $\hat{u}(z)$ is a constant polynomial, $\hat{u}(z) = c$, for some complex scalar $c$. Then, from (\ref{lemma_eq:g_exp_u_exp})
\begin{align}
\label{lemma_eq:u_tilde_hat_diff}
\hat{u}(z) - c\tilde{u}(z) = \sum_{j=0}^{s-1}\frac{c - \theta_j^n}{t_{j}(\theta_j^{-1})}t_{j}(z) = 0
\end{align}
which holds if and only if $\theta_j^n = c, 0 \leq j \leq s-1$, contradicting the assumption on $\bm \theta$. Hence, $\hat{u}(z)$ has at least degree one.
\end{proof}
Next, we show that for almost all $\bm g$, $\tilde{u}(z)$ and $\hat{u}(z)$ share no common roots if and only if $\theta_j^n, 0 \leq j \leq s-1$, are not all equal. To do this, we utilize Lemma \ref{lemma:u_hat_tilde_roots} and the well known result that two polynomials $p(x), q(x)$ of degrees $m, n$, respectively, share a common root if and only if their \textit{resultant}  $\mathrm{res}(p,q)$ is zero, where
\begin{align}
\label{eq:resultant}
\mathrm{res}(p,q) = \left|\begin{array}{cccccccc}p_m & p_{m-1} & \ldots & p_0 & 0 & \ldots & \ldots & 0 \\
0 & p_m & p_{m-1} & \ldots & p_0 & 0 & \ldots & 0 \\
\vdots & & & \ddots & & \ddots & & \vdots \\
0 & \ldots & \ldots & 0 & p_m & p_{m-1} & \ldots & p_0 \\
q_n & q_{n-1} & \ldots & q_0 & 0 & \ldots & \ldots & 0 \\
0 & q_n & q_{n-1} & \ldots & q_0 & 0 & \ldots & 0 \\
\vdots & & & \ddots & & \ddots & & \vdots \\
0 & \ldots & \ldots & 0 & q_n & q_{n-1} & \ldots & q_0
\end{array}\right|
\end{align}
and $|.|$ denotes the determinant. The matrix in (\ref{eq:resultant}) has dimension $m+n$, where coefficients of $p(z)$ appear in $n$ rows and coefficients of $q(z)$ appear in $m$ rows. 
\begin{theorem}
\label{thm:u_hat_tilde_roots_general}
$\tilde{u}(z)$ and $\hat{u}(z)$ share no common roots for almost all $\bm g \in \mathbb{C}^s$ if and only if $\bm \theta$ is such that $\theta_j^n, 0 \leq j \leq s-1$, are not all equal.
\end{theorem}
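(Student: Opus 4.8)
The plan is to prove the two implications separately: the ``only if'' direction reduces to a one-line observation, whereas the ``if'' direction is the real content and I would attack it through the resultant. For the ``only if'' part, suppose $\theta_0^n=\cdots=\theta_{s-1}^n=c$ (necessarily $c\neq 0$ since $\bm\theta$ has no zeros). Then directly from (\ref{eq:u_exp}), $\hat u(z)=\sum_l g_l\theta_l^n t_l(z)=c\sum_l g_l t_l(z)=-c\,\tilde u(z)$, so $\hat u$ and $\tilde u$ are scalar multiples of one another. For $s\geq 2$ and an arbitrary $\bm g$, the leading coefficient $-\sum_l g_l\prod_{i\neq l}\theta_i$ of $\tilde u$ is a nonzero linear form in $\bm g$, so $\deg\tilde u=s-1\geq 1$ w.p.1, and then $\tilde u$ has a root that $\hat u=-c\,\tilde u$ also has; thus the event ``common root exists'' has probability one, contradicting the hypothesis. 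Hence $\theta_j^n$ not all equal is necessary (the $s=1$ case is degenerate and excluded throughout).

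For the ``if'' part, assume $\theta_j^n$ are not all equal. On the open full-measure set where every $g_i\neq 0$ the polynomials $t_l(z)$ in (\ref{eq:u_exp}) are fixed (all indices $i\neq l$ are retained), so the coefficients of $\tilde u$ and of $\hat u$ are fixed linear forms in $\bm g$, and therefore $R(\bm g):=\mathrm{res}(\tilde u,\hat u)$ — the Sylvester determinant (\ref{eq:resultant}) with both formal degrees set to $s-1$ — is a polynomial in $\bm g$. A polynomial either vanishes identically or vanishes on a Lebesgue-null set, and since $\{\exists i:\ g_i=0\}$ is already null, it suffices to exhibit a single $\bm g$ at which $R(\bm g)\neq 0$. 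I would start from the witness in the proof of Lemma \ref{lemma:u_hat_tilde_roots}, $g_l=1/t_l(\theta_l^{-1})$ (well-defined and nonzero because the $\theta_j$ are distinct), which gives $\tilde u\equiv 1$ and $\hat u$ of degree at least one, a pair with no common root at all.

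The subtlety is that $R$ is formed at the fixed formal degree $s-1$, whereas $\tilde u\equiv 1$ has degree $0$, so $R$ could still vanish at this witness because of a spurious root ``at infinity.'' I would remove this by a small perturbation $\bm g\mapsto\bm g+\epsilon\bm h$ with a generic direction $\bm h$ and small $\epsilon\neq 0$: now $\tilde u=1+\epsilon\tilde v$ with $\deg\tilde v=s-1$, so $\tilde u$ has degree exactly $s-1$ and all of its roots have modulus tending to $\infty$ as $\epsilon\to 0$, while $\hat u$ likewise has degree exactly $s-1$ with its roots staying near those of the unperturbed $\hat u$ (any extra roots produced by a degree jump escaping at a rate different from that of $\tilde u$'s roots). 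For $\epsilon$ small enough the two root sets are disjoint and both degrees are exactly $s-1$, so the classical ``degrees $m,n$'' characterization applies and $R(\bm g+\epsilon\bm h)\neq 0$. Hence $R\not\equiv 0$, so $\tilde u,\hat u$ are coprime w.p.1 and the theorem follows.

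The step I expect to be the main obstacle is precisely this reconciliation of formal and actual degrees: the resultant also detects common zeros at infinity, so the Lemma \ref{lemma:u_hat_tilde_roots} witness cannot be substituted verbatim into the fixed-degree Sylvester determinant. An alternative route that avoids it is a dimension count on the incidence variety $Z=\{(\bm g,z_0):\ \tilde u_{\bm g}(z_0)=\hat u_{\bm g}(z_0)=0\}$ over the set $\{g_i\neq 0\ \forall i\}$: for $z_0\notin\{\theta_0^{-1},\ldots,\theta_{s-1}^{-1}\}$ the fiber is cut out by the two linear conditions $\sum_l g_l t_l(z_0)=0$ and $\sum_l g_l\theta_l^n t_l(z_0)=0$, which are independent exactly because $\theta_j^n$ are not all equal (otherwise $(t_l(z_0))_l$ and $(\theta_l^n t_l(z_0))_l$ are proportional), so each such fiber has dimension $s-2$, $\dim Z\leq s-1$, and the projection of $Z$ to $\bm g$-space — which is exactly the locus of $\bm g$ for which $\tilde u,\hat u$ have a common root — is a null set, giving the w.p.1 conclusion directly.
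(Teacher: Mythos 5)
Your proof follows the paper's route for the substantive direction --- treat $\mathrm{res}(\tilde u,\hat u)$ as a polynomial in $\bm g$ and exhibit a witness, taken from Lemma \ref{lemma:u_hat_tilde_roots}, where it does not vanish --- and your ``only if'' direction is the same one-line observation (with the useful extra care of noting $\hat u = -c\,\tilde u$ and that $\deg\tilde u = s-1\ge 1$ w.p.1, so a common root genuinely exists; the $s=1$ degeneracy you exclude is indeed not addressed in the paper). Where you go beyond the paper is in flagging the formal-versus-actual degree issue, and this is a real subtlety in the paper's own argument: the claim that at $\bm g^1$ ``the matrix in (\ref{eq:resultant}) is an identity matrix'' holds only if the Sylvester matrix is built with the \emph{actual} degrees $(0,\deg\hat u)$, in which case $r(\bm g)$ is not a single polynomial as $\bm g$ varies; with fixed formal degrees $(s-1,s-1)$ the determinant at $\bm g^1$ evaluates to $\pm(\text{leading coefficient of }\hat u)^{\,s-1}$, and Lemma \ref{lemma:u_hat_tilde_roots} only guarantees $\deg\hat u\ge 1$, not $=s-1$. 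Your perturbation repair is sound in spirit, though the ``different escape rates'' step is heuristic as written and would want a Newton-polygon/Puiseux justification (the escaping roots of $1+\epsilon\tilde v$ grow like $\epsilon^{-1/(s-1)}$ while any escaping roots of $\hat u(\bm g^1)+\epsilon\hat v$ grow like $\epsilon^{-1/(s-1-d)}$ with $d=\deg\hat u(\bm g^1)\ge 1$, so the sets separate). Your alternative incidence-variety argument is the cleaner fix: the fibers over $z_0\notin\{\theta_j^{-1}\}$ are cut out by two genuinely independent linear forms precisely because the $\theta_j^n$ are not all equal, the fibers over $z_0=\theta_j^{-1}$ are empty on $\{g_i\ne 0\ \forall i\}$, and the projection of the $(s-1)$-dimensional incidence set is null; this bypasses the resultant and its behaviour at infinity entirely, and I would recommend it as the argument of record.
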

\begin{proof}
Note from (\ref{eq:u_exp}) that if $\theta_j^n, 0 \leq j \leq s-1$, are all equal then $\tilde{u}(z) = \hat{u}(z)$ and thus they have the same roots. To finish the proof, we now show that when not all $\theta_j^n, 0 \leq j \leq s-1$, are equal, then $\tilde{u}(z)$ and $\hat{u}(z)$ share no common roots for almost all $\bm g$.

From (\ref{eq:resultant}), it follows that the elements of the matrix in $\mathrm{res}(\tilde{u},\hat{u})$ are a linear combination of the elements in $\bm g$. Thus, $r(\bm g) = \mathrm{res}(\tilde{u},\hat{u})$ is a \textit{multivariate polynomial} in the elements of $\bm g$. 
Denote by $\bm g^1$ the $\bm g$ that produces the polynomials of Lemma \ref{lemma:u_hat_tilde_roots}, i.e., $\tilde{u}(z) = 1$ and $\hat{u}(z)$ is of degree at least one. For these polynomials, the matrix in (\ref{eq:resultant}) is an identity matrix and thus $r(\bm g^1) = 1$, showing that $r(\bm g)$ is not the zero polynomial. The set $\mathcal{S} = \{\bm g | \bm g \in \mathbb{C}^s, r(\bm g) = 0\}$ is either empty or is a hypersurface of dimension $s-1$ in the ambient space $\mathbb{C}^s$. In both cases, it has measure zero in $\mathbb{C}^s$, implying that $r(\bm g) \neq 0$ for almost all $\bm g$, which proves the theorem.
\end{proof}

The proof for the bounds on $n$ and $m$ in \cite{twc2023} holds here as well, and thus we have the necessary conditions
\begin{theorem}
\label{eq:lbs}
\cite{twc2023} If \textbf{R1} has a (unique) solution then $n \geq 2s$ and $m \geq 2s$. 
\end{theorem}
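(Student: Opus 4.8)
The plan is to prove both bounds by a dimension count: the unknowns $(\bm\theta,\bm g)$ carry $2s$ complex degrees of freedom, and exact recovery (recovery w.p.1) forces the relevant forward maps to be generically finite-to-one, which in turn forces their target spaces to have complex dimension at least $2s$. Introduce the two morphisms $\Phi_{\bm z}\colon\mathbb{C}^s\times\mathbb{C}^s\to\mathbb{C}^m$, $\Phi_{\bm z}(\bm\theta,\bm g)=\bm V(\bm z)^T\bm V(\bm\theta)\bm g$, and $\Psi\colon\mathbb{C}^s\times\mathbb{C}^s\to\mathbb{C}^n$, $\Psi(\bm\theta,\bm g)=\bm V(\bm\theta)\bm g=\bm x$; both are genuine polynomial maps (the entries of $\bm V(\bm\theta)$ are monomials in $\bm\theta$ and $\bm g$ enters linearly) and $\bm y=\bm V(\bm z)^T\Psi(\bm\theta,\bm g)=\Phi_{\bm z}(\bm\theta,\bm g)$. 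I would also note at the outset that $(\bm\theta,\bm g)$ can at best be recovered up to the finite group $S_s$ acting by simultaneous permutation of the pairs $(\theta_l,g_l)$, so ``uniqueness up to this symmetry'' still forces generic fibres to be zero-dimensional.

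For $m\geq 2s$: suppose to the contrary that $m<2s$ while \textbf{R1}'s conclusion holds for some $\bm z\in\mathbb{C}^m$. By the theorem on the dimension of the fibres of a morphism (applied to the irreducible source $\mathbb{C}^{2s}$), there is a proper Zariski-closed set $B\subset\mathbb{C}^{2s}$ off which every fibre $\Phi_{\bm z}^{-1}(\Phi_{\bm z}(\bm\theta,\bm g))$ has dimension exactly $2s-\dim\overline{\mathrm{im}\,\Phi_{\bm z}}\geq 2s-m\geq 1$. Since a proper Zariski-closed subset of $\mathbb{C}^{2s}$ is Lebesgue-null, a $(\bm\theta,\bm g)$ drawn from any continuous distribution on $\mathbb{C}^{2s}$ lies off $B$ with probability $1$, and then a positive-dimensional --- hence uncountable even after quotienting by $S_s$ --- family of parameter vectors produces the same $\bm y$, contradicting recovery w.p.1. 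Hence $m\geq 2s$.

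For $n\geq 2s$: the same argument is run one level upstream, on $\Psi$, because $\bm y$ depends on $(\bm\theta,\bm g)$ only through $\bm x=\Psi(\bm\theta,\bm g)$, so any non-uniqueness in the Vandermonde decomposition of a fixed $\bm x$ already yields distinct $(\bm\theta,\bm g)$ with the same $\bm y$, independently of $\bm z$ and $m$. If $n<2s$, the fibre-dimension theorem gives $\dim\Psi^{-1}(\Psi(\bm\theta,\bm g))\geq 2s-n\geq 1$ off a proper subvariety, so with probability $1$ the true $\bm x$ admits an uncountable family of decompositions $\bm V(\bm\theta')\bm g'=\bm x$, again contradicting exact recovery; thus $n\geq 2s$.

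The only genuinely delicate point is the bridge between the probabilistic phrasing of \textbf{R1} and this algebraic-geometric machinery: one must confirm that, in the present polynomial setting, ``the failure set has probability $0$ under every continuous distribution'' is equivalent to ``the failure set lies in a proper Zariski-closed set,'' which is exactly what the fibre-dimension theorem delivers. Everything else --- that $\Phi_{\bm z}$ and $\Psi$ are morphisms, and that the residual finite $S_s$ ambiguity cannot restore uniqueness --- is routine; as a sanity check, for $n=m=2s$ the map $\Psi$ is the classical Prony/Carath\'{e}odory parametrization, generically $s!$-to-one onto $\mathbb{C}^{2s}$, so the thresholds are tight.
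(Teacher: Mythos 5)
Your argument is correct, and it is worth noting that the paper itself supplies no proof of this theorem: it simply asserts that ``the proof for the bounds on $n$ and $m$ in \cite{twc2023} holds here as well,'' so there is no in-paper derivation to compare against. Your route --- viewing $\Phi_{\bm z}(\bm\theta,\bm g)=\bm V(\bm z)^T\bm V(\bm\theta)\bm g$ and $\Psi(\bm\theta,\bm g)=\bm V(\bm\theta)\bm g$ as polynomial maps from the irreducible source $\mathbb{C}^{2s}$ and invoking the fibre-dimension theorem --- is a clean, self-contained way to get both necessary conditions at once, and it correctly handles the two points that usually trip people up here: the residual $S_s$ permutation ambiguity (a finite group cannot absorb a positive-dimensional fibre) and the translation between ``recovery w.p.1 under a continuous distribution'' and ``off a proper Zariski-closed set.'' One small simplification: you do not actually need the exceptional set $B$ or the generic exact-dimension statement, since the fibre-dimension theorem already gives the lower bound $\dim f^{-1}(y)\geq \dim X-\dim\overline{\mathrm{im}\,f}$ for \emph{every} nonempty fibre, so \emph{every} parameter pair (not just almost every one) sits in a positive-dimensional fibre when $m<2s$ (respectively $n<2s$), which makes the contradiction with exact recovery immediate. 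The $n\geq 2s$ half is rightly run on $\Psi$ alone, since $\bm y$ factors through $\bm x$, so the obstruction there is independent of the choice and number of samples $\bm z$.
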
 

Denote by $y_j^+ = f^+(z_j)$, $0 \leq j \leq m-1$, the observed measurements which stem from polynomials $v^+(z), \hat{u}^+(z),  \tilde{u}^+(z)$ that correspond to realizations $\bm \theta = \bm \theta^+, \bm g = \bm g^+$.
Furthermore, denote by $\bm v, \hat{\bm u}, \tilde{\bm u}$ the coefficient vectors of polynomials $v(z), \hat{u}(z), \tilde{u}(z)$ with degrees $s, s-1, s-1$, respectively. Consider the set of linear equations
\begin{align}
\label{eq:phase_poly_samples_eqs}
y_j^+v(z_j) - z_j^n\hat{u}(z_j) - \tilde{u}(z_j) &= 0, \quad 0 \leq j \leq m-1 
\end{align}
in $\bm v, \hat{\bm u}, \tilde{\bm u}$. Let $\bm 0_n$ denote the vector of $n$ zeros (in case its dimension is implicit from the context, we will just denote it as $\bm 0$).
The equation system in (\ref{eq:phase_poly_samples_eqs}) corresponds to solving the linear equation system 
\begin{align}
\label{eq:matrix_eqs}
\nonumber \bm A(\bm z)\bm w &= \bm 0 \\
\nonumber &\textnormal{where} \\
\bm A(\bm z) &= \left[\begin{array}{ccccccccc} y_0^+z_0^s  & \ldots & y_0^+ & -z_0^{n+s-1} & \ldots & -z_0^n & -z_0^{s-1} & \ldots & -1 \\ \vdots & & \vdots & \vdots & & \vdots & \vdots & & \vdots\\
y_{m-1}^+z_{m-1}^s & \ldots & y_{m-1}^+ & -z_{m-1}^{n+s-1} & \ldots & -z_{m-1}^n & -z_{m-1}^{s-1} & \ldots & -1 \end{array}\right]
\end{align}
with $\bm A(\bm z)$ being of dimension $m \times (3s+1)$ and $\bm w = [\bm v; \hat{\bm u}; \tilde{\bm u}]$. Assume that $S$ values from $g_0^+,\ldots,g_{s-1}^+$ are non-zero; hence, $v^+(z), \hat{u}^+(z), \tilde{u}^+(z)$ are of degrees $S, S-1, S-1$, respectively. Since $v(z) = v^+(z), \hat{u}(z) = \hat{u}^+(z), \tilde{u}(z) = \tilde{u}^+(z)$ solve (\ref{eq:phase_poly_samples_eqs}), $\bm w = \bm w^+$ with 
\begin{align}
\label{eq:wp_phase}
\bm w^+ = [\bm 0_{s-S}; \bm v^+; \bm 0_{s-S}; \hat{\bm u}^+; \bm 0_{s-S}; \tilde{\bm u}^+]
\end{align}
solves (\ref{eq:matrix_eqs}) and vice versa. This implies that $\bm A(\bm z)$ cannot be full rank and thus it has a null space of at least one dimension. If $\bm A(\bm z)$ has a one-dimensional null space, meaning that all solutions to (\ref{eq:matrix_eqs}) are of the form $\bm w = c\bm w^+$ for any complex scalar $c$, then (\ref{eq:phase_poly_samples_eqs}) has the polynomial solution $v(z) = cv^+(z)$, $\hat{u}(z) = c\hat{u}^+(z)$, $\tilde{u}(z) = c\tilde{u}^+(z)$ and vice versa.

In the special case of $z_j = e^{i2\pi j/n}e^{i\gamma/n}$ and $m \leq n$, i.e., rotated $n$th roots of unity which we for brevity refer to as \textit{shifted harmonics} (or \textit{$\gamma$-shifted harmonics} if needing to explicitly state $\gamma$) since the same $n$th root is used throughout this work, we have $z_j^n = e^{i\gamma}$ so that (\ref{eq:phase_poly_samples_eqs}) becomes
\begin{align}
\label{eq:phase_poly_dft_samples_eqs}
y^+(z_j)v(z_j) - e^{i\gamma}\hat{u}(z_j) - \tilde{u}(z_j) &= 0, \quad 0 \leq j \leq m-1
\end{align}
which corresponds to solving the linear system
\begin{align}
\label{eq:matrix_dft_eqs}
\nonumber \bm B(\bm z)\bm w &= \bm 0 \\
\nonumber &\textnormal{where} \\
\bm B(\bm z) &= \left[\begin{array}{cccccc} y_0^+z_0^s  & \ldots & y_0^+ & -z_0^{s-1} & \ldots & -1 \\ & \vdots & & & \vdots &\\
y_{m-1}^+z_{m-1}^s & \ldots & y_{m-1}^+ & -z_{m-1}^{s-1} & \ldots & -1 \end{array}\right]
\end{align}
with $\bm B(\bm z)$ being of dimension $m \times (2s+1)$ and $\bm w = [\bm v; e^{i\gamma}\hat{\bm u} + \tilde{\bm u}]$. As with (\ref{eq:phase_poly_samples_eqs}) and (\ref{eq:matrix_eqs}), a one-dimensional null space of $\bm B(\bm z)$ gives all solutions to (\ref{eq:matrix_dft_eqs}) as $\bm w = c\bm w^+$ for $c \in \mathbb{C}$ with 
\begin{align}
\label{eq:wp_dft_phase}
\bm w^+ = [\bm 0_{s-S}; \bm v^+; \bm 0_{s-S}; e^{i\gamma}\hat{\bm u}^+ + \tilde{\bm u}^+],
\end{align} 
which is equivalent to the polynomial solution $v(z) = cv^+(z)$, $e^{i\gamma}\hat{u}(z) + \tilde{u}(z) = c(e^{i\gamma}\hat{u}^+(z) + \tilde{u}^+(z))$ of (\ref{eq:phase_poly_dft_samples_eqs}) and vice versa.

Next, we generalize Corollary 2 in \cite{twc2023}.
\begin{theorem}
\label{thm:2k_dft}
Assume that $\bm g^+$ has no zeros. Furthermore, assume that $z_j^n = e^{i\gamma}$ for $0 \leq j \leq 2s-1$ and $(\theta_k^+)^n \neq e^{-i\gamma}$ for $0 \leq k \leq s-1$. Solving (\ref{eq:phase_poly_dft_samples_eqs}) (or equivalently (\ref{eq:matrix_dft_eqs})) results in $v(z) = cv^+(z)$ and $e^{i\gamma}\hat{u}(z) + \tilde{u}(z) = c(e^{i\gamma}\hat{u}^+(z) + \tilde{u}^+(z))$ for some complex scalar $c$.
\end{theorem}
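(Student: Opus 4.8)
The plan is to collapse the linear system (\ref{eq:phase_poly_dft_samples_eqs}) into a single polynomial identity and then read off the conclusion from the root structure of $v^+(z)$. Write $p(z) := e^{i\gamma}\hat{u}(z) + \tilde{u}(z)$ and $p^+(z) := e^{i\gamma}\hat{u}^+(z) + \tilde{u}^+(z)$, both of degree at most $s-1$. Since $z_j^n = e^{i\gamma}$, formula (\ref{eq:f}) together with (\ref{eq:u_exp}) gives $y_j^+ = f^+(z_j) = p^+(z_j)/v^+(z_j)$, so the equations indexed $0 \le j \le 2s-1$ in (\ref{eq:phase_poly_dft_samples_eqs}) become $p^+(z_j)\,v(z_j) = p(z_j)\,v^+(z_j)$, provided $v^+(z_j) \neq 0$. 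The latter is where the first hypothesis is used: a zero of $v^+$ is some $1/\theta^+_l$ by (\ref{eq:v}), and $z_j = 1/\theta^+_l$ would force $(\theta^+_l)^n = z_j^{-n} = e^{-i\gamma}$, which is excluded.

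First I would set $h(z) := p^+(z)\,v(z) - p(z)\,v^+(z)$, a polynomial of degree at most $(s-1)+s = 2s-1$. The shifted harmonics $z_0,\dots,z_{2s-1}$ are distinct (using $n \ge 2s$) and $h$ vanishes at all of them, so $h$ has at least $2s$ roots and hence $h \equiv 0$; that is, $p^+(z)\,v(z) = p(z)\,v^+(z)$ as polynomials.

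Next I would prove that $v^+$ and $p^+$ are coprime. By (\ref{eq:u_exp}) one has $p^+(z) = \sum_{l=0}^{s-1} g^+_l\bigl(e^{i\gamma}(\theta^+_l)^n - 1\bigr)\,t_l(z)$; all coefficients are non-zero because $\bm g^+$ has no zeros and $(\theta^+_l)^n \neq e^{-i\gamma}$, and the $t_l$ are linearly independent by Lemma \ref{lemma:ind_t_pol}, so $p^+ \not\equiv 0$. Evaluating at a root $1/\theta^+_l$ of $v^+$ and using $t_k(1/\theta^+_l)=0$ for $k\neq l$ while $t_l(1/\theta^+_l)\neq 0$ (exactly as in (\ref{lemma_eq:g_exp})), we get $p^+(1/\theta^+_l) = g^+_l\bigl(e^{i\gamma}(\theta^+_l)^n-1\bigr)t_l(1/\theta^+_l) \neq 0$. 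So no root of $v^+$ is a root of $p^+$; since $v^+$ has $s$ distinct roots, $v^+ \mid v$ follows from $p^+ v = p v^+$. As $\deg v \le s = \deg v^+$, this yields $v(z) = c\,v^+(z)$ for a scalar $c$, and cancelling $v^+$ gives $p(z) = c\,p^+(z)$, i.e. $e^{i\gamma}\hat{u}(z) + \tilde{u}(z) = c\bigl(e^{i\gamma}\hat{u}^+(z) + \tilde{u}^+(z)\bigr)$ — equivalently, the null space of $\bm B(\bm z)$ is spanned by the single vector $\bm w^+$ in (\ref{eq:wp_dft_phase}).

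The main obstacle — really the only step beyond bookkeeping — is the coprimality of $v^+$ and $p^+$, since this is precisely where both hypotheses ($\bm g^+$ having no zeros and $(\theta^+_k)^n \neq e^{-i\gamma}$) enter, through Lemma \ref{lemma:ind_t_pol}. One should also keep in mind that the argument implicitly uses $\theta^+_0,\dots,\theta^+_{s-1}$ being distinct, so that $v^+$ indeed has $s$ distinct roots and degree exactly $s$; this is the standing assumption of the paper (if some $\theta^+_l$ coincide, the effective number of frequencies is smaller and one simply reruns the count with that value).
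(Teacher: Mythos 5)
Your proposal is correct and follows essentially the same route as the paper's proof: convert the sampled equations into the polynomial identity $p^+(z)v(z)=p(z)v^+(z)$ via the identity theorem (degree $2s-1$, $2s$ distinct shifted harmonics), show $p^+$ does not vanish at the roots $1/\theta_k^+$ of $v^+$ using $g_k^+\neq 0$ and $(\theta_k^+)^n\neq e^{-i\gamma}$, deduce $v=cv^+$, and cancel. Your write-up is in fact a bit more explicit than the paper's on why $v^+(z_j)\neq 0$ and on the distinctness of the sample points and of the $\theta_k^+$.
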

\begin{proof}
Since $z_j$, $0 \leq j \leq 2s - 1$, are shifted harmonics and $(\theta_k^+)^n \neq e^{-i\gamma}$, it follows that $v^+(z_j) \neq 0$ and $t_{k}((\theta_k^+)^{-1}) \neq 0$ for $0 \leq j \leq 2s - 1$, $0 \leq k \leq s-1$. Inserting $y^+_j = \frac{u^+(z_j)}{v^+(z_j)}$, $0 \leq j \leq 2s-1$, into (\ref{eq:phase_poly_dft_samples_eqs}) and using $v^+(z_j) \neq 0$ we get
\begin{align}
\label{eq_proof:poly_dft_eqs}
\nonumber \frac{u^+(z_j)}{v^+(z_j)}v(z_j) - e^{i\gamma}\hat{u}(z_j) - \tilde{u}(z_j) &= 0, \quad 0 \leq j \leq 2s-1 \\
(e^{i\gamma}\hat{u}(z_j) + \tilde{u}(z_j))v^+(z_j) &= u^+(z_j)v(z_j), \quad 0 \leq j \leq 2s-1.
\end{align}
Note from (\ref{eq:u_exp}) that $u^+(z_j) = e^{i\gamma}\hat{u}^+(z_j) + \tilde{u}^+(z_j)$. Since the polynomials on both sides of the identity in (\ref{eq_proof:poly_dft_eqs}) are of degree $2s-1$, the identity theorem for polynomials implies that
\begin{align}
\label{eq_proof:poly_all_z}
(e^{i\gamma}\hat{u}(z) + \tilde{u}(z))v^+(z) = (e^{i\gamma}\hat{u}^+(z) + \tilde{u}^+(z))v(z), \forall z
\end{align}
The roots of $v(z)$ are $(\theta_0^+)^{-1}, \ldots, (\theta_{s-1}^+)^{-1}$. Inserting $z = (\theta_k^+)^{-1}$, $0 \leq k \leq s-1$, into $e^{i\gamma}\hat{u}(z) + \tilde{u}(z)$ gives $e^{i\gamma}\hat{u}((\theta_k^+)^{-1}) + \tilde{u}((\theta_k^+)^{-1}) = g_k^+t_{k}^+((\theta_k^+)^{-1})(e^{i\gamma}(\theta_k^+)^n - 1)$ which is not 0 since $g_k^+ \neq 0$,  $(\theta_k^+)^n \neq e^{-i\gamma}$ and $t_{k}^+((\theta_k^+)^{-1}) \neq 0$. Hence, from (\ref{eq_proof:poly_all_z}) it follows that the roots of $v(z)$ are roots of $v^+(z)$ and thus $v(z) = cv^+(z)$ for some complex scalar $c$. Inserting this relation back into (\ref{eq_proof:poly_all_z}) and dividing away $v^+(z)$ we obtain $e^{i\gamma}\hat{u}(z) + \tilde{u}(z) = c(e^{i\gamma}\hat{u}^+(z) + \tilde{u}^+(z))$.
\end{proof}
Hence, Theorem \ref{thm:2k_dft} shows that we can recover $\bm \theta^+$ after $2s$ measurements with shifted harmonics; obviously, the theorem also holds if there are more than $2s$ shifted harmonics. Note that there is no constraint on $\theta_k^+, 0 \leq k \leq s-1,$ to lie inside $\mathbb{T}$ as in \cite{twc2023}. For a $\bm z$ satisfying the assumptions of Theorem \ref{thm:2k_dft}, almost all  $\bm \theta^+$ and $\bm g^+$ satisfy the assumptions of Theorem \ref{thm:2k_dft}. Similarly, for any given $\bm \theta^+$ and $\bm g^+$ (with $s$ non-zero elements), the assumptions of Theorem \ref{thm:2k_dft} are satisfied for almost all $\bm z$ with $\gamma$-shifted harmonics as samples (since almost all $\gamma$ satisfy $(\theta_k^+)^n \neq e^{-i\gamma}$, $0 \leq k \leq s-1$).

From Theorem \ref{thm:2k_dft} it is also possible to obtain $\bm g^+$.
\begin{corollary}
\label{corr:unique_g_2s}
Given the realizations $\bm \theta^+$, $\bm g^+$ of $\bm \theta$, $\bm g$, respectively. For almost all vectors $\bm z$ that consist of shifted harmonics, the unique solution to (\ref{eq:phase_msrmt_vandermonde}) is $\bm \theta = \bm \theta^+$, $\bm g = \bm g^+$, and it can be found with $O(\mathrm{poly}(s))$ complexity.
\end{corollary}
\begin{proof}
From Theorem \ref{thm:2k_dft} one obtains $\bm \theta^+$ and $q(z) = e^{i\gamma}\hat{u}(z) + \tilde{u}(z) = c(e^{i\gamma}\hat{u}^+(z) + \tilde{u}^+(z))$. It follows that $q((\theta_k^+)^{-1}) = cg_k^+t_{k}^+((\theta_k^+)^{-1})(e^{i\gamma}(\theta_k^+)^n - 1)$, $0 \leq k \leq s-1$, from which we get $\hat{\bm g} = c\bm g^+$ as
\begin{align*}
\hat{g}_k = cg_k^+ = \frac{q((\theta_k^+)^{-1})}{t_{k}^+((\theta_k^+)^{-1})(e^{i\gamma}(\theta_k^+)^n - 1)}, \quad 0 \leq k \leq s-1.
\end{align*}
It follows trivially from (\ref{eq:phase_msrmt_vandermonde}) that $$\hat{\bm y} = \bm V(\bm z)^T \bm V(\bm \theta^+)\hat{\bm g} = c\bm y^+$$ from which we easily obtain $c$ as $c = \hat{y}_k/y_k^+$ (for any $k$, $0 \leq k \leq m-1$, where $y_k^+ \neq 0$) and thereby $\bm g^+$ as $\bm g^+ = \hat{\bm g}/c$. It is clear that these operations for finding $\bm g^+$ are of $O(\mathrm{poly(s)})$ complexity.
\end{proof}
Clearly, another way to obtain $\bm g^+$ is by applying the pseudo-inverse: $\bm g^+ = \left(\bm V(\bm z)^T \bm V(\bm \theta^+)\right)^{\dagger}\bm y^+$.

To remove the constraint of shifted harmonic samples, we consider the more general equation system in (\ref{eq:phase_poly_samples_eqs}). First, we show the following.
\begin{theorem}
\label{thm:almost_gen_samples}
Given the assumptions in Theorem \ref{thm:2k_dft} and $z_j^n \neq e^{i\gamma}$, $v^+(z_j) \neq 0$ for $2s \leq j \leq 3s-1$ with $\gamma$ from Theorem \ref{thm:2k_dft}. Solving (\ref{eq:phase_poly_samples_eqs}) (or equivalently (\ref{eq:matrix_eqs})) gives that $v(z) = cv^+(z)$, $\hat{u}(z) = c\hat{u}^+(z)$ and $\tilde{u}^+(z) = c\tilde{u}(z)$ for some complex scalar $c$.
\end{theorem}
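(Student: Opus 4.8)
The plan is to bootstrap from Theorem~\ref{thm:2k_dft} and then add $s$ extra interpolation-type constraints. First I would isolate the subsystem of (\ref{eq:phase_poly_samples_eqs}) indexed by $0\le j\le 2s-1$; since $z_j^n=e^{i\gamma}$ on that block, it is precisely the system (\ref{eq:phase_poly_dft_samples_eqs}) of Theorem~\ref{thm:2k_dft}, and by hypothesis $z_0,\dots,z_{2s-1}$ together with $\bm\theta^+,\bm g^+$ satisfy the assumptions of that theorem. Hence already from those $2s$ equations we obtain a complex scalar $c$ with $v(z)=c\,v^+(z)$ and $q(z)=c\,q^+(z)$ as polynomial identities, where I write $q(z):=e^{i\gamma}\hat u(z)+\tilde u(z)$ and $q^+(z):=e^{i\gamma}\hat u^+(z)+\tilde u^+(z)$. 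Being polynomial identities, these hold in particular at the remaining nodes $z_{2s},\dots,z_{3s-1}$.

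Next I would introduce the error polynomials $\hat p(z):=\hat u(z)-c\,\hat u^+(z)$ and $\tilde p(z):=\tilde u(z)-c\,\tilde u^+(z)$, each of degree at most $s-1$ (the coordinates $\hat{\bm u},\tilde{\bm u}$ of $\bm w$ parametrize polynomials of degree $\le s-1$, and $\hat u^+,\tilde u^+$ have degree $s-1$). The relation $q=c\,q^+$ from the first block is exactly $e^{i\gamma}\hat p(z)+\tilde p(z)\equiv 0$, i.e.\ $\tilde p(z)=-e^{i\gamma}\hat p(z)$ for all $z$. So the whole claim reduces to showing $\hat p\equiv 0$, since then $\tilde p\equiv 0$ and the asserted identities $v=c\,v^+$, $\hat u=c\,\hat u^+$, $\tilde u=c\,\tilde u^+$ follow with the \emph{same} $c$ — equivalently, the null space of $\bm A(\bm z)$ in (\ref{eq:matrix_eqs}) is one-dimensional, spanned by $\bm w^+$.

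Then I would use the $s$ supplementary equations of (\ref{eq:phase_poly_samples_eqs}) for $2s\le j\le 3s-1$. In each, substitute $y_j^+=u^+(z_j)/v^+(z_j)$ (legitimate since $v^+(z_j)\neq 0$ by hypothesis), multiply through by $v^+(z_j)$, use $v(z_j)=c\,v^+(z_j)$ and the splitting $u^+(z)=z^n\hat u^+(z)+\tilde u^+(z)$ from (\ref{eq:u_exp}), and divide back by $v^+(z_j)\neq 0$; after cancellation this collapses to $z_j^n\hat p(z_j)+\tilde p(z_j)=0$. Combining with $\tilde p=-e^{i\gamma}\hat p$ gives $(z_j^n-e^{i\gamma})\hat p(z_j)=0$, and since $z_j^n\neq e^{i\gamma}$ on these indices we get $\hat p(z_j)=0$ at the $s$ nodes $z_{2s},\dots,z_{3s-1}$. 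As these nodes are distinct (true for an arbitrary $\bm z$, and in any case assumed of a Vandermonde node vector), a nonzero polynomial of degree $\le s-1$ cannot have $s$ distinct roots, so $\hat p\equiv 0$, completing the argument.

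I do not expect a genuine obstacle here; the substance was already discharged in Theorem~\ref{thm:2k_dft}, and what remains is linear-algebraic bookkeeping. The points requiring a little care are: (i) defining $\hat p,\tilde p$ with the \emph{specific} scalar $c$ delivered by Theorem~\ref{thm:2k_dft}, so that proving $\hat p\equiv\tilde p\equiv 0$ automatically transfers that one scalar to $\hat u$ and $\tilde u$; (ii) confirming the degree bound $\deg\hat p\le s-1$, so that vanishing at $s$ distinct points forces $\hat p\equiv 0$; and (iii) noting that distinctness of the $s$ extra nodes is what makes the final root count conclusive — all of which hold for an arbitrary $\bm z$ and are trivially arranged in general.
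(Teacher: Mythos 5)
Your proposal is correct and follows essentially the same route as the paper: bootstrap $v=cv^+$ and $e^{i\gamma}\hat u+\tilde u=c(e^{i\gamma}\hat u^++\tilde u^+)$ from the first $2s$ shifted-harmonic samples via Theorem~\ref{thm:2k_dft}, then use the remaining $s$ equations to derive $(z_j^n-e^{i\gamma})(\hat u(z_j)-c\hat u^+(z_j))=0$ and conclude by the degree-$(s-1)$ root count. Your error-polynomial notation $\hat p,\tilde p$ and the explicit remark on distinctness of the extra nodes are just cosmetic refinements of the paper's identical argument.
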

\begin{proof}
Inserting $y_j^+ = \frac{u^+(z_j)}{v^+(z_j)}$, $0 \leq j \leq 3s-1$, into (\ref{eq:phase_poly_samples_eqs}) and using $v^+(z_j) \neq 0$, we get 
\begin{align}
\label{lemma:id1}
(z_j^n\hat{u}^+(z_j) + \tilde{u}^+(z_j))v(z_j) = (z_j^n\hat{u}(z_j) + \tilde{u}(z_j))v^+(z_j), \quad 0 \leq j \leq 3s-1.
\end{align}
From Theorem \ref{thm:2k_dft}, it follows that after the $2s$ measurements $z_j = e^{i2\pi j/n}e^{i\gamma/n}$, $0 \leq j \leq 2s-1$, we obtain that $v(z) = cv^+(z)$ and $e^{i\gamma}\hat{u}(z) + \tilde{u}(z) = c(e^{i\gamma}\hat{u}^+(z) + \tilde{u}^+(z))$ hold for some complex scalar $c$. 
Inserting $v(z_j) = cv^+(z_j)$ into (\ref{lemma:id1}) for $j \geq 2s$ and dividing both sides of the equality with $v^+(z_j)$, we get 
\begin{align}
\label{lemma:id2}
\nonumber c(z_j^n\hat{u}^+(z_j) + \tilde{u}^+(z_j)) &= z_j^n\hat{u}(z_j) + \tilde{u}(z_j), \quad 2s \leq j \leq 3s-1 \\
\Rightarrow z_j^n(c\hat{u}^+(z_j) - \hat{u}(z_j)) &= -(c\tilde{u}^+(z_j) - \tilde{u}(z_j)), \quad 2s \leq j \leq 3s-1.
\end{align}
Rearranging the equality $e^{i\gamma}\hat{u}(z) + \tilde{u}(z) = c(e^{i\gamma}\hat{u}^+(z) + \tilde{u}^+(z))$ as $e^{i\gamma}(c\hat{u}^+(z) - \hat{u}(z)) = -(c\tilde{u}^+(z) - \tilde{u}(z))$ and inserting it into (\ref{lemma:id2}), it follows that
\begin{align}
\label{lemma:id3}
(c\hat{u}^+(z_j) - \hat{u}(z_j))(z_j^n - e^{i\gamma}) = 0, \quad 2s \leq j \leq 3s-1.
\end{align}
Utilizing the $s$ samples $z_j^n \neq e^{i\gamma}$, $2s \leq j \leq 3s-1$, and the fact that $\hat{u}^+(z)$, $\hat{u}(z)$ are polynomials of degree $s-1$, it follows that $c\hat{u}^+(z) = \hat{u}(z)$ holds for all $z$. Inserting this equality into $e^{i\gamma}(c\hat{u}^+(z) - \hat{u}(z)) = -(c\tilde{u}^+(z) - \tilde{u}(z))$ it also follows that $c\tilde{u}^+(z) = \tilde{u}(z)$.
\end{proof}
Clearly, if there are more than $3s$ samples, it is enough if $3s$ of them satisfy the assumptions of Theorem \ref{thm:almost_gen_samples} (while the rest can be arbitrary) to obtain $v(z) = cv^+(z)$, $\hat{u}(z) = c\hat{u}^+(z)$ and $\tilde{u}^+(z) = c\tilde{u}(z)$.

For a given $\bm z$ satisfying the assumptions of Theorem \ref{thm:almost_gen_samples}, almost all $\bm \theta^+$ and $\bm g^+$ will satisfy the assumptions of Theorem \ref{thm:almost_gen_samples}. Similarly, for any given $\bm \theta^+$ and $\bm g^+$ (with $s$ non-zero elements), almost all $\bm z$ with $2s$ samples that are $\gamma$-shifted harmonics and additional $s$ arbitrary samples (none being a $\gamma$-shifted harmonic) satisfy the assumptions of Theorem \ref{thm:almost_gen_samples}. Next, we relax the assumptions on $\bm z$.
\begin{theorem}
\label{thm:gen_samples_theta_g}
Assume that $n \geq 2s$ and $m = 3s$. Given any $\bm \theta^+ \in \mathbb{C}^s$ and $\bm g^+ \in \mathbb{C}^s$ (with $s$ non-zero elements), solving (\ref{eq:phase_poly_samples_eqs}) (or equivalently (\ref{eq:matrix_eqs})) gives that $v(z) = cv^+(z)$, $\hat{u}(z) = c\hat{u}^+(z)$ and $\tilde{u}^+(z) = c\tilde{u}(z)$ (for some complex scalar $c$) for almost all $\bm z \in \mathbb{C}^m$. Similarly, given almost any $\bm z \in \mathbb{C}^m$,  solving (\ref{eq:phase_poly_samples_eqs}) (or equivalently (\ref{eq:matrix_eqs})) gives that $v(z) = cv^+(z)$, $\hat{u}(z) = c\hat{u}^+(z)$ and $\tilde{u}^+(z) = c\tilde{u}(z)$ for almost all $\bm \theta^+ \in \mathbb{C}^s$ and $\bm g^+ \in \mathbb{C}^s$ (with $s$ non-zero elements).
\end{theorem}
\begin{proof} 
Herein, we will denote 
$\bm A(\bm z)$ in (\ref{eq:matrix_eqs}) as $\bm A(\bm z, \bm \theta^+, \bm g^+)$ to reflects its dependency on $\bm \theta^+, \bm g^+$ as well.
A one-dimensional null space of $\bm A(\bm z, \bm \theta^+, \bm g^+)$ implies that there exists a $3s \times 3s$ sub-matrix in $\bm A(\bm z, \bm \theta^+, \bm g^+)$ of full rank $3s$; this sub-matrix is obtained by removing a column from $\bm A(\bm z, \bm \theta^+, \bm g^+)$. Let $d_k(\bm z, \bm \theta^+, \bm g^+)$ denote the determinant of the $3s \times 3s$ sub-matrix when removing column $k$, $1 \leq k \leq 3s + 1$, from $\bm A(\bm z, \bm \theta^+, \bm g^+)$. Thus, the statement that $\bm A(\bm z, \bm \theta^+, \bm g^+)$ has a null space of at least two dimensions is equivalent to $d_k(\bm z, \bm \theta^+, \bm g^+) = 0, 1 \leq k \leq 3s + 1$. From $y^+(z_j) = u^+(z_j)/v^+(z_j)$, it follows that each $d_k(\bm z, \bm \theta^+, \bm g^+)$ is a \textit{multivariate rational function} in the variables $\bm z, \bm \theta^+, \bm g^+$.

First, we show that $\bm A(\bm z)$ has a one-dimensional null space for almost all $\bm z$ given a certain realization $\bm \theta^+ = \hat{\bm \theta}^+$ and $\bm g^+ = \hat{\bm g}^+$ (with $s$ non-zero elements). Theorem \ref{thm:almost_gen_samples} shows that for almost all $\bm z$ with $2s$ samples that are $\gamma$-shifted harmonics and $s$ samples that are not $\gamma$-shifted harmonics, $\bm A(\bm z, \hat{\bm \theta}^+, \hat{\bm g}^+)$ has a one dimensional null space; let $\bm z = \hat{\bm z}$ be one such vector. Hence, $d_j(\hat{\bm z}, \hat{\bm \theta}^+, \hat{\bm g}^+) \neq 0$ for some $j$, $1 \leq j \leq 3s + 1$, implying that $d_j(\bm z, \hat{\bm \theta}^+, \hat{\bm g}^+)$ is not zero for all $\bm z$. Thus, the zero set $\mathcal{S}_j = \{\bm z | \bm z \in \mathbb{C}^{3s}, d_j(\bm z, \hat{\bm \theta}^+, \hat{\bm g}^+) = 0\}$ is either finite or a hypersurface of dimension $3s - 1$ in the ambient space $\mathbb{C}^{3s}$. In either case, $\mathcal{S}_j$ has measure zero in $\mathbb{C}^{3s}$, implying that for almost all $\bm z \in \mathbb{C}^{3s}$,  $d_j(\bm z, \hat{\bm \theta}^+, \hat{\bm g}^+) \neq 0$ and thus $\bm A(\bm z, \hat{\bm \theta}^+, \hat{\bm g}^+)$ has a one-dimensional null space. Hence, for any given $\bm \theta^+$ and $\bm g^+$ (with $s$ non-zero elements), $\bm A(\bm z, \bm \theta^+, \bm g^+)$ has a one-dimensional null space for almost all $\bm z \in \mathbb{C}^{3s}$.

We now show that for almost any given $\bm z$, $\bm A(\bm z, \bm \theta^+, \bm g^+)$ has a one-dimensional null space for almost all $\bm \theta^+$ and $\bm g^+$ (with $s$ non-zero elements). Since almost all $\bm z$ do not belong to the set $\mathcal{S}_j$ above, choose any such element $\tilde{\bm z} \not \in \mathcal{S}_j$. Hence, $d_j(\tilde{\bm z}, \hat{\bm \theta}^+, \hat{\bm g}^+) \neq 0$, implying that the multivariate rational function $d_j(\tilde{\bm z}, \bm \theta^+, \bm g^+)$ in the variables $\bm \theta^+, \bm g^+$ is not zero. Therefore, the set $\mathcal{T}_j = \{\bm \theta^+, \bm g^+ | \bm \theta^+ \in \mathbb{C}^{s}, \bm g^+ \in \mathbb{C}^{s}, d_j(\tilde{\bm z}, \bm \theta^+, \bm g^+) = 0\}$ is either finite or is a hypersurface of dimension $2s - 1$ in the ambient space $\mathbb{C}^{2s}$. In both cases, it has measure zero in $\mathbb{C}^{2s}$, implying that $\bm A(\tilde{\bm z}, \bm \theta^+, \bm g^+)$ has a one-dimensional null space for almost all $\bm \theta^+, \bm g^+$. 
\end{proof}
Obviously, Theorem \ref{thm:gen_samples_theta_g} holds also when $m \geq 8s - 3$, since it can be applied to any subset of $8s - 3$ samples. The theorem shows that we can take almost any fixed vector of samples $\bm z$ to get the polynomials $v^+(z), \hat{u}^+(z), \tilde{u}^+(z)$. Similarly, it also shows that $v^+(z), \hat{u}^+(z), \tilde{u}^+(z)$ can be obtained for any given $\bm \theta^+$ and $\bm g^+$ (with $s$ non-zero elements) by choosing almost any vector $\bm z \in \mathbb{C}^m$ of measurement samples. Theorem \ref{thm:gen_samples_theta_g} also gives the following version of Corollary \ref{corr:unique_g_2s}
\begin{corollary}
\label{corr:unique_g_3s}
Given the realizations $\bm \theta^+$, $\bm g^+$ of $\bm \theta$, $\bm g$, respectively. For almost all $\bm z \in \mathbb{C}^m$ with $m \geq 3s$, the unique solution to (\ref{eq:phase_msrmt_vandermonde}) is $\bm \theta = \bm \theta^+$, $\bm g = \bm g^+$ and it can be found with $O(\mathrm{poly}(s))$ complexity.
\end{corollary}
\begin{proof}
Theorem \ref{thm:gen_samples_theta_g} shows that for almost all $\bm z \in \mathbb{C}^m$ with $m \geq 3s$, one obtains $\bm \theta^+$ and $\hat{u}(z) = c\hat{u}^+(z)$. It follows that $\hat{u}((\theta_k^+)^{-1}) = cg_k^+t_{k}^+((\theta_k^+)^{-1})$, $0 \leq k \leq s-1$, from which we get $\hat{\bm g} = c\bm g^+$ as
\begin{align*}
\hat{g}_k = cg_k^+ = \frac{u((\theta_k^+)^{-1})}{t_{k}^+((\theta_k^+)^{-1})}, \quad 0 \leq k \leq s-1.
\end{align*}
It follows trivially from (\ref{eq:phase_msrmt_vandermonde}) that $$\hat{\bm y} = \bm V(\bm z)^T \bm V(\bm \theta^+)\hat{\bm g} = c\bm y^+$$ from which we easily obtain $c$ as $c = \hat{y}_k/y_k^+$ (for any $k$, $0 \leq k \leq m-1$, where $y_k^+ \neq 0$) and thereby $\bm g^+$ as $\bm g^+ = \hat{\bm g}/c$. It is clear that these operations for finding $\bm g^+$ are of $O(\mathrm{poly(s)})$ complexity.
\end{proof}
In the above results, we assumed that $g_k^+ \neq 0$ for $0 \leq k \leq s-1$; i.e., we assumed exact knowledge of the number of non-zero elements in $\bm g^+$. If $g_k^+ = 0$ for some $k$, $\bm A(\bm z)$ and $\bm B(\bm z)$ can have null spaces of more than one dimension. Assume that $S$, $S \leq s$, of the $g_0^+,\ldots,g_{s-1}^+$ are non-zero. One can now iterate through different dimensions of $\bm A(\bm z)$ and $\bm B(\bm z)$, reducing the value of $s$ in each iteration by 1 until a one-dimensional null space is encountered. Since for each such dimension of $\bm A(\bm z)$ and $\bm B(\bm z)$, $\bm w^+$ in (\ref{eq:wp_phase}) and (\ref{eq:wp_dft_phase}) belongs to their null space, respectively, the solution will not be missed and the iteration terminates after at most $s-S+1$ steps (when $s$ is reduced to $S$, giving the unique solution $c\bm w^+$).
Hence, we can formulate the following recovery algorithm  
\par\noindent\rule{\textwidth}{0.4pt}
\textbf{R1-Alg:}
\begin{enumerate}
\item[] \textit{Input}: $s$, $n \geq 2s$, $\bm y^+$, samples $\bm z \in \mathbb{C}^m$ with $m \geq 2s$.
\item[] \textit{Assumptions}: If $2s \leq m < 3s$, then $m \leq n$ and all samples in $\bm z$ are shifted harmonics. If $m \geq 3s$, $\bm z$ can be arbitrary. 
\item[] \textit{Output}: $\bm \theta^+$, $\bm g^+$.
\item If $2s \leq m < 3s$, let $\bm C = \bm B(\bm z)$ from (\ref{eq:matrix_dft_eqs}); otherwise, let $\bm C = \bm A(\bm z)$ from (\ref{eq:matrix_eqs}).
\item Keep reducing the value of $s$ (and thus the dimension of $\bm C$) by one until $\bm C$ has a one-dimensional null space (assume $s = S$ when this occurs for the first time) of the form $c[\bm v^+; \bm q]$ for some complex scalar $c$ and vector $\bm q$. 
\item With vector $c\bm v^+$ construct the polynomial $cv^+(z)$ which has roots $(\theta_j^+)^{-1}$, $0 \leq j \leq S-1$, giving us $\bm \theta^+$.
\item Compute $\bm g^+ = \left(\bm V(\bm z)^T\bm V(\bm\theta^+)\right)^{\dagger}\bm y^+$ or use the procedures outlined in the proofs of Corollary \ref{corr:unique_g_2s} and Corollary \ref{corr:unique_g_3s}. 
\item Output $\bm \theta^+, \bm g^+$.
\end{enumerate}
\par\noindent\rule{\textwidth}{0.4pt}
For $m = O(s)$, it is obvious that the complexity of all these operations is $O(\mathrm{poly}(s))$.
\subsection{Derivation of \textbf{R2}}
Choose $\bm z \in \mathbb{C}^m$ and $\bm \theta \in \mathbb{C}^n$ such that $\bm z$ and any subset of $s$ elements from $\bm \theta$ satisfy assumptions in Theorem \ref{thm:2k_dft} or Theorem \ref{thm:almost_gen_samples}. As seen from these theorems, it is easy to choose such $\bm z$ and $\bm \theta$. Construct $\bm A$ in \textbf{R2} as $\bm A = \bm V(\bm z)^T \bm V(\bm \theta)$. In \textbf{R2}, denote the (at most) $s$ non-zero elements in $\bm x$ as vector $\bm g$ and assume they are on positions $p_0,\ldots,p_{s-1}$. Let $\bm p = [p_0,\ldots,p_{s-1}]$ and $\bm \theta_{\bm p} = [\theta_{p_0},\ldots,\theta_{p_{s-1}}]$. Hence, it follows that $\bm A\bm x = \bm V(\bm z)^T \bm V(\bm \theta)\bm x = \bm V(\bm z)^T \bm V(\bm \theta_{\bm p})\bm g$. Thus, using \textbf{R1} we obtain $\bm \theta_{\bm p}$ (from which we infer the positions $\bm p$) and $\bm g$ from which we easily reconstruct $\bm x$. Essentially, what is done in \textbf{R2} is to first create a grid $\bm \theta = [\theta_0,\ldots,\theta_{n-1}]$ through which $\bm x$ is measured. Thereafter, by using \textbf{R1}, we are able to find the $s$ activated grid elements which gives us the support of $\bm x$ from which we easily recover its elements through a simple matrix inversion. Basically, we have \textit{encoded} the support of $\bm x$ onto the grid elements in $\bm\theta$ so that recovering the support becomes equivalent to recovering the active grid elements. Thus, an algorithm with $O(\mathrm{poly}(s))$ complexity that recovers $\bm x$ in \textbf{R2} can be summarized as
\par\noindent\rule{\textwidth}{0.4pt}
\textbf{R2-Alg:}
\begin{enumerate}
\item[] \textit{Input}: $s$, $n \geq 2s$, $\bm y$, $\bm z \in \mathbb{C}^m$ with $m \geq 2s$, $\bm \theta \in \mathbb{C}^n$, $\bm A = \bm V(\bm z)^T \bm V(\bm \theta)$. 
\item[] \textit{Assumptions}: If $2s \leq m < 3s$, then $m \leq n$ and all samples in $\bm z$ are shifted harmonics; furthermore, $\bm z$ and $\bm \theta$ are chosen such that $\bm z$ and each subset of $s$ samples from $\bm \theta$ satisfy Theorem \ref{thm:2k_dft}. If $m \geq 3s$, $\bm z$ and $\bm \theta$ are chosen such that some $3s$ samples from $\bm z$ and each subset of $s$ samples from $\bm \theta$ satisfy Theorem \ref{thm:almost_gen_samples}.
\item[] \textit{Output}: $\bm x$.  
\item If $2s \leq m < 3s$, let $\bm C = \bm B(\bm z)$ from (\ref{eq:matrix_dft_eqs}); otherwise, let $\bm C = \bm A(\bm z)$ from (\ref{eq:matrix_eqs}). 
\item Keep reducing the value of $s$ (and thus the dimension of $\bm C$) by one until $\bm C$ has a one-dimensional null space (assume $s = S$ when this occurs for the first time) of the form $c[\bm v^+; \bm q]$ for some complex scalar $c$ and vector $\bm q$.
\item From vector $c\bm v^+$ construct the polynomial $cv^+(z)$ which has roots $r_0,\ldots,r_{S-1}$. 
\item Compute $$p_j = \arg \min_k |r_j - \theta_k^{-1}|, \quad 0 \leq j \leq S-1$$ and let $\bm \theta_{\bm p} = [\theta_{p_0},\ldots,\theta_{p_{S-1}}]$.
\item Compute $\bm g = \left(\bm V(\bm z)^T\bm V(\bm \theta_{\bm p})\right)^{\dagger}\bm y$ or use the procedures outlined in the proofs of Corollary \ref{corr:unique_g_2s} and Corollary \ref{corr:unique_g_3s}. 
\item Let $\bm x = \bm 0_n$ and set $x_{p_k} = g_k$, $0 \leq k \leq S-1$. Output $\bm x$.
\end{enumerate}
\par\noindent\rule{\textwidth}{0.4pt}
Since \textbf{R2-Alg} is essentially the same as \textbf{R1-Alg}, the complexity of \textbf{R2-Alg} is also $O(\mathrm{poly}(s))$.
\section{Phase-less Measurements}
\label{sec:phaseless_msrmt}
In this section, we look at phase-less measurements. Related to result \textbf{R2} is
\par\noindent\rule{\textwidth}{0.4pt}
\textbf{R3}: Assume $m \geq 4s$, $n \geq 4s - 1$ and construct $\bm A = [\bm V(\bm z)^T \bm V(\bm \theta);\bm a^T]$ where $\bm z \in \mathbb{T}^m, \bm \theta \in \mathbb{T}^n$ and $\bm a \in \mathbb{C}^n$ is some random vector. For infinitely many choices of $\bm z, \bm \theta$, almost all $\bm x \in \mathbb{C}_{\leq s}^n$ can be exactly recovered from
\begin{align}
\label{eq:phaseless_msrmt_construct}
\bm y = \left|\bm A \bm x\right|^2.
\end{align}
Furthermore, there exists a recovery algorithm such that
\begin{itemize}
\item If $4s \leq m < \min \{n+1, 8s-2\}$, the recovery complexity is $O(\mathrm{poly}(s)2^s)$.
\item If $m \geq 8s - 2$, the recovery complexity is $O(\mathrm{poly}(s))$.
\end{itemize}
\par\noindent\rule{\textwidth}{0.4pt}
This result provides a solution to problem 3.5 in \cite{xu} for \textit{almost all} $\bm x \in \mathbb{C}_{\leq s}^n$. Moreover, the result is significantly better than asked for by problem 3.5 in \cite{xu} since \textbf{R3} achieves $O(\mathrm{poly}(s))$ recovery complexity with $O(s)$ measurements. To obtain \textbf{R3}, we begin by considering the phase-less equivalent of \textbf{R1} with a Fourier matrix
\par\noindent\rule{\textwidth}{0.4pt}
\textbf{R4}: Assume an $n$-dimensional vector $\bm x = \bm F(\bm \theta)\bm g$, where $\bm \theta \in \mathbb{T}^s$, $\bm g \in \mathbb{C}^s$ and $n \geq 4s - 1$. Vector $\bm x$ is being measured with $m$ phase-less linear measurements of the form 
\begin{align}
\label{eq:phaseless_msrmt_fourier}
\bm y = \left|\bm F(\bm z)^T \bm x\right|^2 = \left|\bm F(\bm z)^T \bm F(\bm \theta)\bm g\right|^2
\end{align}
for some $\bm z \in \mathbb{T}^m$.
We have the following results
\begin{itemize}
\item If $4s - 1 \leq m < \min \{n+1, 8s-3\}$ and samples in $\bm z$ are shifted harmonics, $\bm \theta$ and $|\bm g|$ (up to a global real positive scalar) can be exactly recovered with $O(\mathrm{poly}(s))$ complexity and there are $2^{s-1}$ different $\bm g$ satisfying (\ref{eq:phaseless_msrmt_fourier}).
\item If $m \geq 8s-3$ and $\bm z$ is arbitrary, $\bm \theta$ and $|\bm g|$ (up to a global real positive scalar) can be exactly recovered with $O(\mathrm{poly}(s))$ complexity and there are only two different $\bm g$ satisfying (\ref{eq:phaseless_msrmt_fourier}).
\end{itemize}
\par\noindent\rule{\textwidth}{0.4pt}
To derive \textbf{R4} we will start (as with \textbf{R1}) by first recovering $\bm \theta$ from (\ref{eq:phaseless_msrmt_fourier}). However, in contrast to \textbf{R1}, we can no longer recover $\bm g$ with a simple matrix inversion due to the non-linear magnitude operator. After support recovery $\bm \theta$, we are (at first glance) essentially back to a classical phase retrieval problem. Below we will present an approach, similar in nature to the approach in Section \ref{sec:phase_measurements}, from which certain Laurent polynomials are solved for. It turns out that there is enough information about $\bm g$ embedded in these Laurent polynomials for recovering $\bm g$. 

\textbf{R4} shows that there is an ambiguity when measuring as in (\ref{eq:phaseless_msrmt_fourier}) which cannot be resolved by simply taking more measurements (increasing $m$). In order to resolve the ambiguity of which $\bm g$ is the correct one, one typically utilizes side information about the signal (e.g. knowledge of one element in $\bm g$ or $\bm x$) as explained in \cite{phase_ret_fourier} or take one additional measurement with some random measurement vector $\bm a$. Therefore, we have the following reformulation of \textbf{R4}
\par\noindent\rule{\textwidth}{0.4pt}
\textbf{R5}: Assume an $n$-dimensional vector $\bm x = \bm F(\bm \theta)\bm g$, where $\bm \theta \in \mathbb{T}^s$, $\bm g \in \mathbb{C}^s$ and $n \geq 4s - 1$. Vector $\bm x$ is being observed through $m+1$ phase-less linear measurements of the form 
\begin{align}
\label{eq:phase_less_msrmt_vandermonde}
\nonumber \bm y = \left|\bm F(\bm z)^T \bm x\right|^2 = \left|\bm F(\bm z)^T \bm F(\bm \theta)\bm g\right|^2 \\
y_m = |\bm a^T\bm x|^2 =  \left|\bm a^T \bm F(\bm \theta)\bm g\right|^2
\end{align}
for some $\bm z \in \mathbb{T}^m$ where $\bm a \in \mathbb{C}^n$ is a random vector. We have the following results
\begin{itemize}
\item If $4s - 1 \leq m < \min \{n+1, 8s-3\}$ and samples in $\bm z$ are shifted harmonics, $\bm \theta$ and $\bm g$ can be recovered exactly with $O(\mathrm{poly}(s)2^s)$ complexity.
\item If $m \geq 8s-3$ and for almost all $\bm z$, $\bm \theta$ and $\bm g$ can be recovered exactly with $O(\mathrm{poly}(s))$ complexity.
\end{itemize}
\par\noindent\rule{\textwidth}{0.4pt}

\subsection{Derivation of \textbf{R4} and \textbf{R5}}
Sample $y_j$ in \textbf{R4} is now seen as a \textit{phase-less} sample of the rational function $f(z)$ in (\ref{eq:f})
\begin{align}
\label{eq:y_j_rat_sum}
y_j = |f(z_j)|^2 = \left|\frac{z_j^n\hat{u}(z_j) + \tilde{u}(z_j)}{v(z_j)}\right|^2.
\end{align}
Let $\ol{(.)}$ denote the complex conjugate operator.
In order to deal with the non-linear magnitude operation, we express it as 
\begin{align}
\label{eq:y_rat_conj}
y(z) = \left(\frac{z^n\hat{u}(z) + \tilde{u}(z)}{v(z)}\right)\left(\frac{\ol{z^n\hat{u}(z) + \tilde{u}(z)}}{{\bar{v}(z)}}\right) = \left(\frac{z^n\hat{u}(z) + \tilde{u}(z)}{v(z)}\right)\left(\frac{z^{-n}\ol{\hat{u}(z)} + \ol{\tilde{u}(z)}}{{\ol{v(z)}}}\right).
\end{align}
Henceforth, it is assumed that $z \in \mathbb{T}$ and $\bm \theta \in \mathbb{T}^s$. Thus, $\bar{z} = 1/z$ and (\ref{eq:y_rat_conj}) becomes
\begin{align}
\label{eq:y_rat_conj_unit}
y(z) \stackrel{\triangle}{=} \frac{\hat{u}(z)\ol{\hat{u}(z)} + \tilde{u}(z)\ol{\tilde{u}(z)} + z^{-n}\ol{\hat{u}(z)}\tilde{u}(z) + z^n\hat{u}(z)\ol{\tilde{u}(z)}}{v(z)\ol{v(z)}}.
\end{align}
Due to $\ol{z} = 1/z$, the number of unknowns is reduced to just one ($z$) and the numerator and denominator in (\ref{eq:y_rat_conj_unit}) become Laurent polynomials in $z \in \mathbb{T}$.
Defining the Laurent polynomials
\begin{align}
\label{eq:laurent_polys}
\nonumber L(z) &\stackrel{\triangle}{=} \hat{u}(z)\ol{\hat{u}(z)} + \tilde{u}(z)\ol{\tilde{u}(z)} = |\hat{u}(z)|^2 + |\tilde{u}(z)|^2 \\
\nonumber \tilde{L}(z) &\stackrel{\triangle}{=} \hat{u}(z)\ol{\tilde{u}(z)} \\
\hat{L}(z) &\stackrel{\triangle}{=} v(z)\ol{v(z)}
\end{align}
where $L(z)$ and $\tilde{L}(z)$ are of at most degree $s-1$ in $z$ and $1/z$, respectively, while $\hat{L}_{s}(z)$ is of at most degree $s$ in $z$ and $1/z$, we can express $y(z)$ as a ratio of Laurent polynomials
\begin{align}
\label{eq:y_laurent_polys}
y(z) = \frac{L(z) + z^n\tilde{L}(z) + z^{-n}\ol{\tilde{L}(z)}}{\hat{L}(z)}.
\end{align}
Two properties of the introduced Laurent polynomials, important for this work, are given next.

\begin{prop}
\label{prop:laurent_poly_equiv}
A Laurent polynomial $P(z)$ with smallest degree $-s$ and largest degree $t$ can be expressed, for $z \neq 0$, as $\frac{1}{z^s}Q(z)$ where $Q(z)$ is a complex polynomial of degree $t+s$; thus, $P(z)$ and $Q(z)$ have the same non-zero roots. 
\end{prop}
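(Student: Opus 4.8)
The plan is to make the substitution explicit by clearing the negative powers. First I would write $P(z)$ in its canonical form $P(z) = \sum_{k=-s}^{t} c_k z^k$, where the hypothesis that $-s$ is the smallest and $t$ the largest degree means precisely that $c_{-s} \neq 0$ and $c_t \neq 0$ (note also that $t + s \geq 0$, since the smallest degree cannot exceed the largest). The non-vanishing of $c_t$ is what will pin down the degree of $Q$.

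Next I would multiply through by $z^s$, which is a legitimate manipulation of the formal expression and, for $z \neq 0$, an identity of functions. Setting
\begin{align*}
Q(z) \stackrel{\triangle}{=} z^s P(z) = \sum_{k=-s}^{t} c_k z^{k+s} = \sum_{j=0}^{t+s} c_{j-s} z^j,
\end{align*}
one sees that $Q$ involves only non-negative powers of $z$, hence is a genuine complex polynomial; its leading coefficient is $c_t \neq 0$, so $\deg Q = t+s$ (and incidentally $Q(0) = c_{-s} \neq 0$). By construction $P(z) = z^{-s} Q(z)$ for every $z \neq 0$.

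Finally, for the root statement I would observe that for $z \neq 0$ the factor $z^{-s}$ is a nonzero complex number, so $P(z) = 0$ if and only if $Q(z) = 0$; thus $P$ and $Q$ share exactly the same nonzero roots (with the same multiplicities, since multiplication by the unit $z^{-s}$ does not affect the order of vanishing at a nonzero point). The only care required is the bookkeeping of the index shift $j = k+s$ and the remark that the degree bounds in the hypothesis force the extreme coefficients to be nonzero; there is no genuine obstacle here — this proposition is essentially a normalization lemma used to transfer statements about Laurent polynomials in~(\ref{eq:laurent_polys})–(\ref{eq:y_laurent_polys}) to statements about ordinary polynomials.
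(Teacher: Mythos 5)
Your proposal is correct and follows essentially the same route as the paper's proof: write $P(z)=\sum_{k=-s}^{t}c_k z^k$, multiply by $z^s$, and reindex to obtain $Q(z)=\sum_{j=0}^{t+s}c_{j-s}z^j$. You supply slightly more detail than the paper (noting $c_t\neq 0$ pins down $\deg Q$ and that $z^{-s}$ is a unit away from the origin, so nonzero roots and their multiplicities agree), but the argument is the same.
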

\begin{proof}
Write $P(z) = \sum_{k=-s}^t p_kz^k = \frac{1}{z^s}\sum_{k=0}^{s+t} p_{k-s}z^k = \frac{1}{z^s}Q(z)$ where $Q(z) = \sum_{k=0}^{s+t} p_{k-s}z^k$.
\end{proof}

\begin{prop} 
\label{prop:laurent_roots}
Given two complex polynomials $p(z)$ and $r(z)$ with non-zero roots $p_0, \ldots, p_{k-1}$ and $r_0,\ldots,r_{l-1}$, respectively. Compute the Laurent polynomial $Q(z) = p(z)\ol{r(z)}$ with the assumption that $z \in \mathbb{T}$. It follows that the roots of $Q(z)$ are $p_0,\ldots,p_{k-1}$ and $1/\ol{r_0},\ldots,1/\ol{r_{l-1}}$. 
\end{prop}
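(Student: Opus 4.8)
The plan is to reduce everything to the factored forms of $p$ and $r$ and then track what the operation ``conjugate the coefficients and substitute $\overline z = 1/z$'' does to each linear factor. I would begin by writing $p(z) = a\prod_{j=0}^{k-1}(z-p_j)$ and $r(z) = b\prod_{i=0}^{l-1}(z-r_i)$ with nonzero leading coefficients $a,b$; the hypothesis that no $p_j$ and no $r_i$ vanishes is precisely what makes the following manipulation legitimate.

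Next I would compute $\overline{r(z)}$ on the unit circle. Recalling from the context of (\ref{eq:y_rat_conj}) that the bar denotes conjugating the coefficients of $r$ and using $\overline z = 1/z$, we get $\overline{r(z)} = \overline{b}\prod_{i=0}^{l-1}(1/z-\overline{r_i})$. Factoring $1/z$ out of each of the $l$ factors and then $-\overline{r_i}$ out of the $i$-th one yields $\overline{r(z)} = (c/z^l)\prod_{i=0}^{l-1}(z-1/\overline{r_i})$ with $c = \overline{b}\prod_i(-\overline{r_i}) \neq 0$; that is, $\overline{r(z)}$ is a nonzero constant times $z^{-l}$ times the ordinary polynomial whose roots are the $1/\overline{r_i}$. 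Here the nonzero-root hypothesis on $r$ is used twice: to pull out $-\overline{r_i}$ cleanly and to know $c\neq 0$.

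Multiplying through, $Q(z) = p(z)\overline{r(z)} = (ac/z^l)\prod_{j=0}^{k-1}(z-p_j)\prod_{i=0}^{l-1}(z-1/\overline{r_i})$, and since $ac/z^l$ never vanishes for $z\neq 0$, the roots of $Q$ are exactly $p_0,\ldots,p_{k-1}$ and $1/\overline{r_0},\ldots,1/\overline{r_{l-1}}$. If a statement in the language of Proposition~\ref{prop:laurent_poly_equiv} is wanted, I would note that $Q$ has largest degree $k$ and smallest degree $-l$ (its $z^{-l}$-coefficient is $\overline{b}\,p(0)\neq 0$, since $p$ has no zero root), so $Q(z)=z^{-l}\tilde Q(z)$ with $\tilde Q(z)=ac\prod_{j}(z-p_j)\prod_i(z-1/\overline{r_i})$ an honest polynomial of degree $k+l$ carrying exactly those roots.

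I do not anticipate a genuine obstacle here; the proof is essentially a one-line factorization once the conventions are pinned down. The only things requiring care are the precise meaning of $\overline{r(z)}$ as a Laurent polynomial on $\mathbb{T}$, consistent with the use of the bar in (\ref{eq:y_rat_conj}), and the repeated and essential use of the assumption that $p$ and $r$ have no zero roots, which keeps the division by $\overline{r_i}$ and the extracted constants well behaved.
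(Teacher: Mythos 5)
Your proof is correct and follows essentially the same route as the paper's: write $\ol{r(z)}$ on $\mathbb{T}$ using $\ol{z}=1/z$, pull a factor of $z^{-l}$ and the nonzero constants $-\ol{r_i}$ out of each linear factor, and read off the roots $1/\ol{r_i}$ from the resulting polynomial. The only differences are cosmetic (you carry the leading coefficients explicitly and flag where the nonzero-root hypothesis is used), so nothing further is needed.
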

\begin{proof}
The assumption $z \in \mathbb{T}$ gives that $\ol{r(z)}$ can equivalently be expressed as $\ol{r(z)} = \prod_{j=0}^{l-1}\ol{(z-r_j)} = \prod_{j=0}^{l-1}\left(\frac{1}{z}-\ol{r_j}\right) = \frac{1}{z^l}\prod_{j=0}^{l-1}\left(1-z\ol{r_j}\right) = \frac{1}{z^l}\hat{r}(z)$ where $\hat{r}(z)$ has $1/\ol{r_0},\ldots,1/\ol{r_{l-1}}$ as roots. Hence, we can write $Q(z)$ as $Q(z) = \frac{1}{z^l}p(z)\hat{r}(z)$ which completes the proof.
\end{proof}


Proposition \ref{prop:laurent_roots} shows that the roots of $\hat{L}(z)$, $1/\theta_k = \ol{\theta_k}$, $0 \leq k \leq s-1$, appear with multiplicity two. 

When $n > 2s-2$, the Laurent polynomial in the numerator of (\ref{eq:y_laurent_polys}) contains at most $3(2s-1) = 6s-3$ parameters which is independent of $n$. Furthermore, $\hat{L}(z)$ contains at most $2s+1$ parameters. Hence, in total, $y(z)$ is described by $6s-3 + 2s + 1 = 8s - 2$ parameters. Thus, as in \cite{twc2023}, we can hope that $8s - 3$ values of $z$ from $\mathbb{T}$ are enough for recovering the $s$ values in $\bm \theta$ that identify the support of $\bm x$ (up to a scalar, due to one less measurement than number of unknowns, which is irrelevant for support identification).

Let $\bm l, \tilde{\bm l}, \hat{\bm l}$ denote the coefficient vectors of the Laurent polynomials $L(z), \tilde{L}(z), \hat{L}(z)$ with degrees $s, s-1, s-1$ in $z$ and $1/z$, respectively. Assume that $y^+_j = f^+(z_j)$, $0\leq j\leq m-1$, are the observed measurements which are produced by Laurent polynomials $L^+(z), \tilde{L}^+(z), \hat{L}^+(z)$ corresponding to realizations $\bm g = \bm g^+$ and $\bm \theta = \bm \theta^+$. Consider the equation system 
\begin{align}
\label{eq:sampled_poly_eqs}
y^+_j\hat{L}(z_j) - L(z_j) - z_j^n\tilde{L}(z_j) - z_j^{-n}\ol{\tilde{L}(z_j)} = 0, \quad 0 \leq j \leq m-1
\end{align}
in the vectors $\bm l, \tilde{\bm l}, \hat{\bm l}$. Equation system (\ref{eq:sampled_poly_eqs}) corresponds to the following linear equation system
\begin{align}
\label{eq:matrix_eqs_phaseless}
\nonumber \bm G(\bm z)\bm w &= \bm 0 \\
\nonumber &\textnormal{where} \\
\nonumber \bm G(\bm z) &= \left[\begin{array}{cccccc} \bm B(\bm z) & \bm y^+ & \mathrm{fliplr}\{\ol{\bm B(\bm z)}\} & -\bm C(\bm z) & -\bm 1 & -\mathrm{fliplr}\{\ol{\bm C(\bm z)}\} \end{array}\right] \\
\nonumber \bm B(\bm z) &= \left[\begin{array}{cccc} y_0^+z_0^s & y_0^+z_0^{s-1} & \ldots & y_0^+z_0 \\  & \vdots & \vdots & \\
y^+_{m-1}z_{m-1}^s & y^+_{m-1}z_{m-1}^{s-1} & \ldots & y^+_{m-1}z_{m-1} \end{array}\right] \\
\bm C(\bm z) &= \left[\begin{array}{cccccccc} z_0^{n+s-1} & z_0^{n+s-2} & \ldots & z_0^{n-s+1} & z_0^{s-1} & z_0^{s-2} & \ldots & z_0 \\  & \vdots & & \vdots & & \vdots & \\
z_{m-1}^{n+s-1} & z_{m-1}^{n+s-2} & \ldots & z_{m-1}^{n-s+1} & z_{m-1}^{s-1} & z_{m-1}^{s-2} & \ldots & z_{m-1}
 \end{array}\right].
\end{align}
where the operator $\mathrm{fliplr}\{.\}$ flips a matrix vertically (i.e., flips its columns), $\bm G(\bm z)$ has dimensions $m\times (8s-2)$ and $\bm w = [\hat{\bm l}; \tilde{\bm l}; \bm l; \ol{\tilde{\bm l}}]$. Assume that $S$ values from $g_0^+,\ldots,g_{s-1}^+$ are non-zero. Since $\hat{L}(z) = \hat{L}^+(z), L(z) = L^+(z), \tilde{L}(z) = \tilde{L}^+(z)$ solves (\ref{eq:sampled_poly_eqs}), 
\begin{align}
\label{eq:w+}
\bm w^+ = [\bm 0_{s-S};\hat{\bm l}^+;\bm 0_{s-S}; \bm 0_{s-S}; \tilde{\bm l}^+; \bm 0_{s-S}; \bm 0_{s-S}; \bm l^+; \bm 0_{s-S}; \bm 0_{s-S}; \ol{\tilde{\bm l}^+}; \bm 0_{s-S}]
\end{align}
solves (\ref{eq:matrix_eqs_phaseless}) and vice versa. This implies that $\bm G(\bm z)$ cannot be full rank and therefore has a null space of at least one dimension. If $\bm G(\bm z)$ has a one-dimensional null space, meaning that (\ref{eq:matrix_eqs_phaseless}) has the solution $\bm w = c\bm w^+$ for any real positive scalar $c$, then (\ref{eq:sampled_poly_eqs}) has the unique solution $\hat{L}(z) = c\hat{L}^+(z), L(z) = cL^+(z), \tilde{L}(z) = c\tilde{L}^+(z)$ and vice versa. 

In the special case of $z_j = e^{i2\pi j/n}e^{i\gamma/n}$ and $m \leq n$, (\ref{eq:sampled_poly_eqs}) reduces to 
\begin{align}
\label{eq:sampled_dft_poly_eqs}
y^+_j\hat{L}(z_j) - L(z_j) - e^{i\gamma/n}\tilde{L}(z_j) - e^{-i\gamma/n}\ol{\tilde{L}(z_j)} = 0, \quad  z_j = e^{i2\pi j/n}e^{i\gamma/n}, 0 \leq j \leq m-1
\end{align}
which corresponds to the following linear equation system
\begin{align}
\label{eq:matrix_dft_eqs_phaseless}
\nonumber \tilde{\bm G}(\bm z)\bm w &= \bm 0 \\
\nonumber &\textnormal{where} \\
\nonumber \tilde{\bm G}(\bm z) &= \left[\begin{array}{cccccc} \bm B(\bm z) & \bm y^+ & \mathrm{fliplr}\{\ol{\bm B(\bm z)}\} & -\tilde{\bm C}(\bm z) & -\bm 1 & -\mathrm{fliplr}\{\ol{\tilde{\bm C}(\bm z)}\} \end{array}\right] \\
\nonumber \bm B(\bm z) &= \left[\begin{array}{cccc} y_0^+z_0^s & y_0^+z_0^{s-1} & \ldots & y_0^+z_0 \\  & \vdots & \vdots & \\
y^+_{m-1}z_{m-1}^s & y^+_{m-1}z_{m-1}^{s-1} & \ldots & y^+_{m-1}z_{m-1} \end{array}\right] \\
\tilde{\bm C}(\bm z) &= \left[\begin{array}{cccc} z_0^{s-1} & z_0^{s-2} & \ldots & z_0 \\  & & \vdots & \\
z_{m-1}^{s-1} & z_{m-1}^{s-2} & \ldots & z_{m-1}
 \end{array}\right].
\end{align}
where $\tilde{\bm G}(\bm z)$ has dimensions $m \times 4s$ and $\bm w = [\hat{\bm l}; \bm l + e^{i\gamma/n}\tilde{\bm l} + e^{-i\gamma/n}\ol{\tilde{\bm l}}]$.
As with (\ref{eq:sampled_poly_eqs}) and (\ref{eq:matrix_eqs_phaseless}), a one-dimensional null space of $\tilde{\bm G}(\bm z)$ gives all solutions to (\ref{eq:matrix_dft_eqs_phaseless}) as $\bm w = c\bm w^+$, where 
\begin{align}
\label{eq:w+_dft}
\bm w^+ = [\bm 0_{s-S}; \hat{\bm l}; \bm 0_{s-S}; \bm 0_{s-S}; \bm l + e^{i\gamma/n}\tilde{\bm l} + e^{-i\gamma/n}\ol{\tilde{\bm l}}; \bm 0_{s-S}],
\end{align} for any real positive scalar $c$, which is equivalent to the solution $\hat{L}(z) = c\hat{L}^+(z)$, $L(z) + e^{i\gamma/n}\tilde{L}(z) + e^{-i\gamma/n}\ol{\tilde{L}(z)} = c\left(L^+(z) + e^{i\gamma/n}\tilde{L}^+(z) + e^{-i\gamma/n}\ol{\tilde{L}(z)^+}\right)$ of (\ref{eq:sampled_dft_poly_eqs}) and vice versa.

Before proceeding to solve (\ref{eq:sampled_poly_eqs}) and (\ref{eq:sampled_dft_poly_eqs}), we show that from each solution to these equations we can construct another solution to the equations. 
\begin{theorem}
\label{thm:dual_sol}
Assume that $\theta_i \in \mathbb{T}$, $0 \leq i \leq s-1$, are given. If $\bm g^a$ is a solution to (\ref{eq:sampled_poly_eqs}) and (\ref{eq:sampled_dft_poly_eqs}), then $\bm g^b$ with $g_l^b = \ol{g_l^a}\theta_l^{-n}\left(\prod_{\substack{i=0 \\ i\neq l}}^{s-1}\ol{\theta_i}\right)$, $0 \leq l \leq s-1$, is also a solution to (\ref{eq:sampled_poly_eqs}) and (\ref{eq:sampled_dft_poly_eqs}).
\end{theorem}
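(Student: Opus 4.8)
The plan is to reduce the claim to the single cleaner fact that $|f_{\bm g^b}(z)| = |f_{\bm g^a}(z)|$ for every $z \in \mathbb{T}$, where $f_{\bm g}$ denotes the rational function of (\ref{eq:f}) formed from the fixed $\bm \theta$ and a coefficient vector $\bm g$. This suffices because the denominator $\hat L(z) = v(z)\ol{v(z)}$ in (\ref{eq:y_laurent_polys}) depends only on $\bm \theta$ and hence is the same for $\bm g^a$ and $\bm g^b$, while the numerator Laurent polynomial of (\ref{eq:y_laurent_polys}) equals $|u(z)|^2 = |f_{\bm g}(z)|^2\,|v(z)|^2$ for $z\in\mathbb{T}$ (here $u(z) = f_{\bm g}(z)v(z)$ as polynomials), and is therefore also the same for $\bm g^a$ and $\bm g^b$ once the two moduli coincide. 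Consequently the left-hand side of (\ref{eq:sampled_poly_eqs}), and of its shifted-harmonic specialization (\ref{eq:sampled_dft_poly_eqs}), is literally the same expression for $\bm g^a$ and for $\bm g^b$; since it vanishes at every $z_j$ for $\bm g^a$ by hypothesis, it vanishes at every $z_j$ for $\bm g^b$ as well. I would also note in passing that $g_l^b$ is a scalar multiple of $\ol{g_l^a}$, so $\bm g^a$ and $\bm g^b$ share the same zero pattern and the reduction to an effective sparsity $S$ is untouched.

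To establish the displayed identity I would first rewrite $\ol{f_{\bm g}(z)}$ on the unit circle. Fixing $z\in\mathbb{T}$ so that $\ol z = 1/z$ and $\ol{\theta_l} = 1/\theta_l$, and using the elementary identity $\frac{w^{-n}-1}{w^{-1}-1} = w^{1-n}\,\frac{w^{n}-1}{w-1}$ with $w = z\theta_l$, one gets
\begin{align}
\ol{f_{\bm g}(z)} = \sum_{l=0}^{s-1}\ol{g_l}\,\frac{(z\theta_l)^{-n}-1}{(z\theta_l)^{-1}-1} = z^{1-n}\sum_{l=0}^{s-1}\ol{g_l}\,\theta_l^{1-n}\,\frac{(z\theta_l)^{n}-1}{z\theta_l-1}.
\end{align}
Next I would simplify the coefficient rule of the theorem: since $|\theta_l| = 1$ we have $\prod_{i\neq l}\ol{\theta_i} = \theta_l\prod_{i=0}^{s-1}\ol{\theta_i}$, so with $P = \prod_{i=0}^{s-1}\theta_i$ the rule reads $g_l^b = \ol{g_l^a}\,\theta_l^{1-n}\,\ol{P}$. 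Inserting this into the definition of $f_{\bm g^b}$ and comparing with the previous display gives, for all $z\in\mathbb{T}$,
\begin{align}
f_{\bm g^b}(z) = \ol{P}\,z^{\,n-1}\,\ol{f_{\bm g^a}(z)}.
\end{align}
Because $|P| = \prod_{i}|\theta_i| = 1$ and $|z^{n-1}| = 1$ on $\mathbb{T}$, this yields $|f_{\bm g^b}(z)| = |f_{\bm g^a}(z)|$ for every $z\in\mathbb{T}$, which is exactly what the first paragraph needs.

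I expect the only delicate part to be bookkeeping: verifying the rational identity for $\frac{w^{-n}-1}{w^{-1}-1}$, carrying the power $\theta_l^{1-n}$ and the global factor $z^{n-1}$ correctly through the sum, and checking that the scalar $\ol P$ has modulus exactly $1$, so that $\bm g^a$ and $\bm g^b$ produce exactly (not merely proportionally) the same measurements. It is also worth recording explicitly that the decomposition $|u(z)|^2 = L(z) + z^n\tilde L(z) + z^{-n}\ol{\tilde L(z)}$ is an identity of Laurent polynomials on $\mathbb{T}$, so that equality of $|f_{\bm g}|$ on $\mathbb{T}$ really does force equality of the full left-hand sides of (\ref{eq:sampled_poly_eqs}) and (\ref{eq:sampled_dft_poly_eqs}) at the samples $z_j$; beyond that there is no real obstacle.
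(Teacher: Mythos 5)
Your proof is correct, and it takes a genuinely different route from the paper's. The paper works at the level of the polynomial pair $(\hat{u},\tilde{u})$: it observes that the swap $\hat{u}^c=\ol{\tilde{u}}$, $\tilde{u}^c=\ol{\hat{u}}$ leaves $L(z)=|\hat{u}|^2+|\tilde{u}|^2$, $\tilde{L}(z)=\hat{u}\ol{\tilde{u}}$ and $\hat{L}(z)$ \emph{individually} unchanged, and then expands $\hat{u}^c,\tilde{u}^c$ in the basis $t_l(z)$ to read off the new coefficient vector $\bm g^b$ (up to the unimodular common factor $(-1)^{s-1}z^{-(s-1)}$, which cancels in the products). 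You instead work with the single rational function $f$ and establish the identity $f_{\bm g^b}(z)=\ol{P}\,z^{n-1}\,\ol{f_{\bm g^a}(z)}$ on $\mathbb{T}$ with $|\ol{P}z^{n-1}|=1$, so the magnitudes, hence the numerator combination $L(z)+z^n\tilde{L}(z)+z^{-n}\ol{\tilde{L}(z)}=|u(z)|^2$ and therefore the left-hand sides of (\ref{eq:sampled_poly_eqs}) and (\ref{eq:sampled_dft_poly_eqs}), coincide at every sample; your remarks that the zero pattern of $\bm g$ (and hence $v$ and $\hat{L}$) is preserved, and that the prefactor is exactly unimodular, close the only loose ends. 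Your version is more economical for the theorem as stated, since it only needs equality of the combined numerator rather than of $L$ and $\tilde{L}$ separately; the paper's version buys the stronger invariance of $L$ and $\tilde{L}$ individually, which is what later lets it identify $\bm g^b$ as the conjugate-reflected root choice in the factorization $\tilde{L}=\hat{u}\ol{\tilde{u}}$ (the ``dual'' used in \textbf{R5-Alg-General}). If you wanted your argument to feed into that later use, you would need to add the one extra line showing $L$ and $\tilde{L}$ are separately preserved, which your computation already essentially contains.
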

\begin{proof}
With $\bm g^a$ as a solution to (\ref{eq:sampled_poly_eqs}) and (\ref{eq:sampled_dft_poly_eqs}), the coefficients of the corresponding $\hat{u}(z), \tilde{u}(z)$ (denoted as $\hat{u}^a(z), \tilde{u}^a(z)$) are $g_k = g_k^a$. From the expressions of $L(z), \tilde{L}(z)$ in (\ref{eq:laurent_polys}), it follows that if we let $\hat{u}^b(z) = \ol{\tilde{u}^a(z)}$ and $\tilde{u}^b(z) = \ol{\hat{u}^a(z)}$, then these polynomials also satisfy $L(z) = |\hat{u}^b(z)|^2 + |\tilde{u}^b(z)|^2$ and $\tilde{L}(z) = \hat{u}^b(z)\ol{\tilde{u}^b(z)}$. Since $\hat{L}(z)$ is given and due to the assumption that $\theta_i$, $0 \leq i \leq s-1$, are given, $\hat{u}^b(z)$ and $\tilde{u}^b(z)$ are also a solution to (\ref{eq:sampled_poly_eqs}) and (\ref{eq:sampled_dft_poly_eqs}). Next, we find the expressions for $\hat{u}^b(z), \tilde{u}^b(z)$ from which we infer $g_k^b$. The definitions in (\ref{eq:u_exp}) give
\begin{align}
\nonumber \hat{u}^b(z) &= -\sum_{l=0}^{s-1}\ol{g^a_lt_{l}(z)} =
-\sum_{l=0}^{s-1}\ol{g^a_l}\prod_{\substack{i=0 \\ i\neq l}}^{s-1}(\bar{z}\ol{\theta_i}-1) \\
&= -\sum_{l=0}^{s-1}\ol{g^a_l}\prod_{\substack{i=0 \\ i\neq l}}^{s-1}\left(\frac{1}{z\theta_i}-1\right) \label{eqeq1} \\
&= -\frac{(-1)^{s-1}}{z^{s-1}}\sum_{l=0}^{s-1}\left(\ol{g^a_l}\prod_{\substack{i=0 \\ i\neq l}}^{s-1}\ol{\theta_i}\right)\prod_{\substack{i=0 \\ i\neq l}}^{s-1}\left(z\theta_i - 1\right) \label{eqeq2} \\
\nonumber &= -\frac{(-1)^{s-1}}{z^{s-1}}\sum_{l=0}^{s-1}\left(\ol{g^a_l}\theta_l^{-n}\prod_{\substack{i=0 \\ i\neq l}}^{s-1}\ol{\theta_i}\right)\theta_l^nt_{l}(z).
\end{align}
Equality (\ref{eqeq1}) holds because $z, \theta_i \in \mathbb{T}$ while (\ref{eqeq2}) follows from extracting the term $\frac{-1}{z\theta_i}$ from the product in (\ref{eqeq1}) and using the fact that $\frac{1}{\theta_i} = \ol{\theta_i}$. Similarly, from (\ref{eq:u_exp}), we have 
\begin{align*}
\nonumber \tilde{u}^b(z) &= \sum_{l=0}^{s-1}\ol{g^a_l}\theta_l^{-n}\ol{t_{l}(z)}
= \sum_{l=0}^{s-1}\ol{g^a_l}\theta_l^{-n}\prod_{\substack{i=0 \\ i\neq l}}^{s-1}(\bar{z}\ol{\theta_i}-1) \\
&= \sum_{l=0}^{s-1}\ol{g^a_l}\theta_l^{-n}\prod_{\substack{i=0 \\ i\neq l}}^{s-1}\left(\frac{1}{z\theta_i}-1\right)\\
\nonumber &= \frac{(-1)^{s-1}}{z^{s-1}}\sum_{l=0}^{s-1}\left(\ol{g^a_l}\theta_l^{-n}\prod_{\substack{i=0 \\ i\neq l}}^{s-1}\ol{\theta_i}\right)\prod_{\substack{i=0 \\ i\neq l}}^{s-1}\left(z\theta_i - 1\right) \\
&= \frac{(-1)^{s-1}}{z^{s-1}}\sum_{l=0}^{s-1}\left(\ol{g^a_l}\theta_l^{-n}\prod_{\substack{i=0 \\ i\neq l}}^{s-1}\ol{\theta_i}\right)t_{l}(z).
\end{align*}
Hence, $\hat{u}^b(z)$ and $\tilde{u}^b(z)$ are of the same form as $\frac{(-1)^{s-1}}{z^{s-1}}\hat{u}^a(z)$ and $\frac{(-1)^{s-1}}{z^{s-1}}\tilde{u}^a(z)$, respectively, from where it is also seen that $\hat{u}^b(z)$ and $\tilde{u}^b(z)$ produce the same Laurent polynomials as $\hat{u}^a(z), \tilde{u}^a(z)$ when $z \in \mathbb{T}$.
\end{proof}
From the expression of $\bm g^b$ in Theorem \ref{thm:dual_sol}, we get
\begin{align}
\label{eq:dual_dual_rel}
\ol{g_l^b}\theta_l^{-n}\left(\prod_{\substack{i=0 \\ i\neq l}}^{s-1}\bar{\theta}_i\right) = g_l^a\theta_l^n\theta_l^{-n}\left(\prod_{\substack{i=0 \\ i\neq l}}^{s-1}\theta_i\right)\left(\prod_{\substack{i=0 \\ i\neq l}}^{s-1}\ol{\theta_i}\right) = g_l^a
\end{align}
where the second equality holds due to $\theta_i \in \mathbb{T}$, $0 \leq i \leq s-1$. Hence, $\bm g^a$ can be obtained from $\bm g^b$ with the same transform from Theorem \ref{thm:dual_sol} that produces $\bm g^b$ from $\bm g^a$. Therefore, we will refer to $\bm g^b$ as being the \textit{dual} of $\bm g^a$ and vice versa. Clearly, the solutions to (\ref{eq:sampled_poly_eqs}) and (\ref{eq:sampled_dft_poly_eqs}) can be paired as $(\bm g^a, \bm g^b)$, with $\bm g^a$ and $\bm g^b$ being duals of each other. Inspired by the observations in \cite{eldar}, one can suspect that the Z-transforms of the corresponding solutions $\bm x^a = \bm F(\bm \theta)\bm g^a$ and $\bm x^b = \bm F(\bm \theta)\bm g^b$ are related. Indeed, there is a relation and it is revealed by the following corollary
\begin{corollary}
\label{corr:dual_sol}
Let $\bm r^a$ be the $n-1$ roots of $X^a(z)$, where $X^a(z) = x_0^a + x_1^az + \ldots x_{n-1}^az^{n-1}$ is the Z-transform of $\bm x^a = \bm F(\bm \theta)\bm g^a$ with $\bm g^a$ from Theorem \ref{thm:dual_sol}; in the same way we define $\bm r^b$ with $\bm g^b$ from Theorem \ref{thm:dual_sol}. It holds that  $\bm r^b = 1/\ol{\bm r^a}$ (up to a permutation), where the division is applied element-wise.  
\end{corollary}
\begin{proof}
From (\ref{eq:f}) it follows that $f(z) = X(z) = x_0 + x_1z + \ldots x_{n-1}z^{n-1}$ equals the Z-transform of $\bm x = \bm F(\bm \theta)\bm g$. The proof of Theorem \ref{thm:dual_sol} shows that $\tilde{u}^b(z) = \ol{\hat{u}^a(z)}$ and $\hat{u}^b(z) = \ol{\tilde{u}^a(z)}$, implying that for $\bm z \in \mathbb{T}$,
\begin{align}
\label{corr_eq:x_b_trans}
X^b(z) &= \frac{z^n\hat{u}^b(z) + \tilde{u}^b(z)}{v(z)} = \frac{z^n\ol{\tilde{u}^a(z)} + \ol{\hat{u}^a(z)}}{v(z)} = \frac{z^n\left(\ol{\tilde{u}^a(z)} + z^{-n}\ol{\hat{u}^a(z)}\right)}{v(z)}.
\end{align}
The roots $\bm r^b$ of $X^b(z)$ are roots of $z^n\hat{u}^b(z) + \tilde{u}^b(z)$ and the roots $\bm r^a$ of $X^a(z)$ are roots of $z^n\hat{u}^a(z) + \tilde{u}^a(z)$.
Equation (\ref{corr_eq:x_b_trans}) shows that $z^n\hat{u}^b(z) + \tilde{u}^b(z) = z^n\ol{z^n\hat{u}^a(z) + \tilde{u}^a(z)}$ for $z \in \mathbb{T}$ and by using
Proposition \ref{prop:laurent_roots} it follows that the non-zero roots of $X^b(z)$ equal $1/\ol{\bm r^a}$.
\end{proof}
Hence, the roots of the Z-transforms $X^a(z), X^b(z)$ are reflections of each other with respect to $\mathbb T$, which are regarded as "trivial" solutions in \cite{eldar}. 

We start by providing a solution to (\ref{eq:matrix_dft_eqs_phaseless}). As described, we first investigate the minimum number of samples needed to recover the support $\bm \theta^+$.
\begin{theorem}
\label{thm:dft_msrmt_support}
Assume that $n \geq 4s - 1$, $z_j^n = e^{i\gamma}$ for $0 \leq j \leq 4s - 2$,  $g_k^+ \neq 0$, $\theta_k^+ \in \mathbb{T}$ and $(\theta_k^+)^n \neq e^{-i\gamma}$ for $0 \leq k \leq s-1$. Solving (\ref{eq:sampled_dft_poly_eqs}) (or equivalently (\ref{eq:matrix_dft_eqs_phaseless})) results in $\hat{L}(z) = c\hat{L}^+(z), L(z) + z^n\tilde{L}(z) + z^{-n}\ol{\tilde{L}}(z) = c\left(L^+(z) + z^n\tilde{L}^+(z) + z^{-n}\ol{\tilde{L}^+}(z)\right)$ for some real positive scalar $c$.
\end{theorem}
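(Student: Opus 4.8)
The plan is to mimic the structure of the proof of Theorem~\ref{thm:2k_dft}, working now with the ratio of Laurent polynomials in (\ref{eq:y_laurent_polys}) instead of the ratio of ordinary polynomials in (\ref{eq:f}). First I would insert $y_j^+ = \frac{L^+(z_j) + z_j^n\tilde{L}^+(z_j) + z_j^{-n}\ol{\tilde{L}^+(z_j)}}{\hat{L}^+(z_j)}$ into (\ref{eq:sampled_dft_poly_eqs}) and use the fact that, under the hypotheses, $\hat{L}^+(z_j) = v^+(z_j)\ol{v^+(z_j)} \neq 0$ for all $4s-1$ sample points (since the $z_j$ are shifted harmonics and $(\theta_k^+)^n \neq e^{-i\gamma}$ forces $v^+(z_j) = \prod_k(z_j\theta_k^+ - 1) \neq 0$). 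Clearing denominators yields, for $0 \le j \le 4s-2$,
\begin{align}
\label{eq:prop-cleared}
\bigl(L(z_j) + z_j^n\tilde{L}(z_j) + z_j^{-n}\ol{\tilde{L}(z_j)}\bigr)\hat{L}^+(z_j) = \bigl(L^+(z_j) + z_j^n\tilde{L}^+(z_j) + z_j^{-n}\ol{\tilde{L}^+(z_j)}\bigr)\hat{L}(z_j).
\end{align}
The next step is to pass from the sampled identities (\ref{eq:prop-cleared}) to an identity of Laurent polynomials valid for all $z \in \mathbb{T}$. Using Proposition~\ref{prop:laurent_poly_equiv}, multiply both sides of (\ref{eq:prop-cleared}) by an appropriate power of $z_j$ so that both sides become ordinary polynomials; counting degrees, $L + z^n\tilde{L} + z^{-n}\ol{\tilde{L}}$ has span from $-n-(s-1)$ to $n+(s-1)$ and $\hat{L}$ has span $[-s,s]$, so after clearing negative powers each side is an ordinary polynomial of degree at most $2n+4s-2$. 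But the numerators simplify: the $z^n$ and $z^{-n}$ terms are killed by the shifted-harmonic structure, since $z_j^n = e^{i\gamma}$, which is exactly why the sampled equation (\ref{eq:sampled_dft_poly_eqs}) only involves the combination $L(z) + e^{i\gamma/n}\tilde L(z) + e^{-i\gamma/n}\ol{\tilde L(z)}$ rather than the full numerator — so really the relevant polynomials have span roughly $[-s,s]$ on both sides, giving degree at most $4s-2$ after clearing. With $4s-1$ distinct sample points and both sides of degree at most $4s-2$, the identity theorem for polynomials upgrades (\ref{eq:prop-cleared}) to an identity for all $z$:
\begin{align}
\label{eq:prop-identity}
\bigl(L(z) + z^n\tilde{L}(z) + z^{-n}\ol{\tilde{L}(z)}\bigr)\hat{L}^+(z) = \bigl(L^+(z) + z^n\tilde{L}^+(z) + z^{-n}\ol{\tilde{L}^+(z)}\bigr)\hat{L}(z), \quad \forall z \in \mathbb{T}.
\end{align}

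The final step is the root-divisibility argument. The Laurent polynomial $\hat{L}^+(z) = v^+(z)\ol{v^+(z)}$ has, by Proposition~\ref{prop:laurent_roots}, roots at $(\theta_k^+)^{-1}$ each with multiplicity two. Evaluating the numerator $L^+(z) + z^n\tilde{L}^+(z) + z^{-n}\ol{\tilde{L}^+(z)}$ at $z = (\theta_k^+)^{-1}$: using $u^+(z) = z^n\hat u^+(z) + \tilde u^+(z)$ and the Laurent identity $u^+(z)\ol{u^+(z)} = L^+(z) + z^n\tilde{L}^+(z) + z^{-n}\ol{\tilde{L}^+(z)}$ on $\mathbb{T}$, this equals $|u^+((\theta_k^+)^{-1})|^2 = |g_k^+ t_k^+((\theta_k^+)^{-1})((\theta_k^+)^n - e^{-i\gamma}\cdot\text{(something)})|^2$; more carefully, $u^+((\theta_k^+)^{-1}) = g_k^+((\theta_k^+)^{-1}\theta_k^+)^n\cdots$ — I would compute that it is nonzero precisely because $g_k^+ \neq 0$, $t_k^+((\theta_k^+)^{-1}) \neq 0$, and $(\theta_k^+)^n \neq e^{-i\gamma}$ (the last ensuring the numerator of $f^+$ doesn't vanish at the pole cancellation point). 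Hence the right-hand side of (\ref{eq:prop-identity}) has the double roots $(\theta_k^+)^{-1}$ coming entirely from $\hat L(z)$, so $\hat L^+(z) \mid \hat L(z)$ as Laurent polynomials (matching degrees and leading behavior), forcing $\hat L(z) = c\hat L^+(z)$ for a scalar $c$; since $\hat L$ and $\hat L^+$ are both nonnegative on $\mathbb{T}$, $c$ is real. Substituting back into (\ref{eq:prop-identity}) and cancelling $\hat L^+(z)$ gives the claimed relation for the numerator combination.

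The main obstacle I anticipate is the careful degree bookkeeping in the second step: one must verify that the $z^{\pm n}$ terms genuinely collapse under the shifted-harmonic substitution so that the effective polynomial degrees are $O(s)$ rather than $O(n)$ — otherwise $4s-1$ samples would be far too few. This is the analogue of the step in Theorem~\ref{thm:2k_dft} where $2s$ samples sufficed, and it hinges entirely on $z_j^n = e^{i\gamma}$ being constant across all samples. A secondary subtlety is confirming that the nonvanishing conditions on $u^+$ at the doubled roots $(\theta_k^+)^{-1}$ really do follow from the stated hypotheses $g_k^+ \neq 0$ and $(\theta_k^+)^n \neq e^{-i\gamma}$, which requires tracking how the pole of $f$ at $z = (\theta_k^+)^{-1}$ interacts with the numerator evaluated there.
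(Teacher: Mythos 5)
Your overall strategy matches the paper's (clear denominators, use the shifted-harmonic structure to collapse the degree to $O(s)$, apply the identity theorem, then a root-divisibility argument on $\hat L^+$), but the write-up breaks down at exactly the delicate point. First, a bookkeeping slip with consequences: the identity theorem can only be applied \emph{after} substituting $z_j^n = e^{i\gamma}$, so the polynomial identity that holds for all $z$ is
\begin{align*}
\left(L^+(z) + e^{i\gamma}\tilde{L}^+(z) + e^{-i\gamma}\ol{\tilde{L}^+(z)}\right)\hat{L}(z) = \left(L(z) + e^{i\gamma}\tilde{L}(z) + e^{-i\gamma}\ol{\tilde{L}(z)}\right)\hat{L}^+(z),
\end{align*}
with the \emph{frozen} constant $e^{i\gamma}$, not your (\ref{eq:prop-identity}) with $z^{\pm n}$ retained (that version has degree $O(n)$ and does not follow from $4s-1$ samples, nor does it hold off the sample set).

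This matters because your root-evaluation step then evaluates the wrong object. On $\mathbb{T}$ one has $L^+(z) + z^n\tilde L^+(z) + z^{-n}\ol{\tilde L^+(z)} = |u^+(z)|^2$, and $u^+(\ol{\theta_k^+}) = 0$ \emph{always}: from (\ref{eq:u}), the $k$th term carries the factor $(z\theta_k^+)^n - 1$, which vanishes at $z = (\theta_k^+)^{-1}$, and every other term carries $t_l^+$, which also vanishes there. (This is just the statement that the pole of $f^+$ at $(\theta_k^+)^{-1}$ is removable.) So your parenthetical claim that $(\theta_k^+)^n \neq e^{-i\gamma}$ ``ensures the numerator of $f^+$ doesn't vanish at the pole cancellation point'' is false, and with $|u^+|^2$ as the numerator the divisibility argument collapses: both sides of the identity would vanish to order two at $\ol{\theta_k^+}$ for trivial reasons and you could conclude nothing about $\hat L$. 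The correct argument evaluates the frozen numerator, which factors as $\left|e^{i\gamma}\hat u^+(z) + \tilde u^+(z)\right|^2$, and at $z = \ol{\theta_k^+}$ equals $\left|g_k^+\, t_k^+(\ol{\theta_k^+})\,(e^{i\gamma}(\theta_k^+)^n - 1)\right|^2$; this is nonzero precisely because $g_k^+ \neq 0$, $t_k^+(\ol{\theta_k^+}) \neq 0$, and $(\theta_k^+)^n \neq e^{-i\gamma}$ — which is the only place that hypothesis is used. With that substitution your argument goes through as in the paper: the double roots of $\hat L^+$ must then be roots of $\hat L$, giving $\hat L = c\hat L^+$ (real $c$ by nonnegativity on $\mathbb{T}$), and back-substitution yields the numerator relation.
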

\begin{proof}
Inserting $y^+(z_j) = \left(L^+(z_j) + z_j^n\tilde{L}^+(z_j) + z_j^{-n}\ol{\tilde{L}^+(z_j)}\right)/\hat{L}^+(z_j)$ into (\ref{eq:sampled_poly_eqs}) and multiplying both sides of the equation with $\hat{L}^+(z_j)$ results in the equation system
\begin{align}
\label{eq:laurent_poly_sys}
\left(L^+(z_j) + z_j^n\tilde{L}^+(z_j) + z_j^{-n}\ol{\tilde{L}^+(z_j)}\right)\hat{L}(z_j) - \left(L(z_j) + z_j^n\tilde{L}(z_j) + z_j^{-n}\ol{\tilde{L}(z_j)}\right)\hat{L}^+(z_j) &= 0
\end{align}
which holds for all $z_j$ such that $\hat{L}^+(z_j) \neq 0$. Since $z_j^n = e^{i\gamma}$ and $(\theta_k^+)^n \neq e^{-i\gamma}$ for $0 \leq j \leq 4s-2$, $0 \leq k \leq s-1$, it follows that $\hat{L}^+(z_j) \neq 0$ for $0 \leq j \leq 4s - 2$.
Using the fact that $z_j^n = e^{i\gamma}$ for $0 \leq j \leq 4s-2$, (\ref{eq:laurent_poly_sys}) gives
\begin{align}
\label{eq:laurent_poly_sys_dft}
\nonumber &\left(L^+(z_j) + e^{i\gamma}\tilde{L}^+(z_j) + e^{-i\gamma}\ol{\tilde{L}^+(z_j)}\right)\hat{L}(z_j) \\
&- (L(z_j) + e^{i\gamma}\tilde{L}(z_j) + e^{-i\gamma}\ol{\tilde{L}(z_j)})\hat{L}^+(z_j) = 0, 
 \quad 0 \leq j \leq 4s - 2. 
\end{align}
Since $z_j \neq 0$, it follows from Proposition \ref{prop:laurent_poly_equiv} that equation (\ref{eq:laurent_poly_sys_dft}) is equivalent to a \textit{polynomial} equation in $z_j$ of degree $4s-2$. Applying the identity theorem for polynomials to (\ref{eq:laurent_poly_sys_dft}) gives that the relation 
\begin{align}
\label{eq:laurent_poly_rel}
\left(L^+(z) + e^{i\gamma}\tilde{L}^+(z) + e^{-i\gamma}\ol{\tilde{L}^+(z)}\right)\hat{L}(z) =
\left(L(z) + e^{i\gamma}\tilde{L}(z) + e^{-i\gamma}\ol{\tilde{L}(z)}\right)\hat{L}^+(z)
\end{align}
holds for \textit{all} $z$. Proposition \ref{prop:laurent_roots} shows that the roots of $\hat{L}^+(z)$ are $1/\theta_k^+$, $0 \leq k \leq s-1$, where each one of them has multiplicity two. These roots are not roots of $L^+(z) + e^{i\gamma}\tilde{L}^+(z) + e^{-i\gamma}\ol{\tilde{L}^+(z)}$. Assume the contrary. Noting that $L^+(z) + e^{i\gamma}\tilde{L}^+(z) + e^{-i\gamma}\ol{\tilde{L}^+(z)} = \left|e^{i\gamma}\hat{u}^+(z) + \tilde{u}^+(z)\right|^2$
implies that the roots of $\hat{L}^+(z)$ must be roots of $e^{i\gamma}\hat{u}^+(z) + \tilde{u}^+(z)$. It follows from (\ref{eq:u_exp}) that $$e^{i\gamma}\hat{u}^+\left(\frac{1}{\theta_k^+}\right) + \tilde{u}^+\left(\frac{1}{\theta_k^+}\right) = g_k^+t_{k}^+\left(\frac{1}{\theta_k^+}\right)\left(e^{i\gamma}(\theta_k^+)^n - 1\right).$$ Since $(\theta_k^+)^n \neq e^{-i\gamma}$ and $g_k \neq 0$, it follows that $t_{k}^+\left(\frac{1}{\theta_k^+}\right) \neq 0$ and thus $g_k^+t_{k}^+\left(\frac{1}{\theta_k^+}\right) \neq 0$, which implies that the roots of $\hat{L}^+(z)$ are not roots of $L^+(z) + e^{i\gamma}\tilde{L}^+(z) + e^{-i\gamma}\ol{\tilde{L}^+(z)}$, a contradiction. Hence, the roots of $\hat{L}^+(z)$ are roots of $\hat{L}(z)$ which implies that $\hat{L}(z) = c\hat{L}^+(z)$ for some real positive scalar $c$.

Inserting $\hat{L}(z) = c\hat{L}^+(z)$ into (\ref{eq:laurent_poly_rel}) and dividing out $\hat{L}^+(z)$, we conclude that $$c\left(L^+(z) + e^{i\gamma}\tilde{L}^+(z) + e^{-i\gamma}\ol{\tilde{L}^+(z)}\right) = L(z) + e^{i\gamma}\tilde{L}(z) + e^{-i\gamma}\ol{\tilde{L}(z)}.$$
\end{proof}
From the result in Theorem \ref{thm:dft_msrmt_support}, it follows that when $g_k^+ \neq 0$, $0 \leq k \leq s-1$, we can find $\hat{L}^+(z)$ (up to a real positive scalar $c$) after $4s - 1$ measurements with shifted harmonic samples from which we can recover the support $\bm \theta^+$ (below we will deal with the case when $S$, $S \leq s$, of $g_0^+,\ldots,g_{s-1}^+$ are non-zero). Furthermore, from these measurements, we also obtain $L^+(z) + e^{i\gamma}\tilde{L}^+(z) + e^{-i\gamma}\ol{\tilde{L}^+(z)} = \left|e^{i\gamma}\hat{u}^+(z) + \tilde{u}^+(z)\right|^2$ (up to a real positive scalar $c$). Given samples that are shifted harmonics, almost all $\bm \theta^+ \in \mathbb{T}^s$ and $\bm g^+ \in \mathbb{C}^s$ (with $s$ non-zero elements) satisfy the assumptions of Theorem \ref{thm:dft_msrmt_support}. Similarly, for any $\bm \theta^+ \in \mathbb{T}^s$ and $\bm g^+ \in \mathbb{C}^s$ (with $s$ non-zero elements), almost all $\gamma$-shifted harmonics satisfy the assumptions of Theorem \ref{thm:dft_msrmt_support}. 

From Theorem \ref{thm:dft_msrmt_support} one can find $|\bm g^+|$ up to a real positive scalar.
\begin{corollary}
\label{corr:unique_g_abs_4s}
Given the realizations $\bm \theta^+$, $\bm g^+$ of $\bm \theta$, $\bm g$, respectively. For almost all vectors $\bm z$ that consist of at least $4s-1$ shifted harmonics, $\bm \theta^+$ and $|\bm g^+|$ can be found, up to a real positive scalar, with $O(\mathrm{poly}(s))$ complexity.
\end{corollary}
\begin{proof}
From Theorem \ref{thm:dft_msrmt_support} one obtains $\bm \theta^+$ and $$q(z) = L(z) + z^n\tilde{L}(z) + z^{-n}\ol{\tilde{L}(z)} = c\left(L^+(z) + z^n\tilde{L}^+(z) + z^{-n}\ol{\tilde{L}^+(z)}\right) = c|e^{i\gamma}\hat{u}^+(z) + \tilde{u}^+(z)|^2$$ for some unknown real positive scalar $c$.
Since $(\theta_k^+)^{-1} = \ol{\theta_k^+}$ and $$q(\ol{\theta_k^+}) = c\left|e^{i\gamma}\hat{u}^+(\ol{\theta_k^+}) + \tilde{u}^+(\ol{\theta_k^+})\right|^2 = c\left|g_k^+t_{k}^+\left(\ol{\theta_k^+}\right)\left(e^{i\gamma}(\theta_k^+)^n - 1\right)\right|^2,$$ it follows that $$c|g_k^+|^2 = \frac{q(\ol{\theta_k^+})}{\left|t_{k}^+\left(\ol{\theta_k^+}\right)\left(e^{i\gamma}(\theta_k^+)^n - 1\right)\right|^2}.$$ Clearly, the operations leading to $c|\bm g^+|^2$ are of $O(\mathrm{poly}(s))$ complexity, showing that $\sqrt{c}|\bm g^+|$ can be found with $O(\mathrm{poly}(s))$ complexity.
\end{proof}
A direct application of this proposition is that we can find the \textit{relative} magnitudes of the different frequency components with $4s - 1$ measurements and $O(\mathrm{poly}(s))$ complexity.

When it comes to recovering $\bm g^+$, the situation is trickier. Given $\left|e^{i\gamma}\hat{u}^+(z) + \tilde{u}^+(z)\right|^2$, up to a real positive scalar, Proposition \ref{prop:laurent_roots} implies that the roots of $\left|e^{i\gamma}\hat{u}^+(z) + \tilde{u}^+(z)\right|^2$ appear as conjugate pairs: if $r$ is a root of $\left|e^{i\gamma}\hat{u}^+(z) + \tilde{u}^+(z)\right|^2$ then so is $1/\ol{r}$ as well. Let $(r_0, 1/\ol{r_0}),\ldots,(r_{s-2}, 1/\ol{r_{s-2}})$ be the $s-1$ conjugate root pairs. As seen from (\ref{eq:u_exp}), $e^{i\gamma}\hat{u}^+(z) + \tilde{u}^+(z)$ is of the same form as $\tilde{u}^+(z)$ in (\ref{eq:u_exp}) but with coefficients $(e^{i\gamma}(\theta_k^+)^n - 1)g_k^+$, $0 \leq k \leq s-1$. Hence, for almost all $\bm g^+$, Lemma \ref{lemma:u_tilde_distinct_roots} implies that $(r_0, 1/\ol{r_0}),\ldots,(r_{s-2}, 1/\ol{r_{s-2}})$ are distinct.
Only one member from each pair can be a root of $e^{i\gamma}\hat{u}^+(z) + \tilde{u}^+(z)$. Since we do not know beforehand which member of each pair is the actual root of $e^{i\gamma}\hat{u}^+(z) + \tilde{u}^+(z)$,  we need to go through all the $2^{s-1}$ possibilities. Denote by $\bm q^k = [q_0^k,\ldots,q_{s-2}^k]$ the $k$:th possibility, meaning that $q_j^k$ is either $r_j$ or $1/\ol{r_j}$ for $0 \leq j \leq s-2$, $0 \leq k \leq 2^{s-1}-1$. For each $\bm q^k$, we get an equation system in $\bm g$
\begin{align}
\label{eq:root_eq_sys}
e^{i\gamma}\tilde{u}(q_j^k) + \hat{u}(q_j^k) = \sum_{\substack{l=0}}^{s-1}g_lt_{l}^+(q_j^k)(e^{i\gamma}(\theta_l^+)^n  - 1) = 0, \quad 0 \leq j \leq s-2.
\end{align}
The equation system in (\ref{eq:root_eq_sys}) with $s-1$ equations represents the zeros of a polynomial of degree $s-1$ and thus it has a unique solution $\bm g^k$ up to a complex scalar. Each $\bm g^k$, with proper normalization, is a solution to (\ref{eq:sampled_dft_poly_eqs}), i.e., $\bm y^+ = |\bm V^T(\bm z)\bm V(\bm \theta^+)\bm g^k|^2$. Therefore, after $4s-1$ measurements as in \textbf{R4} (with shifted harmonic samples) there are $2^{s-1}$ solutions to \textbf{R4}. To resolve the ambiguity of which $\bm g^k$ equals $\bm g^+$, we need an additional measurement - this is the purpose of the random vector $\bm a$ in \textbf{R5}. With probability 1, only one of $\bm g^k$, $0 \leq k \leq 2^{s-1}-1$, will satisfy $y_m^+ = |\bm a^T\bm V(\bm \theta^+)\bm g^k|^2$, resolving the ambiguity and thus giving us $\bm g^+$. 

In the derivations above, we assumed that $g_k^+ \neq 0$, $0 \leq k \leq s-1$; i.e., we assumed exact knowledge of the number of non-zero elements. If $g_k^+ = 0$ for some $k$, $\tilde{\bm G}(\bm z)$ in (\ref{eq:matrix_dft_eqs_phaseless}) can have a null space of more than one dimension. Assume that $S$, $S \leq s$, of the $g_0^+,\ldots,g_{s-1}^+$ are non-zero. One can now iterate through different realizations of $\tilde{\bm G}(\bm z)$, reducing the value of $s$ in each step by 1 until a one-dimensional null space is encountered. Since for each such realization of $\tilde{\bm G}(\bm z)$, $\bm w^+$ in (\ref{eq:w+_dft}) belongs to its null space, the solution will not be missed and the iteration terminates after at most $s-S+1$ steps (when $s$ is reduced to $S$, giving the unique solution $c\bm w^+$ as implied by the derivations above).

Thus, an algorithm for producing the $2^{s-1}$ solutions to \textbf{R4} and that gives the exponential recovery complexity in \textbf{R5} is
\par\noindent\rule{\textwidth}{0.5pt}
\textbf{R5-Alg-Harmonic:}
\begin{enumerate}[label=\arabic*.]
\item[] \textit{Input}: $s, n \geq 4s-1, \bm y^+$, a random vector $\bm a \in \mathbb{C}^n$ and samples $\bm z \in \mathbb{T}^m$ where $4s-1 \leq m < \min\{n+1,8s-3\}$.
\item[] \textit{Assumptions}: All samples in $\bm z$ are shifted harmonics.
\item[] \textit{Output}: $\bm \theta^+, e^{i\alpha}\bm g^+$ for some unknown $\alpha$
\item Construct the $m \times 4s$ matrix $\tilde{\bm G}(\bm z)$ in (\ref{eq:matrix_dft_eqs_phaseless}).
\item Keep reducing the value of $s$ (and thus the dimension of $\tilde{\bm G}(\bm z)$) by one until $\tilde{\bm G}(\bm z)$ has a one-dimensional null space (assume $s = S$ when this occurs for the fist time). Let $\bm w$ denote a vector from its one-dimensional null space.
\item Construct the Laurent polynomial $\hat{L}(z) = \sum_{k=0}^{2S}w_kz^{k-S}$. The conjugate of the unique roots of $\hat{L}(z)$ equal $\bm \theta^+$. $|\bm g^+|$ can now be computed (up to a global real positive scalar) from Corollary \ref{corr:unique_g_abs_4s}.
\item Construct the Laurent polynomial $Q(z) = \sum_{k=0}^{2S-2}w_{k+2S+1}z^{k-S+1}$.
\item The $2S-2$ roots of $Q(z)$ can be paired as $(r_0, 1/\ol{r_0}),\ldots,(r_{S-2},1/\ol{r_{S-2}})$. From each pair, choose one element. This choice can be done in $2^{S-1}$ different ways and let $\bm q^k$, $0 \leq k \leq 2^{S-1}-1$, denote the $k$:th choice.
\item For each $\bm q^k$, solve (\ref{eq:root_eq_sys}) with $s = S$ and let $\bm g^k$ denote the solution (up to a complex scalar). Normalize each $\bm g^k$ so that $\bm y^+ = |\bm V^T(\bm z)\bm V(\bm \theta^+)\bm g^k|^2$
\item Let $\bm g^+ = \arg \min_{\bm g \in \{\bm g^0,\ldots,\bm g^{2^{S-1}-1}\}} |y_m - |\bm a^T\bm V(\bm \theta^+)\bm g|^2|$. Output $\bm g^+$ and $\bm \theta^+$.
\end{enumerate}
\par\noindent\rule{\textwidth}{0.4pt}
From the theoretical results above, it follows that \textbf{R5-Alg-Harmonic} recovers almost all $\bm \theta^+$ and $\bm g^+$. Steps 1-6 produce all solutions to \textbf{R4} while step 7 finds the correct $\bm g^+$ and thereby gives the result in \textbf{R5}. The complexity of \textbf{R5-Alg-Harmonic} is $O(\mathrm{poly}(s)2^{s})$. Up until step 6, the total complexity is $O(\mathrm{poly}(s))$. Solving (\ref{eq:root_eq_sys}) has complexity $O(\mathrm{poly}(s))$ but it is repeated $2^{s-1}$ times in step 6. It is possible to utilize Theorem \ref{thm:dual_sol} to directly generate the dual solution for each $\bm g^k$ and thereby reduce the number of times (\ref{eq:root_eq_sys}) is solved by half, but this still results in complexity $O(\mathrm{poly}(s)2^{s})$. Fortunately, it turns out that by taking $4s-2$ additional measurements, the number of possible solutions can be reduced from $2^{s-1}$ to only two. 
\begin{theorem}
\label{thm:dft_msrmt_unique}
Given the assumptions in Theorem \ref{thm:dft_msrmt_support}, further assume that $z_j^n = e^{i\omega}$, $\hat{L}^+(z_j) \neq 0$ for $4s-1 \leq j \leq 6s-3$ and some $\omega$ satisfying $e^{i\gamma} \neq e^{i\omega}$ (with $\gamma$ from Theorem \ref{thm:dft_msrmt_support}); $z_j^n = e^{i\phi}$, $\hat{L}^+(z_j) \neq 0$ for $6s - 2 \leq j \leq 8s - 4$ and some $\phi$ satisfying $e^{i\phi} \neq e^{i\gamma}$ and $(e^{i\phi} - e^{i\gamma})(e^{-i\omega} - e^{-i\gamma}) \neq (e^{-i\phi} - e^{-i\gamma})$. Solving (\ref{eq:sampled_poly_eqs}) (or equivalently (\ref{eq:matrix_eqs_phaseless})) results in $\hat{L}(z) = c\hat{L}^+(z), L(z) = cL^+(z), \tilde{L}(z) = c\tilde{L}^+(z)$ for some real positive scalar $c$.
\end{theorem}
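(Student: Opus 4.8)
\medskip
\noindent\textbf{Proof plan.}
My plan is to run Theorem~\ref{thm:dft_msrmt_support} on the first block of samples and then harvest the two remaining unknowns ($L$ and $\tilde L$ \emph{separately}, rather than just the combination $L+e^{i\gamma}\tilde L+e^{-i\gamma}\ol{\tilde L}$) from the two extra blocks. Start from an arbitrary solution $(\hat L,L,\tilde L)$ of (\ref{eq:sampled_poly_eqs}). Restricted to the indices $0\le j\le 4s-2$, where $z_j^n=e^{i\gamma}$, equation (\ref{eq:sampled_poly_eqs}) is exactly (\ref{eq:sampled_dft_poly_eqs}), and the standing hypotheses include those of Theorem~\ref{thm:dft_msrmt_support}; that theorem therefore yields $\hat L(z)=c\hat L^+(z)$ for some real $c$, together with the Laurent-polynomial identity $L(z)+e^{i\gamma}\tilde L(z)+e^{-i\gamma}\ol{\tilde L(z)}=c\big(L^+(z)+e^{i\gamma}\tilde L^+(z)+e^{-i\gamma}\ol{\tilde L^+(z)}\big)$.

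Next I would process the block $4s-1\le j\le 6s-3$, on which $z_j^n=e^{i\omega}$ and $\hat L^+(z_j)\neq 0$. Inserting the expression $y_j^+=\big(L^+(z_j)+e^{i\omega}\tilde L^+(z_j)+e^{-i\omega}\ol{\tilde L^+(z_j)}\big)/\hat L^+(z_j)$ (from (\ref{eq:y_laurent_polys})) into (\ref{eq:sampled_poly_eqs}) and substituting $\hat L(z_j)=c\hat L^+(z_j)$ makes $\hat L^+(z_j)$ cancel and leaves
\begin{align}
\nonumber L(z_j)+e^{i\omega}\tilde L(z_j)+e^{-i\omega}\ol{\tilde L(z_j)}=c\big(L^+(z_j)+e^{i\omega}\tilde L^+(z_j)+e^{-i\omega}\ol{\tilde L^+(z_j)}\big).
\end{align}
The crucial point is that $\hat L$ has vanished, so the two sides differ by a Laurent polynomial supported only on the powers $z^{-(s-1)},\dots,z^{s-1}$; by Proposition~\ref{prop:laurent_poly_equiv} this is $z^{-(s-1)}$ times an ordinary polynomial of degree at most $2s-2$, and it vanishes at the $2s-1$ distinct points of the block, hence identically. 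The same computation on the block $6s-2\le j\le 8s-4$, where $z_j^n=e^{i\phi}$, gives the identity with $\omega$ replaced by $\phi$.

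It then remains to do linear algebra. Put $P(z):=L(z)-cL^+(z)$ and $R(z):=\tilde L(z)-c\tilde L^+(z)$; since $c$ is real, $\ol{R(z)}=\ol{\tilde L(z)}-c\ol{\tilde L^+(z)}$, and the three identities above say $P+e^{i\alpha}R+e^{-i\alpha}\ol R\equiv 0$ for $\alpha\in\{\gamma,\omega,\phi\}$. Reading off the coefficient of a fixed power $z^k$ gives, for every $k$, a homogeneous $3\times 3$ system in the three scalars (the $z^k$-coefficients of $P$, $R$, $\ol R$) with coefficient rows $(1,e^{i\alpha},e^{-i\alpha})$; its determinant equals $(e^{i\omega}-e^{i\gamma})(e^{-i\phi}-e^{-i\gamma})-(e^{-i\omega}-e^{-i\gamma})(e^{i\phi}-e^{i\gamma})$, which is precisely the quantity kept nonzero by the hypothesis on $\gamma,\omega,\phi$ (and which in particular forces $e^{i\gamma},e^{i\omega},e^{i\phi}$ distinct). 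Hence $P\equiv R\equiv 0$, i.e.\ $L=cL^+$ and $\tilde L=c\tilde L^+$, which together with $\hat L=c\hat L^+$ is the assertion; the equivalence with the linear system (\ref{eq:matrix_eqs_phaseless}) having a one-dimensional null space is exactly as recorded after (\ref{eq:w+}).

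The step I expect to be the real obstacle is the middle one: I must be certain that after substituting $\hat L=c\hat L^+$ the residual relation on each extra block is genuinely a Laurent polynomial of the \emph{narrow} span $[-(s-1),s-1]$ (so that $2s-1$ evaluation points suffice for the identity theorem), rather than inheriting the degree-$s$ tails of $\hat L$. One must also track the sample count $(4s-1)+(2s-1)+(2s-1)=8s-3$, check that $n\ge 4s-1$ leaves room for $2s-1$ distinct $n$-th roots of each of $e^{i\gamma},e^{i\omega},e^{i\phi}$, and verify that the hypothesis as written coincides with the non-vanishing of the $3\times3$ determinant above. The case in which only $S<s$ of the $g_k^+$ are nonzero falls outside these hypotheses (which require $g_k^+\neq 0$) and is handled separately by the $s\to S$ reduction already described for the DFT matrix.
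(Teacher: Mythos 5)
Your proof is correct and follows essentially the same route as the paper's: apply Theorem \ref{thm:dft_msrmt_support} to the $\gamma$-block, observe that after substituting $\hat L(z)=c\hat L^+(z)$ each constant-$z^n$ block produces a Laurent identity of span $[-(s-1),s-1]$ that is pinned down by its $2s-1$ samples, and then eliminate among the three resulting identities (the paper subtracts the $\gamma$-identity pointwise before invoking the identity theorem, you invoke the identity theorem first and then solve a $3\times 3$ system per coefficient --- an immaterial reordering). One remark on the step you flagged: your determinant $(e^{i\omega}-e^{i\gamma})(e^{-i\phi}-e^{-i\gamma})-(e^{-i\omega}-e^{-i\gamma})(e^{i\phi}-e^{i\gamma})$ is the correct non-degeneracy condition, and the hypothesis as printed in the theorem (and the scalar in the paper's (\ref{eq:laurent_poly_id_subtract})) appears to have dropped the factor $(e^{i\omega}-e^{i\gamma})$, so the mismatch you noticed is a typo in the paper rather than a gap in your argument.
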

\begin{proof}
After the $4s - 1$ measurements $z_j$, $0 \leq j \leq 4s - 2$, that satisfy $z_j^n = e^{i\gamma}$ we conclude from Theorem \ref{thm:dft_msrmt_support} that $\hat{L}(z) = c\hat{L}^+(z)$. Inserting this relation into (\ref{eq:laurent_poly_rel}) and dividing out $\hat{L}^+(z)$, we get
\begin{align}
\label{eq:laurent_poly_dft_rel}
cL^+(z) - L(z) = e^{i\gamma}(\tilde{L}(z) - c\tilde{L}^+(z)) + e^{-i\gamma}(\ol{\tilde{L}(z)} - c\ol{\tilde{L}^+(z)}).
\end{align}
Further using $\hat{L}(z) = c\hat{L}^+(z)$ in (\ref{eq:laurent_poly_sys}) and dividing out $\hat{L}^+(z_j)$ gives that
\begin{align}
\label{eq:laurent_poly_sys_rel}
cL^+(z_j) - L(z_j) = z_j^n(\tilde{L}(z_j) - c\tilde{L}^+(z_j)) + z_j^{-n}(\ol{\tilde{L}(z_j)} - c\ol{\tilde{L}^+(z_j)})
\end{align}
holds for all $z_j$ for which $\hat{L}^+(z_j) \neq 0$.
Subtracting (\ref{eq:laurent_poly_dft_rel}), evaluated at $z = z_j$, from (\ref{eq:laurent_poly_sys_rel}) we get 
\begin{align}
(z_j^n - e^{i\gamma})(\tilde{L}(z_j) - c\tilde{L}^+(z_j)) + (z_j^{-n} - e^{-i\gamma})(\ol{\tilde{L}(z_j)} - c\ol{\tilde{L}^+(z_j)}) &= 0. \label{eq:lauren_poly_id_1}
\end{align}
Proposition \ref{prop:laurent_poly_equiv} shows that (\ref{eq:lauren_poly_id_1}) is equivalent to a polynomial equation in $z_j$ of degree $2s-2$. Hence, with the $2s - 1$ samples $z_j$, $4s-1 \leq j \leq 6s-3$, for which $z_j^n = e^{i\omega}$, (\ref{eq:lauren_poly_id_1}) becomes
\begin{align}
(e^{i\omega} - e^{i\gamma})(\tilde{L}(z_j) - c\tilde{L}^+(z_j)) + (e^{-i\omega} - e^{-i\gamma})(\ol{\tilde{L}(z_j)} - c\ol{\tilde{L}^+(z_j)}) &= 0, \quad 4s-1\leq j \leq 6s-3. \label{eq:lauren_poly_id_2}
\end{align}
Applying the identity theorem for polynomials to the equations in (\ref{eq:lauren_poly_id_2}) implies that
\begin{align}
(e^{i\omega} - e^{i\gamma})(\tilde{L}(z) - c\tilde{L}^+(z)) + (e^{-i\omega} - e^{-i\gamma})(\ol{\tilde{L}(z)} - c\ol{\tilde{L}^+(z)}) &= 0, \quad \forall z \label{eq:lauren_poly_id_all_z_1} 
\end{align}
Using the $2s - 1$ samples $z_j = e^{i2\pi j/n}e^{i\phi}$, $6s-2 \leq j \leq 8s-4$, in (\ref{eq:lauren_poly_id_1}), we obtain
\begin{align}
(e^{i\phi} - e^{i\gamma})(\tilde{L}(z_j) - c\tilde{L}^+(z_j)) + (e^{-i\phi} - e^{-i\gamma})(\ol{\tilde{L}(z_j)} - c\ol{\tilde{L}^+(z_j)}) &= 0, \quad 6s-2\leq j \leq 8s-4. \label{eq:lauren_poly_id_3}
\end{align}
Applying again the identity theorem for polynomials, the equations in (\ref{eq:lauren_poly_id_3}) imply that 
\begin{align}
(e^{i\phi} - e^{i\gamma})(\tilde{L}(z) - c\tilde{L}^+(z)) + (e^{-i\phi} - e^{-i\gamma})(\ol{\tilde{L}(z)} - c\ol{\tilde{L}^+(z)}) &= 0, \quad \forall z. \label{eq:lauren_poly_id_all_z_2} 
\end{align}
Multiplying (\ref{eq:lauren_poly_id_all_z_1}) with $\frac{(e^{-i\phi} - e^{-i\gamma})}{(e^{-i\omega} - e^{-i\gamma})}$, which is a finite number due to the assumptions on $\phi, \omega, \gamma$,  and subtracting it from (\ref{eq:lauren_poly_id_all_z_2}) we get
\begin{align}
\label{eq:laurent_poly_id_subtract}
\left((e^{i\phi} - e^{i\gamma}) - \frac{(e^{-i\phi} - e^{-i\gamma})}{(e^{-i\omega} - e^{-i\gamma})}\right)(\tilde{L}(z) - c\tilde{L}^+(z)) = 0, \quad \forall z.
\end{align}
The assumptions on $\phi, \omega, \gamma$ imply that the scalar in front of $\tilde{L}(z) - c\tilde{L}^+(z)$ in (\ref{eq:laurent_poly_id_subtract}) is non-zero. Thus, the identity theorem for polynomials implies that $\tilde{L}(z) = c\tilde{L}^+(z)$ must hold in (\ref{eq:laurent_poly_id_subtract}). Inserting this relation into (\ref{eq:laurent_poly_dft_rel}) we also obtain $L(z) = cL^+(z)$, finishing the proof.
\end{proof}
Clearly, Theorem \ref{thm:dft_msrmt_unique} holds even if $m \geq 8s - 3$ as long as a subset of $8s - 3$ measurements satisfies the assumptions of  Theorem \ref{thm:dft_msrmt_unique}. For any $\bm z$ consisting of shifted harmonics, with the different shifts specified in Theorem \ref{thm:dft_msrmt_unique}, almost $\bm \theta^+ \in \mathbb{T}^s$ and $\bm g^+$ satisfy the assumptions of Theorem \ref{thm:dft_msrmt_unique}. Similarly, for any given $\bm \theta^+ \in \mathbb{T}^s$ and $\bm g^+$ with no zero elements, almost all $\bm z$ consisting of shifted harmonics as specified in Theorem \ref{thm:dft_msrmt_unique} satisfy the assumptions of Theorem \ref{thm:dft_msrmt_unique}. Similar to Theorem \ref{thm:gen_samples_theta_g}, we can relax the assumptions on $\bm z$.
\begin{theorem}
\label{thm:gen_phaseless_samples_theta_g}
Assume that $n \geq 4s - 1$ and $m = 8s - 3$. Given any $\bm \theta^+ \in \mathbb{T}^s$ and any $\bm g^+ \in \mathbb{C}^s$ (with $s$ non-zero elements), solving (\ref{eq:sampled_poly_eqs}) (or equivalently (\ref{eq:matrix_eqs_phaseless})) results in $\hat{L}(z) = c\hat{L}^+(z), L(z) = cL^+(z), \tilde{L}(z) = c\tilde{L}^+(z)$ (for some real positive scalar $c$) for almost all $\bm z \in \mathbb{T}^m$. Similarly, given almost any $\bm z \in \mathbb{T}^m$, solving (\ref{eq:sampled_poly_eqs}) (or equivalently (\ref{eq:matrix_eqs_phaseless})) gives that $\hat{L}(z) = c\hat{L}^+(z), L(z) = cL^+(z), \tilde{L}(z) = c\tilde{L}^+(z)$ for almost all $\bm \theta^+ \in \mathbb{T}^s$ and $\bm g^+ \in \mathbb{C}^s$ (with $s$ non-zero elements).
\end{theorem}
\begin{proof}
Herein, we will denote $\bm G(\bm z)$ in (\ref{eq:matrix_eqs_phaseless}) as $\bm G(\bm z, \bm \theta^+, \bm g^+)$ to reflects its dependency on $\bm \theta^+$, $\bm g^+$ as well. A one-dimensional null space of $\bm G(\bm z, \bm \theta^+, \bm g^+)$ implies that there exists a $(8s-3)\times (8s - 3)$ sub-matrix in $\bm G(\bm z, \bm \theta^+, \bm g^+)$ of full rank $8s - 3$; this sub-matrix is obtained by removing a column from $\bm G(\bm z, \bm \theta^+, \bm g^+)$. Let $d_k(\bm z, \bm \theta^+, \bm g^+)$ denote the determinant of the $(8s-3)\times (8s-3)$ sub-matrix obtained after removing column $k$, $1 \leq k \leq 8s-2$, from $\bm G(\bm z, \bm \theta^+, \bm g^+)$. Hence, the statement that $\bm G(\bm z, \bm \theta^+, \bm g^+)$ has a null space of at least two dimensions is equivalent to $d_k(\bm z, \bm \theta^+, \bm g^+) = 0, 1 \leq k \leq 8s-2$. From (\ref{eq:y_laurent_polys}), it follows that $y_j^+$, and thus $d_k(\bm z, \bm \theta^+, \bm g^+)$ as well, is a ratio of two multivariate Laurent polynomials in the variables $\bm z, \bm \theta^+, \bm g^+$.

First, we show that $\bm G(\bm z, \bm \theta^+, \bm g^+)$ has a one-dimensional null space for almost all $\bm z \in \mathbb{T}^{8s-3}$ given a certain realization $\bm \theta^+ = \hat{\bm \theta}^+$ and $\bm g^+ = \hat{\bm g}^+$ (with $s$ non-zero elements). Theorem \ref{thm:dft_msrmt_unique} shows that for almost all $\bm z$ that contain shifted harmonics (with different shifts) as specified in the theorem, $\bm G(\bm z, \hat{\bm \theta}^+, \hat{\bm g}^+)$ has a one dimensional null space; let $\bm z = \hat{\bm z}$ be one such vector. Hence, $d_j(\hat{\bm z}, \hat{\bm \theta}^+, \hat{\bm g}^+) \neq 0$ for some $j$, $1 \leq j \leq 8s - 2$, implying that $d_j(\bm z, \hat{\bm \theta}^+, \hat{\bm g}^+)$ is not zero for all $\bm z \in \mathbb{T}^{8s-3}$. Consider the zero set $\mathcal{S}_j = \{\bm z | \bm z \in \mathbb{T}^{8s-3}, d_j(\bm z, \hat{\bm \theta}^+, \hat{\bm g}^+) = 0\}$. Since $\mathbb{T}^{8s-3}$ is homeomorphic to the $8s - 3$ dimensional cube $\mathbb{U}^{8s-3}$ (with the obvious homeomorphism $e^{i\pi\alpha} \in \mathbb{T}, \alpha \in \mathbb{U} = [-1, 1)$), $\mathcal{S}_j$ is either empty or a set of points in $\mathbb{T}^{8s-3}$ corresponding to points from a hypersurface of dimension $8s-4$ in the ambient space $\mathbb{U}^{8s-3}$. In either case, $\mathcal{S}_j$ has measure zero, implying that for almost all $\bm z \in \mathbb{T}^{8s-3}$,  $d_j(\bm z, \hat{\bm \theta}^+, \hat{\bm g}^+) \neq 0$ and thus $\bm G(\bm z, \hat{\bm \theta}^+, \hat{\bm g}^+)$ has a one-dimensional null space. Hence, for any given $\bm \theta^+$ and $\bm g^+$ (with $s$ non-zero elements), $\bm G(\bm z, \bm \theta^+, \bm g^+)$ has a one-dimensional null space for almost all $\bm z \in \mathbb{T}^{8s-3}$.

We now show that for almost any given $\bm z \in \mathbb{T}^{8s-3}$, $\bm G(\bm z, \bm \theta^+, \bm g^+)$ has a one-dimensional null space for almost all $\bm \theta^+ \in \mathbb{T}^s$ and $\bm g^+ \in \mathbb{C}^s$ (with $s$ non-zero elements). Since almost all $\bm z \in \mathbb{T}^{8s-3}$ do not belong to the set $\mathcal{S}_j$ above, choose any such element $\tilde{\bm z} \not \in \mathcal{S}_j$. Hence, $d_j(\tilde{\bm z}, \hat{\bm \theta}^+, \hat{\bm g}^+) \neq 0$, implying that $d_j(\tilde{\bm z}, \bm \theta^+, \bm g^+)$ is not zero for all $\bm \theta^+ \in \mathbb{T}^s$ and $\bm g^+ \in \mathbb{C}^s$. Consider the set $\mathcal{T}_j = \{\bm \theta^+, \bm g^+ | \bm \theta^+ \in \mathbb{T}^{s}, \bm g^+ \in \mathbb{C}^{s}, d_j(\tilde{\bm z}, \bm \theta^+, \bm g^+) = 0\}$. This is a subset of the set $\mathbb{T}^{8s-3}\times \mathbb{C}^s$, where $\times$ here denotes the Cartesian product between sets. From above, it follows that $\mathbb{T}^{8s-3}\times \mathbb{C}^s$ is homeomorphic to the set $\mathbb{U}^{8s-3}\times \mathbb{R}^{2s}$ of (real) dimension $8s - 3 + 2s = 10s - 3$ (the complex numbers $\mathbb{C}$ are represented by $\mathbb{R}^2 = \mathbb{R} \times \mathbb{R}$). Hence, $\mathcal{T}_j$ is either empty or a set of points in $\mathbb{T}^{8s-3}\times \mathbb{C}^s$ corresponding to points from a hypersurface of dimension $8s - 3 + 2(s-1) = 10s - 5$ in the ambient space $\mathbb{U}^{8s-3}\times \mathbb{R}^{2s}$. In both cases, it has measure zero in $\mathbb{U}^{8s-3}\times \mathbb{R}^{2s}$, implying that $\bm G(\tilde{\bm z}, \bm \theta^+, \bm g^+)$ has a one-dimensional null space for almost all $\bm \theta^+, \bm g^+$. 
\end{proof}
Theorem \ref{thm:gen_phaseless_samples_theta_g} is true for almost all $\bm z \in \mathbb{T}^m$ with $m \geq 8s - 3$ as well, since almost all subsets of $8s - 3$ samples from $\bm z$ ($8s-3$ rows from $\bm G(\bm z)$) satisfy Theorem \ref{thm:gen_phaseless_samples_theta_g}.

Similar to Corollary \ref{corr:unique_g_abs_4s} we have
\begin{corollary}
\label{corr:unique_g_abs_8s}
Given the realizations $\bm \theta^+, \bm g^+$ of $\bm \theta, \bm g$, respectively. For almost all vectors $\bm z \in \mathbb{T}^m$ with $m \geq 8s - 3$, $\bm \theta^+$ and $|\bm g^+|$ can be found, up to a real positive scalar $c$, with $O(\mathrm{poly}(s))$ complexity.
\end{corollary}
\begin{proof}
Theorem \ref{thm:gen_phaseless_samples_theta_g} shows that for almost all $\bm z$, one obtains $\bm \theta^+$ and $L(z) = cL^+(z)$. 
Using $(\theta_k^+)^{-1} = \ol{\theta_k^+}$ and $$L(\ol{\theta_k^+}) = cL^+(\ol{\theta_k^+}) = 2c|g_k^+|^2\left|t_{k}^+\left(\ol{\theta_k^+}\right)\right|^2$$ it follows that $$c|g_k^+|^2 = \frac{L(\ol{\theta_k^+})}{2\left|t_{k}^+\left(\ol{\theta_k^+}\right)\right|^2}.$$ Since the operations leading to $c|\bm g^+|$ are of $O(\mathrm{poly}(s))$ complexity, it follows that $\sqrt{c}|\bm g^+|$ can be found with $O(\mathrm{poly}(s))$ complexity.
\end{proof}
Next, we show that given $L(z), \tilde{L}(z)$ in (\ref{eq:laurent_polys}), we can solve for $|\tilde{u}(z)|^2, |\hat{u}(z)|^2$.
\begin{lemma}
Given $L(z), \tilde{L}(z)$ in (\ref{eq:laurent_polys}), we have that 
\begin{align*}
|\hat{u}(z)|^2 = \frac{L(z) \pm \sqrt{L^2(z) - 4K(z)}}{2} \\
|\tilde{u}(z)|^2 = \frac{L(z) \mp \sqrt{L^2(z) - 4K(z)}}{2}
\end{align*}
where $K(z) = |\tilde{L}(z)|^2$.
\end{lemma}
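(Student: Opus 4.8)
The plan is to notice that the two quantities we want to isolate, $|\hat{u}(z)|^2$ and $|\tilde{u}(z)|^2$, are determined by their sum and their product, and that these are precisely $L(z)$ and $K(z)$. So the lemma is nothing more than solving a quadratic whose coefficients we already have in hand.

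Concretely, fix $z$ and write $a = |\hat{u}(z)|^2$ and $b = |\tilde{u}(z)|^2$. From the definition of $L(z)$ in (\ref{eq:laurent_polys}) we have $a + b = L(z)$. For the product, I would use $\tilde{L}(z) = \hat{u}(z)\ol{\tilde{u}(z)}$ to get
\begin{align*}
K(z) = |\tilde{L}(z)|^2 = \big|\hat{u}(z)\ol{\tilde{u}(z)}\big|^2 = |\hat{u}(z)|^2\,|\tilde{u}(z)|^2 = ab .
\end{align*}
Hence $a$ and $b$ are the two roots of $\xi^2 - L(z)\xi + K(z) = 0$, i.e.
\begin{align*}
\xi = \frac{L(z) \pm \sqrt{L^2(z) - 4K(z)}}{2},
\end{align*}
and since $a+b = L(z)$ the value obtained with the ``$+$'' sign together with the value obtained with the ``$-$'' sign are exactly $a$ and $b$ in some order. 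Assigning one to $|\hat{u}(z)|^2$ and the other to $|\tilde{u}(z)|^2$ gives the stated pair of formulas, with the sign in the expression for $|\tilde{u}(z)|^2$ forced to be the opposite of the one in the expression for $|\hat{u}(z)|^2$ (this is the $\pm/\mp$ in the statement).

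There is no real obstacle here; the only point worth a remark is the interpretation of the square root and of the sign ambiguity. For $z \in \mathbb{T}$ the discriminant is $L^2(z) - 4K(z) = (a+b)^2 - 4ab = (a-b)^2 = \big(|\hat{u}(z)|^2 - |\tilde{u}(z)|^2\big)^2 \ge 0$, so $\sqrt{L^2(z) - 4K(z)}$ is the nonnegative quantity $\big||\hat{u}(z)|^2 - |\tilde{u}(z)|^2\big|$; indeed, since $|\hat{u}(z)|^2$, $|\tilde{u}(z)|^2$ and $K(z)$ are all Laurent polynomials on $\mathbb{T}$, $L^2(z) - 4K(z)$ is the square of a Laurent polynomial. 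Because which of $|\hat{u}(z)|^2$, $|\tilde{u}(z)|^2$ is the larger may change as $z$ varies over $\mathbb{T}$, the sign choice is pointwise and cannot in general be fixed globally, which is precisely why the lemma is phrased with $\pm/\mp$ rather than a single branch. Resolving this global sign ambiguity is deferred to the subsequent development; the lemma itself requires nothing beyond the quadratic identity above.
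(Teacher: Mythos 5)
Your proof is correct and follows essentially the same route as the paper: both identify $L(z)$ as the sum and $K(z)=|\tilde{L}(z)|^2$ as the product of $|\hat{u}(z)|^2$ and $|\tilde{u}(z)|^2$, and then solve the resulting quadratic (the paper substitutes $|\tilde{u}(z)|^2 = L(z)-|\hat{u}(z)|^2$ into the product rather than invoking Vieta's formulas, but the computation is identical). Your added remark that the discriminant equals $\bigl(|\hat{u}(z)|^2-|\tilde{u}(z)|^2\bigr)^2$ and that the sign choice is inherently pointwise is a correct and useful observation not present in the paper's proof.
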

\begin{proof}
$L(z) = |\tilde{u}(z)|^2 + |\hat{u}(z)|^2$ implies $|\tilde{u}(z)|^2 = L(z) - |\hat{u}(z)|^2$. Furthermore, $\tilde{L}(z) = \hat{u}(z)\ol{\tilde{u}(z)}$ gives $K(z) = |\tilde{L}(z)|^2 = |\hat{u}(z)|^2|\tilde{u}(z)|^2$. Hence, $K(z) = (L(z) - |\hat{u}(z)|^2)|\hat{u}(z)|^2 = L(z)|\hat{u}(z)|^2 - |\hat{u}(z)|^4$, which is a quadratic equation in $|\hat{u}(z)|^2$. Solving the quadratic, we obtain 
\begin{align}
\label{eq:u_hat_sq_sol}
|\hat{u}(z)|^2 = \frac{L(z) \pm \sqrt{L^2(z) - 4K(z)}}{2}
\end{align}
which together with $L(z) = |\tilde{u}(z)|^2 + |\hat{u}(z)|^2$ gives
\begin{align}
\label{eq:u_tilde_sq_sol}
|\tilde{u}(z)|^2 = \frac{L(z) \mp \sqrt{L^2(z) - 4K(z)}}{2}.
\end{align}
\end{proof}
We have the option to choose sign in (\ref{eq:u_tilde_sq_sol}), resulting in the opposite sign in (\ref{eq:u_hat_sq_sol}). First, assume that $-$ is chosen in (\ref{eq:u_tilde_sq_sol}). Denote the roots of the obtained $|\tilde{u}^+(z)|^2$ (which are distinct according to Lemma \ref{lemma:u_tilde_distinct_roots} and appear as conjugate pairs) as $(r_0,1/\ol{r_0}), \ldots, (r_{s-2}, 1/\ol{r_{s-2}})$. Since $\tilde{L}^+(z) = \hat{u}^+(z)\ol{\tilde{u}^+(z)}$ and due to Theorem \ref{thm:u_hat_tilde_roots_general}, $s-1$ roots of $|\tilde{u}^+(z)|^2$ appear as roots of $\tilde{L}^+(z)$ for almost all $\bm g^+$. Assume WLOG that the roots $r_0,\ldots,r_{s-2}$ appear as roots of $\tilde{L}^+(z)$. Hence, we construct $\tilde{u}^+(z)$ (up to a complex scalar) as $\hat{u}^+(z) = 
\prod_{j=0}^{s-2}(z-r_j)$ (the other $s-1$ roots of $\tilde{L}^+(z)$ produce $\hat{u}^+(z)$). Solving the equations
\begin{align}
\label{eq:u_pol_eqs}
\sum_{k=0}^{s-1}g_kt_{k}^+(r_j) = 0, \quad 0 \leq j \leq s-2
\end{align} 
for $\bm g$ gives a unique $\bm g^m$ up to a complex scalar. Instead, choosing $+$ as sign in (\ref{eq:u_hat_sq_sol}) produces a different $|\tilde{u}^+(z)|^2$ and a different solution $\bm g^p$ (up to a complex scalar) to the equations (\ref{eq:u_pol_eqs}). Since these are the only possible values for $\bm g$, one of them equals $\bm g^+$ (up to a complex scalar). If no prior information about $\bm g^+$ is at hand, then as before, in order to resolve the ambiguity, we take an additional measurement $y_m$ with a random measurement vector. Interestingly, Theorem \ref{thm:dual_sol} shows that the two possible solutions $\bm g^m$, $\bm g^p$ are duals and according to Corollary \ref{corr:dual_sol}  the roots of the Z-transforms of the resulting solutions $\bm x^m, \bm x^p$ are reflections of each other with respect to $\mathbb{T}$.

An important special case is when $(L^+(z))^2 = 4K^+(z)$, which implies that $|\hat{u}^+(z)|^2 = |\tilde{u}^+(z)|^2 = L^+(z)/2$. Theorem \ref{thm:u_hat_tilde_roots_general} shows that for his to occur with a non-zero probability, $(\theta_k^+)^n$, $0 \leq k \leq s-1$, must all be equal. Since $\bm \theta^+ \in \mathbb{T}^s$, this means that $\theta_k^+, 0 \leq k \leq s-1$, are shifted harmonics and thus $\bm x^+$ is $s$-sparse in a DFT basis. In this case, the situation is the same as in \textbf{R5-Alg-Harmonic} since there are $2^{s-1}$ different ways to choose the roots for $\tilde{u}^+(z)$ (each one of them producing a solution to the equations after normalization); thus, the recovery complexity suddenly becomes exponential in $s$. Hence, the implication of this observation is that when $\bm x^+$ is $s$-sparse in a DFT basis, the recovery complexity of $\bm x^+$ is $O(\mathrm{poly}(s)2^s)$ (as mentioned in 2.b in Section \ref{sec:intro}). Nevertheless, due to Theorem \ref{thm:u_hat_tilde_roots_general}, as soon as not all $(\theta_j^+)^n$, $0 \leq j \leq s-1$, are equal (i.e., at least one frequency is not a shifted harmonic), $(L^+(z))^2 \neq 4K^+(z)$ with probability 1 since then the roots of $\tilde{u}^+(z)$ and $\hat{u}^+(z)$ are different for almost all $\bm g^+$.

If $S$ of the $g_0^+,\ldots,g_{s-1}^+$ are non-zero, it is readily seen that $\bm w^+$ in (\ref{eq:w+}) belongs to the null space of $\bm G(\bm z)$ for $s \geq S$; thus the same iterative procedure as in step 2 of \textbf{R5-Alg-Harmonic} can be used to find $S$.
Hence, we have the following algorithm that recovers $\bm g^+, \bm \theta^+$
\par\noindent\rule{\textwidth}{0.4pt}
\textbf{R5-Alg-General:}
\begin{enumerate}[label=\arabic*.]
\item[] Input: $s, n \geq 4s-1, \bm y^+$, a random vector $\bm a \in \mathbb{C}^n$ and arbitrary $\bm z \in \mathbb{T}^m$ where $m \geq 8s-3$.
\item[] Output: $\bm \theta^+, e^{i\alpha}\bm g^+$ for some unknown $\alpha$
\item Construct the $m\times (8s-2)$ matrix $\bm G(\bm z)$ in (\ref{eq:matrix_eqs_phaseless}).
\item Keep reducing the value of $s$ (and thus the dimension of $\bm G(\bm z)$) by one until $\bm G(\bm z)$ has a one-dimensional null space (assume $s = S$ when this occurs for the first time). Let $\bm w$ denote a vector from its one-dimensional null space.
\item Construct the Laurent polynomials $\hat{L}(z) = \sum_{k=0}^{2S}w_kz^{k-S}, \tilde{L}(z) = \sum_{k=0}^{2S-2}w_{k+2S+1}z^{k-S+1}, L(z) = \sum_{k=0}^{2S-2}w_{k+4S}z^{k-S+1}$, respectively. The conjugate of the unique roots of $\hat{L}(z)$ equal $\bm \theta^+$. $|\bm g^+|$ can now be computed (up to a global real positive scalar) from Corollary \ref{corr:unique_g_abs_8s}.
\item {Let $K(z) = |\tilde{L}(z)|^2$ and $$Q(z) = \frac{L(z) + \sqrt{L^2(z) - 4K(z)}}{2}.$$ We have two cases
\begin{enumerate}[label=\alph*.]
\item {$L^2(z) \neq 4K(z)$ (corresponding to $(\theta_j^+)^n$, $0 \leq j \leq S-1$, not all being equal):
\begin{itemize}
\item Denote by $q_0,\ldots,q_{S-2}$ the $S-1$ roots of $Q(z)$ that appear as roots of $\tilde{L}(z)$ and solve for $\bm g$ the $S-1$ polynomial equations $$\sum_{k=0}^{S-1}g_kt_{k}(q_j) = 0, \quad 0 \leq j \leq S-2.$$ Denote by $\bm g^a$ the obtained solution which is unique up to a complex scalar and construct its dual $\bm g^b$ as shown in Theorem \ref{thm:dual_sol}. Let $\mathcal{G} = \{\bm g^a, \bm g^b\}$. Normalize each $\bm g \in \mathcal{G}$ so that $\bm y^+ = |\bm V^T(\bm z)\bm V(\bm \theta^+)\bm g|^2$.
\end{itemize}
}

\item {$L^2(z) = 4K(z)$ (corresponding to $(\theta_j^+)^n$, $0 \leq j \leq S-1$, all being equal):
\begin{itemize}
\item In this case, $L(z)/2 = |\hat{u}(z)|^2 = |\tilde{u}(z)|^2$. The roots of $L(z)/2$ can be paired as $(l_0, 1/\ol{l_0}),\ldots,(l_{S-2},1/\ol{l_{S-2}})$. From each pair, choose one element. This choice can be done in $2^{S-1}$ different ways and let $\bm q^k$, $0 \leq k \leq 2^{S-1}-1$, where $q_j^k = l_j$ or $q_j^k = 1/\ol{l_j}$ for $0 \leq j \leq S-2$, denote the $k$:th choice.
\item For each $\bm q^k$, $0 \leq k \leq 2^{S-1}-1$, solve for $\bm g$ the $S-1$ polynomial equations $$\sum_{n=0}^{S-1}g_nt_{n}(q_j^k) = 0, \quad 0 \leq j \leq S-2.$$ Denote by $\bm g^k$ the obtained solution which is unique up to a complex scalar. Let $\mathcal{G} = \{\bm g^0,\ldots,\bm g^{2^{S-1}-1}\}$. Normalize each $\bm g \in \mathcal{G}$ so that $\bm y^+ = |\bm V^T(\bm z)\bm V(\bm \theta^+)\bm g|^2$.
\end{itemize}
}
\end{enumerate}
}
\item Let $\bm g^+ = \arg\min_{\bm g \in \mathcal{G}}|y_m - |\bm a^T\bm V(\bm \theta^+)\bm g|^2|$. Output $\bm g^+$ and $\bm \theta^+$.
\end{enumerate}
\par\noindent\rule{\textwidth}{0.4pt}
\subsection{Derivation of \textbf{R3}}
From the derivation of \textbf{R4} and \textbf{R5}, we can easily construct an $\bm A$ and a corresponding recovery algorithm that produces \textbf{R3}.
Choose $\bm z \in \mathbb{T}^m$ and $\bm \theta \in \mathbb{T}^n$ such that $\bm z$ and any subset of $s$ samples from $\bm \theta$ satisfy the assumptions of Theorem \ref{thm:dft_msrmt_support} or Theorem \ref{thm:dft_msrmt_unique}; it is clear from the assumptions of these theorems that such $\bm z$ and $\bm \theta$ are easily found. Let $\bm a \in \mathbb{C}^n$ be a random vector. Construct $\bm A$ in \textbf{R3} as $\bm A = [\bm V(\bm z)^T \bm V(\bm \theta); \bm a^T]$. Denote by $p_0,\ldots,p_{s-1}$ the unknown positions of the $s$ non-zero elements in $\bm x$ and let $\bm g = [g_0,\ldots,g_{s-1}]$ denote these elements. Hence, $\bm V(\bm z)^T \bm V(\bm \theta)\bm x = \bm V(\bm z)^T \bm V(\bm \theta_{\bm p})\bm g$, where $\bm \theta_{\bm p} = [\theta_{p_0},\ldots,\theta_{p_{s-1}}]$. Clearly, $\bm\theta_{\bm p}$ can be seen as the unknown $\bm \theta$ in \textbf{R4}, meaning that the first $m$ measurements in \textbf{R3} correspond to \textbf{R4}. From these measurements, one obtains $\bm\theta_{\bm p}$ and the different $\bm g$ satisfying (\ref{eq:phaseless_msrmt_fourier}). The additional measurement with $\bm a$ enables application of \textbf{R5-Alg-Harmonic} or \textbf{R5-Alg-General} which finally produce the results in \textbf{R3}. Thus, a recovery algorithm for \textbf{R3} is
\par\noindent\rule{\textwidth}{0.4pt}
\textbf{R3-Alg:}
\begin{enumerate}[label=\arabic*.]
\item[] \textit{Input}: $s, n \geq 4s-1, \bm y^+$, a random vector $\bm a \in \mathbb{C}^n$ and $\bm A = [\bm V(\bm z)^T \bm V(\bm \theta); \bm a^T]$. Sample vector $\bm z \in \mathbb{T}^m$, where $m \geq 4s-1$, and $\bm \theta \in \mathbb{T}^n$. 
\item[] \textit{Assumptions}: If $4s - 1 \leq m < \min\{n+1,8s-3\}$, then all samples in $\bm z$ are shifted harmonics and $\bm \theta$ is chosen such that $\bm z$ and any subset of $s$ samples from $\bm \theta$ satisfy the assumptions of Theorem \ref{thm:dft_msrmt_support}.\\ If $m \geq 8s - 3$, $\bm z$ and $\bm \theta$ are chosen such that $\bm z$  and any subset of $s$ samples from $\bm \theta$ satisfy the assumptions of Theorem \ref{thm:dft_msrmt_unique}.
\item[] \textit{Output}: $\bm x$ (up to a global phase)
\item If $4s - 1 \leq m \leq \min\{n+1,8s-3\}$, run algorithm \textbf{R5-Alg-Harmonic} until step 3 from which we obtain $\bm \theta^+$. We have that $$p_j = \arg \min_k |\theta_j^+ - \theta_k|, \quad 0\leq j \leq S-1$$ are the positions of the $S$ non-zero elements in $\bm x$. Let $\bm g = [x_{p_0},\ldots,x_{p_{S-1}}]^T$ be the vector of the unknown non-zero elements from $\bm x$. Continue running \textbf{R5-Alg-Harmonic} from step 3 to obtain $\bm g$ as $\bm g = \bm g^+$ in the end.
\item If $m \geq 8s-3$, run algorithm \textbf{R5-Alg-General} until step 3 from which we obtain $\bm \theta^+$. We have that $$p_j = \arg \min_k |\theta_j^+ - \theta_k|, \quad 0\leq j \leq S-1$$ are the positions of the $S$ non-zero elements in $\bm x$. Let $\bm g = [x_{p_0},\ldots,x_{p_{S-1}}]^T$ be the vector of the unknown non-zero elements from $\bm x$. Continue running \textbf{R5-Alg-General} from step 3 to obtain $\bm g$ as $\bm g = \bm g^+$ in the end.
\end{enumerate}
\par\noindent\rule{\textwidth}{0.4pt}

\section{Acknowledgements}
I thank Dr. Zhibin Yu for proofreading the work.

\end{document}